\title{Number-Theoretic Characterizations of Some Restricted
  Clifford+T Circuits}
\author{Matthew Amy}
\affiliation{Department of Mathematics and Statistics, Dalhousie
  University, Halifax, NS, Canada}
\author{Andrew N.\ Glaudell}
\affiliation{Institute for Advanced Computer Studies and Joint
  Center for Quantum Information and Computer Science, \\
  University of Maryland, College Park, MD, USA}
\affiliation{Joint Quantum Institute, University of Maryland,
  College Park, MD, USA}
\author{Neil J.\ Ross}
\affiliation{Department of Mathematics and Statistics, Dalhousie
  University, Halifax, NS, Canada}
\begin{document}

\maketitle

\begin{abstract}
  Kliuchnikov, Maslov, and Mosca proved in 2012 that a $2\times 2$
  unitary matrix $V$ can be exactly represented by a single-qubit
  Clifford+$T$ circuit if and only if the entries of $V$ belong to the
  ring $\Z[1/\sqrt{2},i]$. Later that year, Giles and Selinger showed
  that the same restriction applies to matrices that can be exactly
  represented by a multi-qubit Clifford+$T$ circuit. These
  number-theoretic characterizations shed new light upon the structure
  of Clifford+$T$ circuits and led to remarkable developments in the
  field of quantum compiling. In the present paper, we provide
  number-theoretic characterizations for certain restricted
  Clifford+$T$ circuits by considering unitary matrices over subrings
  of $\Z[1/\sqrt{2},i]$. We focus on the subrings $\Z[1/2]$,
  $\Z[1/\sqrt{2}]$, $\Z[1/i\sqrt{2}]$, and $\Z[1/2,i]$, and we
  prove that unitary matrices with entries in these rings correspond
  to circuits over well-known universal gate sets. In each case, the
  desired gate set is obtained by extending the set of classical
  reversible gates $\s{X, CX, CCX}$ with an analogue of the Hadamard
  gate and an optional phase gate.
\end{abstract}

\section{Introduction}
\label{sec:intro}

Kliuchnikov, Maslov, and Mosca showed in \cite{KMM-exact} that a
2-dimensional unitary matrix $V$ can be exactly represented by a
single-qubit Clifford+$T$ circuit if and only if the entries of $V$
belong to the ring $\Z[1/\sqrt{2},i]$. This result gives a
number-theoretic characterization of single-qubit Clifford+$T$
circuits. In \cite{GS13}, Giles and Selinger extended the
characterization of Kliuchnikov et al. to multi-qubit Clifford+$T$
circuits by proving that a $2^n$-dimensional unitary matrix can be
exactly represented by an $n$-qubit Clifford+$T$ circuit if and only
if its entries belong to $\Z[1/\sqrt{2},i]$.

These number-theoretic characterizations provide great insight into
the structure of Clifford+$T$ circuits. As a result, single-qubit
Clifford+$T$ circuits are now very well understood
\cite{fallback,ma-remarks,kmm-approx,MA08,RS16}, 
and some of these results have even been extended to single-qubit 
circuits beyond the Clifford+$T$ gate set
\cite{bgs13, fgkm15, kbs14, ky15, ps18, r15}. In contrast, our
understanding of multi-qubit Clifford+$T$ circuits remains more
limited, despite interesting results
\cite{BS2015,devos,GKMR,Gr2014,BSW2016}. One of the reasons for this
limitation is that large unitary matrices over $\Z[1/\sqrt{2},i]$ are
hard to analyze. In order to circumvent the difficulties associated
with multi-qubit Clifford+$T$ circuits, restricted gate sets have been
considered in the literature. This led to important developments in
the study of multi-qubit Clifford, CNOT+$T$, and CNOT-dihedral
circuits \cite{ACR17,AMM,AMMR,AM,CH2018,MSdM2018,Sel}. Unfortunately,
the simpler structure of these restricted gate sets comes at a cost:
they are not universal for quantum computing.

In the present paper, our goal is to address both of these limitations
by considering restrictions of the Clifford+$T$ gate set which are
nevertheless universal for quantum computing. To this end, we study
circuits corresponding to unitary matrices over proper subrings of
$\Z[1/\sqrt{2},i]$, focusing on $\Z[1/2]$, $\Z[1/\sqrt{2}]$,
$\Z[1/i\sqrt{2}]$, and $\Z[1/2,i]$. For each subring, we find a set of
quantum gates $G$ with the property that circuits over $G$ correspond
to unitary matrices over the given ring. Writing $U_{2^n}(R)$ for the
group of $2^n \times 2^n$ unitary matrices over a ring $R$, our main
results can then be summarized in the following theorem.

\

\begin{theorem*}
  A $2^n\times 2^n$ unitary matrix $V$ can be exactly represented by
  an $n$-qubit circuit over
  \begin{enumerate}[label=(\roman*)]
    \item $\s{X, CX, CCX, H\otimes H}$ if and only if $V\in
      U_{2^n}(\Z[1/2])$,
    \item $\s{X, CX, CCX, H, CH}$ if and only if $V\in
      U_{2^n}(\Z[1/\sqrt{2}])$,
    \item $\s{X, CX, CCX, F}$ if and only if $V\in
      U_{2^n}(\Z[1/i\sqrt{2}])$, and
    \item $\s{X, CX, CCX, \omega H, S}$ if and only if $V\in
      U_{2^n}(\Z[1/2,i])$,
  \end{enumerate}
  where $\omega=e^{i\pi/4}$ and $F\propto\sqrt{H}$. Moreover, in
  (i)-(iv), a single ancilla is sufficient.
\end{theorem*}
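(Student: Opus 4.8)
The plan is to prove each biconditional in two directions: \emph{soundness} (every circuit over the gate set yields a matrix over the corresponding ring) and \emph{completeness}, i.e.\ exact synthesis (every unitary over the ring is realized by such a circuit, using a single ancilla). Soundness is routine. One checks directly that each generator is unitary with entries in the stated ring --- e.g.\ $H\otimes H$ has entries in $\D$, both $H$ and $CH$ have entries in $\Drtwo$, $F\propto\sqrt H$ has entries in $\Drminustwo$, and $\omega H, S$ have entries in $\Di$ since $\omega/\sqrt2=(1+i)/2$ --- and then observes that $U_{2^n}(R)$ is closed under matrix product and under the Kronecker products and identity-paddings used to lift gates to $n$ qubits. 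The content of the theorem is the completeness direction, which I would establish by an exact-synthesis algorithm.

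For completeness I would argue by induction on a denominator exponent. Each ring $R$ is obtained from a base ring $R_0\in\s{\Z,\Zrtwo,\Zrminustwo,\Zi}$ by inverting a distinguished element $\delta$ lying over $2$ --- equal to $\sqrt2$, $i\sqrt2$, or $1+i$ in cases (ii)--(iv), and to $2$ itself in case (i), where no square root of $2$ is available --- chosen so that in every case $R_0/(\delta)\cong\F$. I would assign to $V\in U_{2^n}(R)$ the least $k$ with $\delta^kV$ over $R_0$, computed via the operator $\lde$. The base case $k=0$ gives a unitary over $R_0$; bounding each entry together with its Galois conjugates by orthonormality and invoking a Kronecker-type argument forces the entries into the roots of unity of $R_0$, namely $\s{0,\pm1}$ (resp.\ $\s{0,\pm1,\pm i}$ for $\Zi$), so the matrix is \emph{monomial} with unit entries. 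These generalized permutation matrices are then synthesized from $\s{X,CX,CCX}$ together with the available sign and phase resources: the $-1$ signs arise as $Z\otimes Z=(H\otimes H)(X\otimes X)(H\otimes H)$, and factors of $i$ come from $S$, with the single ancilla used to localize a $Z$ or $S$ onto one wire by acting with $Z\otimes Z$ or a controlled phase on the target together with the ancilla held at $\ket0$.

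The inductive step is the heart of the argument and rests on a number-theoretic reduction lemma. Given a column of $V$ with $\lde$ equal to $k>0$, write its entries as $a_j/\delta^{k}$ with $a_j\in R_0$ not all divisible by $\delta$; the unit-norm condition reads $\sum_j |a_j|^2=|\delta|^{2k}$, which is $\equiv 0 \pmod{\delta}$. Passing to $R_0/(\delta)\cong\F$ and using $|a_j|^2\equiv a_j$ there, one finds that the numerators with nonzero residue occur in an even number, so they can be paired; applying the appropriate two-level root-of-Hadamard operation ($H$, $F$, $\omega H$, or a $4\times4$ block of $H\otimes H$) to such a pair sends $(a_i,a_j)\mapsto\tfrac{1}{\sqrt2}(a_i\pm a_j)$, whose entries are now divisible by $\delta$. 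Clearing every level this way lowers $k$ by one, and iterating drives $V$ to the base case. The final ingredient is to realize each two-level generator as a genuine circuit: the relevant basis states are brought to differ in a single coordinate using $\s{X,CX,CCX}$, after which the two-level operation becomes a multiply-controlled $H$-, $F$-, $\omega H$-, or $H\otimes H$-type gate, implemented from the one- and two-controlled versions in the gate set with one reusable ancilla returned to $\ket0$.

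I expect the main obstacle to be the reduction lemma in case (i). There the only non-classical generator is the two-qubit gate $H\otimes H$, so the residue analysis must be carried out modulo $2$ on \emph{pairs} of rows at once rather than one row at a time, and one must verify that the $4\times4$ action of $H\otimes H$ and its conjugates always suffices to decrease the exponent; the fact that a bare $H$ is unavailable is exactly what forces the ancilla and makes the bookkeeping most delicate. The remaining cases follow the same template with the simpler one-dimensional residue computation, with case (iv) additionally tracking the phase $\omega$ contributed by $\omega H$ against the factors of $i$ produced by $S$.
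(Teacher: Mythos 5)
Your overall architecture---soundness by inspection of the generators, plus a Giles--Selinger style column reduction by induction on a denominator exponent, with multi-level generators realized as circuits using one ancilla---is exactly the paper's. The gap is in your central reduction lemma: computing residues modulo $\delta$ with $R_0/(\delta)\cong\F$ is too coarse, and the step fails as stated. A concrete failure in case (ii): the entries $1$ and $1+\sqrt{2}$ are both nonzero in $\Zrtwo/(\sqrt{2})$, so your rule permits pairing them, but $H$ sends this pair to $\tfrac{1}{\sqrt{2}}(2+\sqrt{2},\,-\sqrt{2})$, and $(2+\sqrt{2})/\sqrt{2}=1+\sqrt{2}$ is not divisible by $\sqrt{2}$, so the exponent does not drop. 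The paper instead works in the finer quotient $\Zrtwo/(2)=\s{0,1,\sqrt{2},1+\sqrt{2}}$, uses the rational and irrational parts of the norm equation to show that the classes $1$ and $1+\sqrt{2}$ each occur evenly often, and pairs only within a class. The same refinement is needed in every case: in (iv) one must arrange $i^{m_1}u_1\equiv i^{m_2}u_2\equiv 1\pmod{2}$ (not merely mod $1+i$), since $\omega H$ only divides the sum by $1+i$ and one needs divisibility by $(1+i)^2$; in (iii) the map $(a_i,a_j)\mapsto\tfrac{1}{\sqrt{2}}(a_i\pm a_j)$ is not what $F$ does at all, and the paper needs a case split (if $u_1\equiv u_2\pmod 2$ then $F^2=iH$ suffices; otherwise residues are aligned modulo $2i\sqrt{2}$ with signs and a swap before a single $F$). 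In case (i), which you flag but do not resolve, mod-$2$ counting only yields an \emph{even} number of odd entries, which does not let you form the quadruples on which $H\otimes H$ acts; the paper uses that odd squares are $\equiv 1\pmod 4$, so $\sum u_k^2=4^q$ forces the count of odd entries to be divisible by $4$, and then flips signs so each is $\equiv 1\pmod 4$ before applying $H\otimes H$. In short, every case needs a quotient one step finer than $R_0/(\delta)$ together with unit pre-corrections to align residues; your lemma supplies neither.

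Two smaller omissions: realizing the two-level $F$ and four-level $H\otimes H$ generators as circuits is not immediate from ``the one- and two-controlled versions in the gate set,'' since $CF\notin\s{X,CX,CCX,F}$ and $C(H\otimes H)\notin\s{X,CX,CCX,H\otimes H}$; the paper constructs these explicitly (e.g., $CF$ via the relation $(ZXF)^2=I$). And case (i) in dimensions $n<4$ requires a separate argument showing $\lde_2(v)=0$ there, which your sketch does not address.
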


The gate sets in items $(i)$--$(iv)$ of the above theorem are all
universal for quantum computing \cite{Aharonov03asimple,Shi2003}, and
we sometimes refer to circuits over these gate sets as
\emph{integral}, \emph{real}, \emph{imaginary}, and \emph{Gaussian}
Clifford+$T$ circuits, respectively. As a corollary to the above
theorem, we also obtain two additional characterizations of universal
gate sets.

\begin{corollary*}
  A $2^n\times 2^n$ unitary matrix $V$ can be exactly represented by
  an $n$-qubit circuit over
  \begin{enumerate}[label=(\roman*)]
    \item $\s{X, CX, CCX, H}$ if and only if $V=W/\sqrt{2}^q$ for some
      matrix $W$ over $\Z$ and some $q\in\N$, and
    \item $\s{X, CX, CCX, H, S}$ if and only if $V=W/\sqrt{2}^q$ for
      some matrix $W$ over $\Zi$ and some $q\in\N$.
  \end{enumerate}
  Moreover, in (i) and (ii), a single ancilla is sufficient.
\end{corollary*}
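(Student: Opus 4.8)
The plan is to derive both statements from the main theorem by relating the gate set $\s{X,CX,CCX,H}$ to item (i) and the gate set $\s{X,CX,CCX,H,S}$ to item (iv), with the number of $H$ gates (hence the power of $\sqrt{2}$) playing the central role. For the \emph{only if} directions I would argue by induction on the length of the circuit: each of $X$, $CX$, and $CCX$ is a permutation matrix, hence integral; a single $H$ applied to one wire of an $n$-qubit register is $\tfrac{1}{\sqrt{2}}$ times an integral matrix; and $S$ is diagonal over $\Zi$. Since multiplying an integral (resp.\ Gaussian-integral) matrix by such a generator yields an integral (resp.\ Gaussian-integral) matrix up to one extra factor of $1/\sqrt{2}$, any circuit over $\s{X,CX,CCX,H}$ evaluates to $W/\sqrt{2}^{\,q}$ with $W$ over $\Z$, and any circuit over $\s{X,CX,CCX,H,S}$ to $W/\sqrt{2}^{\,q}$ with $W$ over $\Zi$, where $q\in\N$ counts the $H$ gates.

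The substance lies in the \emph{if} directions, which I would handle via a parity reduction. Suppose $V=W/\sqrt{2}^{\,q}$. The mere existence of such a representation already forces all entries of $V$ to lie simultaneously in $\Z[1/2]$ (when $q$ is even) or simultaneously in $\sqrt{2}\,\Z[1/2]$ (when $q$ is odd), and analogously over $\Zi$; in particular the parity of $q$ is an invariant of $V$. If $q$ is odd, I apply a single $H$ to the first wire: the product $V\cdot(H\otimes I)$ again has the homogeneous form but now with the even exponent $q+1$, since $(\sqrt{2}\,a)(\sqrt{2}\,b)=2ab$ clears the $\sqrt{2}$. As $H$ is self-inverse, it suffices to synthesize $V(H\otimes I)$ and append one further $H$, so both cases reduce to even $q$. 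When $q$ is even, $V$ lies in $U_{2^n}(\Z[1/2])$ (resp.\ $U_{2^n}(\Z[1/2,i])$), and I invoke item (i) (resp.\ item (iv)) of the theorem.

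It then remains to pass from the gate sets of the theorem to those claimed here. For (i) this is immediate: $H\otimes H$ is just two single-wire $H$ gates, so every circuit over $\s{X,CX,CCX,H\otimes H}$ is already a circuit over $\s{X,CX,CCX,H}$. For (ii) the theorem produces a circuit over $\s{X,CX,CCX,\omega H,S}$; replacing each $\omega H$ by $H$ changes the synthesized matrix by the global phase $\omega^{-k}$, where $k$ is the number of $\omega H$ gates. The main obstacle is therefore to correct this phase \emph{exactly}, which I would do using the single-qubit identity $(HS)^3=\omega I$: applying $(HS)^{3k}$ to any one wire contributes the compensating global phase $\omega^{k}I$ without leaving the gate set. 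Finally, since items (i) and (iv) already require a single ancilla (needed, for instance, to realize $H\otimes H$ when $n=1$) while the parity reduction and the phase correction act on wires already present, a single ancilla suffices throughout.
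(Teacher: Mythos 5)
Your proof is correct and follows essentially the same route as the paper: establish that the parity of the $\sqrt{2}$-exponent is an invariant, reduce the odd case to the even case by one extra Hadamard-type operation, and invoke the $\Z[1/2]$ and $\Z[1/2,i]$ characterizations of the main theorem (the paper uses $I_{2^{n-1}}\otimes H$ in case (i) and the global phase $\omega I$, realized as $SHSHSH$, in case (ii), while you use $H\otimes I$ and an explicit $(HS)^{3k}$ phase correction — cosmetic differences). You also legitimately sidestep the paper's Lemma~\ref{lem:odd}, which is only needed because the paper proves the stronger statement for unitaries of arbitrary dimension $n$, where $I_{n/2}\otimes H$ may fail to exist; in dimension $2^n$ this is automatic.
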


As a final corollary to the theorem above, we refine the
characterizations $(iii)$ and $(iv)$ by showing that in these cases a
matrix can be represented by an ancilla-free circuit if and only if it
has determinant~1.

\begin{corollary*}
  Let $n\geq 4$. A $2^n\times 2^n$ unitary matrix $V$ can be exactly
  represented by an $n$-qubit ancilla-free circuit over
  \begin{enumerate}[label=(\roman*)]
    \item $\s{X, CX, CCX, F}$ if and only if $V\in
      U_{2^n}(\Z[1/i\sqrt{2}])$ and $\det V=1$, and
    \item $\s{X, CX, CCX, \omega H, S}$ if and only if $V\in
      U_{2^n}(\Z[1/2,i])$ and $\det V=1$.
  \end{enumerate}
  In (i) and (ii), the requirement that $\det V=1$ can be dropped for
  $n<4$.
\end{corollary*}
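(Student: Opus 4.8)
The plan is to split the two implications and to treat the determinant as the \emph{only} obstruction to removing the ancilla. Throughout, write $R$ for the relevant ring ($\Z[1/i\sqrt{2}]$ in case (i), $\Z[1/2,i]$ in case (ii)), and recall that a $k$-qubit gate $g$ inserted into an $n$-qubit circuit acts, after reordering wires, as $g$ tensored with an identity on the remaining $n-k$ qubits, and so contributes $(\det g)^{2^{n-k}}$ to the determinant of the whole circuit. A direct computation gives $\det X=\det CX=\det CCX=\det F=-1$ (order $2$) and $\det(\omega H)=-i$, $\det S=i$ (order $4$); in particular each generator determinant is a root of unity of $2$-power order. For the ``only if'' directions I would observe that $(\det g)^{2^{n-k}}=1$ as soon as $2^{n-k}$ is divisible by the order of $\det g$, and that the binding generator is $CCX$ (with $k=3$), which embeds with determinant $(-1)^{2^{n-3}}=1$ exactly when $n\geq 4$. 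Hence for $n\geq 4$ every generator embeds with determinant $1$, so any ancilla-free circuit computes a matrix of determinant $1$; combined with parts (iii)--(iv) of the main theorem this gives that an ancilla-free circuit over either gate set realizes some $V\in U_{2^n}(R)$ with $\det V=1$.

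For the converse ($n\geq 4$) the goal is to show that every $V\in U_{2^n}(R)$ with $\det V=1$ is ancilla-free representable, i.e.\ that the ancilla-free circuits realize all of $SU_{2^n}(R)=\s{V\in U_{2^n}(R):\det V=1}$. I would approach this by revisiting the exact-synthesis argument underlying the main theorem and isolating the role of the single ancilla. That argument reduces a general $V\in U_{2^n}(R)$ to a permutation matrix by repeatedly multiplying by generator words that lower the least denominator exponent of a chosen column; the ancilla is invoked only to realize a residual basis permutation (and accompanying phase) whose parity the ancilla-free classical gates cannot produce. The key structural input is that, for $n\geq 4$, the reversible gates $\s{X,CX,CCX}$ generate exactly the alternating group $A_{2^n}$ of even permutations of the computational basis: each of $X$, $CX$, $CCX$ is a product of $2^{n-1}$, $2^{n-2}$, $2^{n-3}$ disjoint transpositions, hence an even permutation once $n\geq 4$, and for $n\geq 4$ these already exhaust $A_{2^n}$. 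Since the reduction words and the phase corrections all lie in $SU_{2^n}(R)$, the sole obstruction to an ancilla-free synthesis is whether the terminal permutation is even, which is precisely the condition $\det V=1$.

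I expect the converse step to be the main obstacle. Concretely, one must verify that the whole synthesis can be carried out within $SU_{2^n}(R)$ using only ancilla-free generators, i.e.\ that the ancilla does nothing except fix the determinant (equivalently, the permutation parity). The delicate points are (a) that the lde-reduction, which in the one-ancilla proof may temporarily operate on the ancilla wire, can be rerouted among the $n$ data wires once $n\geq 4$, and (b) that the leftover two-level and phase factors can be absorbed into elements of $SU_{2^n}(R)$ realizable by $\s{F}$ (respectively $\s{\omega H,S}$) together with $A_{2^n}$. I expect (a) to be where the threshold $n\geq 4$ is genuinely needed, since it is exactly the value at which $CCX$ becomes an even permutation.

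Finally, for $n<4$ the threshold fails because $CCX$ (at $n=3$), $CX$ (at $n=2$), and $X,F,\omega H,S$ (at $n=1$) embed with nontrivial determinant. I would compute the subgroup $D_n\leq R^{\times}$ of determinants realized by ancilla-free generator words and compare it with the group $\mu(R)$ of roots of unity of $R$, which is $\s{\pm 1}$ for $R=\Z[1/i\sqrt{2}]$ and $\s{\pm 1,\pm i}$ for $R=\Z[1/2,i]$. Whenever $D_n=\mu(R)$, any determinant that can occur for $V\in U_{2^n}(R)$ is realized by some generator word $W$, so $W^{-1}V$ has determinant $1$ and the determinant hypothesis becomes vacuous; this is what allows $\det V=1$ to be dropped. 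I would verify $D_n=\mu(R)$ in each asserted small case (for $R=\Z[1/i\sqrt{2}]$ one checks $D_n=\s{\pm1}$ already at $n=1$), after which the remaining content is a finite, low-dimensional synthesis that I would dispatch directly.
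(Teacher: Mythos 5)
Your ``only if'' direction is fine and matches the paper's (the paper dispatches it in one sentence by noting every generator has determinant $1$ once $n\geq 4$; your $(\det g)^{2^{n-k}}$ bookkeeping, with $CCX$ as the binding case, is a correct elaboration of the same point).

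The converse is where there is a genuine gap, and your own diagnosis of points (a) and (b) as ``the main obstacle'' is accurate: those points \emph{are} the proof, and your proposed mechanism for them does not work as stated. First, the ancilla in the paper's synthesis is not ``invoked only to realize a residual basis permutation'': it is used in \emph{every} application of a multi-level generator, since $(-1)_{[a]}$, $X_{[a,b]}$, $F_{[a,b]}$, etc.\ are built via \cref{lem:multilevel} as $V^\dagger(C^nW)V$ with a clean ancilla. Worse, these generators genuinely cannot be made ancilla-free for $n\geq 4$: $(-1)_{[a]}$ and the two-level $X_{[a,b]}$ (a single transposition) have determinant $-1$, so by your own ``only if'' argument they lie outside the ancilla-free group. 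Reducing the obstruction to ``permutation parity'' and citing that $\s{X,CX,CCX}$ generates $A_{2^n}$ therefore does not suffice: the synthesis must be rebuilt from a \emph{different} set of determinant-$1$ multi-level operators, and one must exhibit ancilla-free circuits for those. This is exactly what the paper does: it substitutes the words $F^{m_0}(-1)_{[1]}^{m_1}(-1)_{[2]}^{m_2}X^{m_3}$ appearing in the $\lde$-reduction (\cref{lem:unrealvector}) by products of the two-level operators $XZ$, $ZX$, $FZ$, $ZF$ (resp.\ $iX$, $iZ$, $\omega SH$, $\omega HS$ in the Gaussian case), proves a modified column lemma reducing each column to $i^m e_j$ rather than $e_j$, clears the resulting diagonal of $\pm1$ (resp.\ $i^m$) entries using determinant-$1$ multi-level $Z$ (resp.\ $iZ$) operators --- which is where $\det V=1$ is actually consumed --- and then gives explicit ancilla-free constructions of each of these operators in \cref{app:unreal,app:gaussian} via identities such as $F^2=iH$, $(ZXF)^2=I$, and $X(ZXF)X(ZXF)X=ZXF$. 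None of that content is present in, or follows from, your outline; note also that operators like $ZX_{[a,b]}$ are signed permutations rather than permutations, so the $A_{2^n}$ generation theorem does not produce them. Your plan for $n<4$ (computing the determinant subgroup realized by the generators) is reasonable but likewise left unexecuted.
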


The characterization of ancilla-free real and integral Clifford+$T$
circuits remains an open question but we conjecture that they
correspond to a strict subgroup of the group of unitaries with
determinant~1.

Restrictions similar to the ones considered here were previously
studied in the context of foundations \cite{RL2002}, randomized
benchmarking \cite{HFGW18}, and graphical languages for quantum
computing \cite{BK2019,JPV2017,vilmart2019}. Furthermore, our study
fits within a larger program, initiated by Aaronson and others, which
aims at classifying quantum operations. Such classifications exist for
classical reversible operations \cite{AGS17}, for stabilizer
operations \cite{GS18}, and for beam-splitter interactions \cite{ab},
but no classification is known for a universal family of quantum
operations suited for fault-tolerant quantum computing. In this
context, our work can be seen as a partial classification of the
universal extensions of the set of classical reversible gates $\s{X,
  CX, CCX}$. This perspective is illustrated in
\Cref{fig:inclusionlattice}, which depicts a fragment of the lattice
of subgroups of $U_n(\Z[1/\sqrt{2},i])$ where, for conciseness, we
wrote $\D$ for the ring $\Z[1/2]$ so that the rings $\Z[1/\sqrt{2}]$,
$\Z[1/i\sqrt{2}]$, $\Z[1/2,i]$ and $\Z[1/\sqrt{2},i]$ are denoted by
$\Drtwo$, $\Drminustwo$, $\Di$, and $\Domega$, respectively.

\begin{figure}[t]
  \begin{center}
    \begin{tikzpicture}[scale=0.8]
      \draw [thin, dashed, gray, fill=red!10] (0,5.3) -- (7,5.3) --
      (10,9.3) -- (3,9.3) -- cycle;
      \draw [thin, dashed, gray, fill=blue!10] (0,0) -- (7,0) --
      (10,4) -- (3,4) -- cycle;
      \draw [thin, dashed, gray] (0,0)  -- (0,6);
      \draw [thin, dashed, gray] (3,4)  -- (3,10); 
      \draw [thin, dashed, gray] (7,0)  -- (7,6);     
      \draw [thin, dashed, gray] (10,4) -- (10,10);
      \tikzstyle{universalnode}= [draw, thick, circle, inner sep=0pt,
        minimum size=5.5em, fill=red!20]
      \node [universalnode] at (0,5.3)  (Aup) {$U_n(\D)$};
      \node [universalnode] at (3,9.3)  (Bup) {$U_n(\Drminustwo)$}; 
      \node [universalnode] at (5,7.3)  (Cup) {$U_n(\Di)$};
      \node [universalnode] at (7,5.3)  (Dup) {$U_n(\Drtwo)$};
      \node [universalnode] at (10,9.3) (Eup) {$U_n(\Domega)$};
      \tikzstyle{permutationnode}= [draw, thick, regular
          polygon, regular polygon sides=8, inner sep=-4pt, minimum
          size=5.5em, fill=blue!20]
      \node [permutationnode] at (0,0)  (A) {$U_n(\Z)$};
      \node [permutationnode] at (5,2)  (C) {$U_n(\Zi)$};
      \node [permutationnode] at (10,4) (E) {$U_n(\Zomega)$};
      \tikzstyle{classicalnode}= [draw, thick, regular polygon,
        regular polygon sides=4, inner sep=0pt, minimum size=5.5em,
        fill=yellow!20]
      \node [classicalnode] at (-3.5,2.45)    (Cl) {$S_n$};
      \draw [->, thick, black] (Aup) -- node[midway,left]  {$F$} (Bup);
      \draw [->, thick, black] (Aup) -- node[midway,above] {$S$}  (Cup);
      \draw [->, thick, black] (Aup) -- node[midway,above] {$CH$} (Dup);
      \draw [->, thick, black] (Bup) -- node[midway,above] {$CH$}  (Eup);
      \draw [->, thick, black] (Cup) -- node[midway,above]  {$T$}  (Eup);
      \draw [->, thick, black] (Dup) -- node[midway,left]  {$F$}  (Eup);
      \draw [->, thick, black] (A)   -- node[midway,below] {$S$}  (C);
      \draw [->, thick, black] (C)   -- node[midway,below] {$T$}  (E);
      \draw [->, thick, black] (A)   -- node[midway,left]  {$H\otimes H$}  (Aup);
      \draw [->, thick, black] (C)   -- node[midway,left]  {$H\otimes H$}  (Cup);
      \draw [->, thick, black] (E)   -- node[midway,right]  {$H\otimes H$}  (Eup);
      \draw [->, thick, black] (Cl)  -- node[midway,left]  {$H\otimes H$}  (Aup);
      \draw [->, thick, black] (Cl)  -- node[midway,below] {$Z$}  (A);
    \end{tikzpicture}
  \end{center}
  \caption{Some subgroups of $U_n(\Domega)$. To the left of the cube,
    in yellow, the symmetric group $S_n$ corresponds to circuits over
    the gate set $\s{X, CX, CCX}$. On the bottom face of the cube, in
    blue, are generalized symmetric groups, and on the top face of the
    cube, in red, are universal subgroups of $U_n(\Domega)$. The edges
    of the lattice denote inclusion. The gates labeling the edges are
    sufficient to extend the expressive power of a gate set from one
    subgroup to the next (and no further). For example, the edge
    labeled $Z$ going from $S_n$ to $U_n(\Z)$ indicates that adding
    the $Z$ gate to $\s{X, CX, CCX}$ produces a gate set expressive
    enough to represent every matrix in $U_n(\Z)$ (but not every
    matrix in $U_n(\Zi)$).}\label{fig:inclusionlattice}
\end{figure}
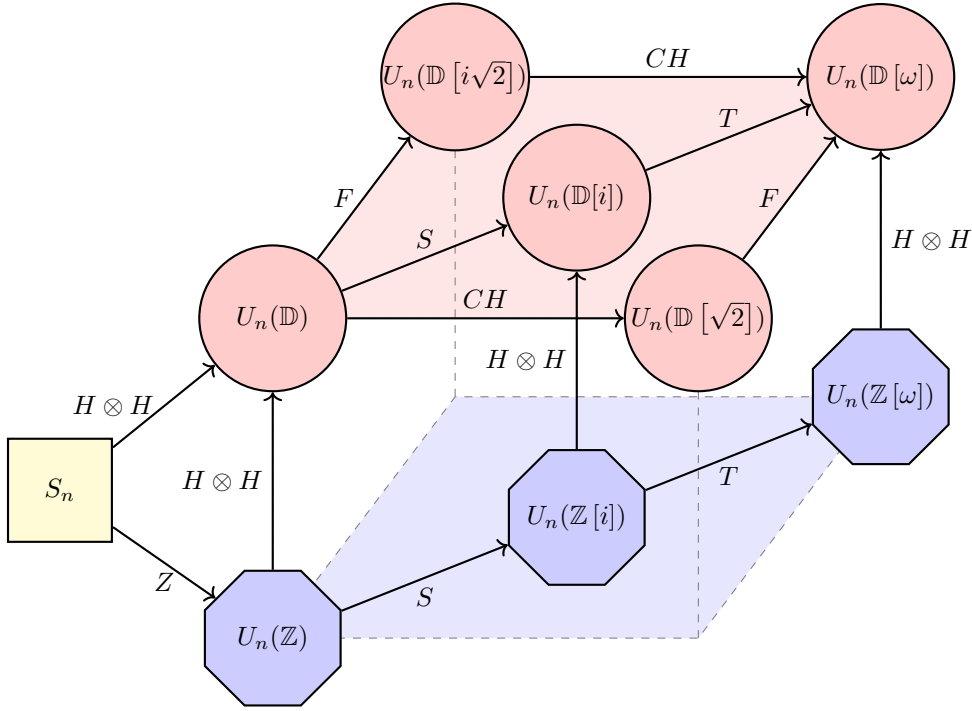

The rest of the paper is organized as follows. In \cref{sec:overview},
we give an overview of our methods. In \cref{sec:ringsandmatrices}, we
introduce the rings and matrices which will be used throughout the
paper. In \cref{sec:circuits}, we show that certain useful matrices
can be exactly represented by restricted Clifford+$T$
circuits. \cref{sec:characterizations} contains the proofs of our
various number-theoretic characterizations. We conclude in
\cref{sec:conc}.

\section{Overview}
\label{sec:overview}

Unrestricted Clifford+$T$ circuits are generated by
\[
  H= \frac{1}{\sqrt{2}}
  \begin{bmatrix}
      1 & 1 \\
      1 & -1
  \end{bmatrix},
\quad
  CX = 
  \begin{bmatrix}
    1 & 0 & 0 & 0 \\
    0 & 1 & 0 & 0 \\
    0 & 0 & 0 & 1 \\
    0 & 0 & 1 & 0 \\
  \end{bmatrix},
\quad
\mbox{ and }
\quad
  T =  
  \begin{bmatrix}
      1 & 0 \\
      0 & \omega
  \end{bmatrix}. 
\]
Since $\omega = (1+i)/\sqrt{2}$, the entries of all the generators
belong to the ring
$\Z[1/\sqrt{2},\omega]=\Z[1/\sqrt{2},i]=\D[\omega]$. Hence, if a
matrix $V$ can be represented exactly by an $n$-qubit Clifford+$T$
circuit, then $V\in U_{2^n}(\Domega)$, the group of $2^n\times 2^n$
unitary matrices with entries in $\Domega$. Showing that the ring
$\Domega$ characterizes Clifford+$T$ circuits thus amounts to proving
the converse implication. An algorithm establishing that every element
of $U_{2^n}(\Domega)$ can be exactly represented by a Clifford+$T$
circuit is known as an exact synthesis algorithm.

The original insight of Kliuchnikov, Maslov and Mosca in the
single-qubit Clifford+$T$ case was to reduce the problem of exact
synthesis to the problem of state preparation. The latter problem is
to find, given a target vector $v\in\Domega^n$, a sequence
$G_1,\ldots, G_\ell$ of Clifford+$T$ gates such that $G_\ell\cdots
G_1e_1 = v$ or, equivalently, such that $G_1^\dagger\cdots
G_\ell^\dagger v=e_1$. Kliuchnikov et al. realized that this sequence
of gates can be found by first writing $v$ as $v=u/\sqrt{2}^q$ for
some $u\in\Zomega$ and then iteratively reducing the exponent $q$.

This basic premise was extended by Giles and Selinger to the
multi-qubit context by adding an outer induction over the columns of
an $n$-qubit unitary. This method amounts to performing a constrained
Gaussian elimination where the row operations are restricted to a few
basic moves. The Giles-Selinger algorithm proceeds by reducing the
leftmost column of an $n\times n$ unitary matrix to the first standard
basis vector by applying a sequence of one- and two-level matrices,
which act non-trivially on at most two components of a vector, before
recursively dealing with the remaining submatrix. If the target
unitary is $V = \mleft[\begin{array}{c|c} v & V' \end{array}\mright]$,
then the Giles-Selinger algorithm first constructs a sequence of
matrices $G_1, \dots, G_\ell$ such that $G_1\cdots G_\ell v =
e_1$. Left-multiplying $V$ by this sequence of matrices then yields
\[
G_1\cdots G_\ell
\mleft[\begin{array}{c|ccc}
  & & & \\ v & & V' & \\ & & &
\end{array}\mright] 
=
\mleft[\begin{array}{c|ccc}
  1 & 0 & \cdots & 0 \\
  \hline
  0 & & & \\
  \vdots & & V'' & \\
  0 & & &
\end{array}\mright] 
\]
where $V''$ is unitary. The fact that the matrices used in this
reduction act non-trivially on no more than two rows of the matrix
ensures that when the algorithm recursively reduces the columns of
$V''$ it does so without perturbing the previously fixed columns. The
Giles-Selinger algorithm thus relies on the following two facts.
\begin{enumerate}
  \item A unit vector in $\Domega^n$ can be reduced to a standard
    basis vector by using one- and two-level matrices and
  \item the required one- and two-level matrices can be exactly
    represented by Clifford+$T$ circuits.
\end{enumerate}
While there are subtle differences between the various cases discussed
below, our method in characterizing restricted Clifford+$T$ circuits
follows this general structure.

\section{Rings and Matrices}
\label{sec:ringsandmatrices}

In this section, we discuss the rings and matrices that will be used
throughout the paper. For further details, the reader is encouraged to
consult \cite{artin}.

\subsection{Rings}
\label{ssec:rings}

We write $\N$ for the set of nonnegative integers and if $n\in\N$ we
write $[n]$ for the set $\s{1,\ldots,n}$. We use $\Z$ to denote the
ring of integers and $i$ to denote the imaginary unit. 
We define $\omega$ as
$\omega=e^{i\pi/4} = (1+i)/\sqrt{2}$. Note that $i$ is a 4-th root of
unity and that $\omega$ is an 8-th root of unity.

We will use the extensions of $\Z$ defined below.

\begin{definition}
  \label{def:rings}
  Let
  \begin{itemize}
  \item $\Zrtwo = \s{x_0+x_1\sqrt{2} \mid x_0,x_1\in \Z}$,
  \item $\Zrminustwo = \s{x_0+x_1i\sqrt{2} \mid x_0,x_1\in \Z}$,
  \item $\Zi = \s{x_0+x_1i \mid x_0,x_1\in \Z}$, and
  \item $\Zomega = \s{x_0+x_1\omega+x_2\omega^2 + x_3\omega^3 \mid
    x_0,x_1,x_2,x_3\in \Z}$.
  \end{itemize}
\end{definition}

The rings $\Zrtwo$, $\Zrminustwo$, $\Zi$, and $\Zomega$ are known as
the ring of \emph{quadratic integers with radicand 2}, the ring of
\emph{quadratic integers with radicand \mbox{-}2}, the ring of
\emph{Gaussian integers}, and the ring of \emph{cyclotomic integers of
  degree 8}, respectively. All of these rings are distinct subrings of
$\Zomega$ and we have the inclusions depicted in the lattice of
subrings below.
\begin{center}
\begin{tikzpicture}
  \node at (1,0) (r1) {$\Z$};
  \node at (-1,1) (r2) {$\Zrminustwo$};
  \node at (1,1) (r3) {$\Zi$};
  \node at (3,1) (r4) {$\Zrtwo$};
  \node at (1,2) (r5) {$\Zomega$};
  \draw [->] (r1) -- (r2);
  \draw [->] (r1) -- (r3);
  \draw [->] (r1) -- (r4);
  \draw [->] (r2) -- (r5);
  \draw [->] (r3) -- (r5);
  \draw [->] (r4) -- (r5);    
\end{tikzpicture}
\end{center}

Further to the rings introduced in \cref{def:rings}, we will consider
extensions of the ring of \emph{dyadic fractions}, i.e., fractions
whose denominator is a power of 2.

\begin{definition}
  \label{def:dyadicfractions}
  The ring of \emph{dyadic fractions} $\D$ is defined as $\D=\left\{
  \frac{u}{2^q} \mid u\in \Z, q\in\N \right\}$.
\end{definition}

\begin{definition}
  \label{def:dyadicrings}
  Let
  \begin{itemize}
  \item $\Drtwo = \s{x_0+x_1\sqrt{2} \mid x_0,x_1\in \D}$,
  \item $\Drminustwo = \s{x_0+x_1i\sqrt{2} \mid x_0,x_1\in \D}$,
  \item $\Di = \s{x_0+x_1i \mid x_0,x_1\in \D}$, and
  \item $\Domega = \s{x_0+x_1\omega+x_2\omega^2 + x_3\omega^3 \mid
    x_0,x_1,x_2,x_3\in \D}$.
  \end{itemize}
\end{definition}

If $v\in\Drtwo$, then $v$ can be written as $v= u/2^q$ for some $q\in
\N$ and some $u \in \Zrtwo$. A similar property holds for elements of
$\Drminustwo$, $\Di$, and $\Domega$.

If $R$ is a ring and $r\in R$ we write $R/(r)$ for the quotient of the
ring $R$ by the ideal generated by the element $r$. Two elements $s$
and $s'$ of $R$ are congruent modulo $r$ if $s-s'$ belongs to the
ideal $(r)$, in which case we write $s\equiv s' \pmod{r}$. We
sometimes refer to the elements of the ring $R/(r)$ as residues. Some
quotient rings are well-known. For example, $\Z/(2) = \s{0,1}$ and
$\Z/(4)=\s{0,1,2,3}$. The following proposition gives an explicit
description of certain lesser-known rings of residues which will be
useful in what follows.

\begin{proposition}
  \label{prop:ringsofresidues}
  We have
  \begin{itemize}
  \item $\Zrtwo/(2) = \s{0,1,\sqrt{2},1+\sqrt{2}}$, 
  \item $\Zrminustwo/(2) =
    \s{0,1,i\sqrt{2},1+i\sqrt{2}}$,
  \item $\Zrminustwo/(2i\sqrt{2}) =
    \s{0,1,2,3,i\sqrt{2},1+i\sqrt{2},
      2+i\sqrt{2}, 3+i\sqrt{2}}$, and
  \item $\Zi/(2) = \s{0,1,i,1+i}$.    
  \end{itemize}
\end{proposition}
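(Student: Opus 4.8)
The plan is to verify each of the four claimed descriptions by the same two-step strategy: first, exhibit the listed residues as a complete set of coset representatives by a counting argument, and second, confirm that the representatives are pairwise incongruent. For the counting step, I would observe that each ring $R$ in the list is a free $\Z$-module of rank $2$ (in the three quadratic cases) with an explicit integral basis, namely $\s{1,\sqrt{2}}$ for $\Zrtwo$, $\s{1,i\sqrt{2}}$ for $\Zrminustwo$, and $\s{1,i}$ for $\Zi$. Given a modulus $r$, the quotient $R/(r)$ is finite, and its cardinality equals the absolute value of the norm $N(r)$ (equivalently, the index $[R:(r)]$ computed as the determinant of multiplication-by-$r$ acting on the $\Z$-basis). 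I expect $N(2)=4$ in each of the three cases $\Zrtwo/(2)$, $\Zrminustwo/(2)$, and $\Zi/(2)$, matching the four listed representatives, and $N(2i\sqrt{2})=|2i\sqrt{2}|^2 = 8$ for $\Zrminustwo/(2i\sqrt{2})$, matching the eight listed representatives.

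Once the cardinality is pinned down, I would show that every element of $R$ is congruent to exactly one representative on the list. For the modulus $2$, this is immediate: writing a general element as $x_0 + x_1\alpha$ with $\alpha\in\s{\sqrt{2},i\sqrt{2},i}$ and $x_0,x_1\in\Z$, I reduce $x_0$ and $x_1$ modulo $2$, which leaves exactly the four combinations $\s{0,1}\cdot 1 + \s{0,1}\cdot\alpha$, precisely the listed sets. The key fact is that $2$ divides both $2$ and $2\alpha$, so reducing the integer coefficients modulo $2$ is a legitimate operation in the quotient. The case $\Zrminustwo/(2i\sqrt{2})$ is the one requiring the most care: here $2i\sqrt{2} = i\sqrt{2}\cdot 2 = (i\sqrt 2)(i\sqrt 2)(i\sqrt 2)\cdot(-1)$, and since $(i\sqrt{2})^2 = -2$, I would use the relations $2(i\sqrt{2})\equiv 0$ and $(i\sqrt 2)^2 = -2$ to reduce a general element $x_0 + x_1 i\sqrt{2}$. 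The coefficient $x_0$ reduces modulo $4$ (because $(i\sqrt 2)^2=-2$ lets us trade $i\sqrt 2$-multiples into integer shifts, and $2i\sqrt2\equiv0$ forces the integer part to live mod $4$ after accounting for these trades), while $x_1$ reduces modulo $2$, yielding $4\cdot 2 = 8$ representatives of the form $\s{0,1,2,3} + \s{0,1}\cdot i\sqrt{2}$.

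The remaining work is to check pairwise incongruence, i.e. that no difference of two distinct listed representatives lies in the relevant ideal. For the mod-$2$ cases this reduces to checking that none of $1$, $\sqrt{2}$, $1+\sqrt{2}$ (and their analogues) is divisible by $2$ in the respective ring, which follows by comparing norms or by noting that division by $2$ would force half-integer coefficients. For $\Zrminustwo/(2i\sqrt{2})$ the incongruence check is the main obstacle, since the ideal is larger and the differences are more varied; I would handle it by showing that any nonzero difference of two listed representatives has norm strictly smaller than $N(2i\sqrt 2)=8$ in absolute value, or is not a $\Zrminustwo$-multiple of $2i\sqrt2$, so cannot lie in $(2i\sqrt{2})$ unless it is zero. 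Combined with the counting argument, the incongruence of the listed elements together with their correct total number forces them to be a complete and irredundant set of residues, establishing each claimed equality.
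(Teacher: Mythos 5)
Your proposal is correct in substance, but it is more elaborate than the paper's argument and differs in one structural respect. The paper proves the proposition by a single direct computation: writing $u=x_0+x_1\sqrt{2}$ and $u'=x_0'+x_1'\sqrt{2}$, it observes that $u\equiv u'\pmod 2$ in $\Zrtwo$ if and only if $x_0\equiv x_0'$ and $x_1\equiv x_1'\pmod 2$, i.e.\ congruence is componentwise in the integral basis; the three remaining cases are left to the analogous computation. Your mod-$2$ cases rest on exactly this observation (the ideal $(2)$ is the $\Z$-lattice $2\Z+2\alpha\Z$), but you wrap it in an extra layer the paper does not use: the counting argument $\lvert R/(r)\rvert=\lvert N(r)\rvert$, which buys you a systematic treatment of the harder modulus $2i\sqrt{2}$, where $N(2i\sqrt{2})=8$ and the ideal is the lattice $4\Z+2i\sqrt{2}\,\Z$, so that $x_0$ reduces mod $4$ and $x_1$ mod $2$. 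That is the right computation and it fills in the one case the paper genuinely glosses over. One small slip: your first proposed incongruence test for $\Zrminustwo/(2i\sqrt{2})$ --- that every nonzero difference of listed representatives has norm strictly less than $8$ --- is false (the difference $3$ has norm $9$, and $3+i\sqrt{2}$ has norm $11$); the test that does work uniformly is that $8=N(2i\sqrt{2})$ divides the norm of every element of the ideal, while none of the differences has norm divisible by $8$. Your stated fallback (direct verification that the difference is not a $\Zrminustwo$-multiple of $2i\sqrt{2}$, equivalently does not lie in $4\Z+2i\sqrt{2}\,\Z$) also suffices, so the gap is cosmetic rather than fatal.
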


\begin{proof}
  To see, for example, that $\Zrtwo/(2) =
  \s{0,1,\sqrt{2},1+\sqrt{2}}$, note that $u=x_0+x_1\sqrt{2}$ and
  $u'=x_0'+x_1'\sqrt{2}$ are congruent modulo 2 if there exists an
  element $t= y_0+y_1\sqrt{2}$ such that $u-u'=2t$. This is the case
  if and only if $(x_0-x_0')+(x_1-x_1')\sqrt{2} = 2y_0+2y_1\sqrt{2}$
  which in turn holds if and only if $x_0\equiv x_0' \pmod{2}$ and
  $x_1\equiv x_1' \pmod{2}$.
\end{proof}

We will often take advantage of properties of residues. Some of these
properties are generic. For example, if $u$ and $v$ are two elements
of a ring $R$ and $u\equiv v\pmod{2}$, then $u\pm v \equiv 0
\pmod{2}$. Other properties of residues are specific to a particular
ring. For example, an integer $u\in\Z$ is odd if and only if
$u^2\equiv 1 \pmod{4}$. Similarly, for an integer $u\in\Z$, we have
$u\equiv 3 \pmod{4}$ if and only if $-u\equiv 1\pmod{4}$. We now state
important properties of residues in $\Zrminustwo$ and $\Zi$ for future
reference. They can be established by reasoning using residue tables
in the relevant quotient rings. In the following, we denote the
complex conjugate of an element $u$ by $u^\dagger$. For uniformity, we
sometimes write $u^\dagger$ even when $u$ belongs to a real subring of
$\Domega$. In this case, $u^\dagger = u$.

\begin{proposition} 
  \label{prop:zrminus2residues}
  The following statements hold.
    \begin{itemize}
    \item In $\Zrminustwo/(2)$, $u^\dagger u \equiv 0 \mbox{ or } 1$.
    \item If $u^\dagger u \equiv 1$ in $\Zrminustwo/(2)$, then
      $u\equiv 1,3,1+i\sqrt{2},\mbox{ or }3+i\sqrt{2}$ in
      $\Zrminustwo/(2i\sqrt{2})$.
  \item In $\Zrminustwo/(2i\sqrt{2})$, $u\equiv 3$ if and only if
    $-u\equiv 1$ and $u\equiv 3+i\sqrt{2}$ if and only if $-u\equiv
    1+i\sqrt{2}$.
  \end{itemize}
\end{proposition}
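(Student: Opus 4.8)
The plan is to reduce all three statements to explicit computations with the integer coefficients of elements of $\Zrminustwo$, using the descriptions of the quotient rings supplied by \Cref{prop:ringsofresidues}. The common starting point is that for $u=x_0+x_1i\sqrt{2}$ with $x_0,x_1\in\Z$, the conjugate is $u^\dagger=x_0-x_1i\sqrt{2}$, so that
\[
u^\dagger u = x_0^2 + 2x_1^2 \in \Z.
\]
This expresses the norm $u^\dagger u$ as a rational integer, whose residue in any quotient of $\Zrminustwo$ is then easy to control. For the first statement I would reduce $u^\dagger u=x_0^2+2x_1^2$ modulo $2$: the term $2x_1^2$ vanishes and $x_0^2\equiv x_0\pmod 2$, so $u^\dagger u\equiv x_0\pmod 2$, which is $0$ or $1$ according to the parity of $x_0$, exactly as claimed.

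For the second statement, the key preliminary step is to pin down the congruence structure of $\Zrminustwo/(2i\sqrt{2})$. I would compute the ideal $(2i\sqrt{2})$ directly: multiplying $2i\sqrt{2}$ by a general element $x_0+x_1i\sqrt{2}$ yields $-4x_1+2x_0i\sqrt{2}$, so the ideal consists precisely of those $m+ni\sqrt{2}$ with $m\equiv 0\pmod 4$ and $n\equiv 0\pmod 2$. Consequently, reducing modulo $2i\sqrt{2}$ amounts to taking $x_0$ modulo $4$ and $x_1$ modulo $2$, which recovers the eight residues of \Cref{prop:ringsofresidues}. By the first statement, the hypothesis $u^\dagger u\equiv 1$ in $\Zrminustwo/(2)$ forces $x_0$ odd, hence $x_0\equiv 1$ or $3\pmod 4$; paired with $x_1\equiv 0$ or $1\pmod 2$, this leaves exactly the four residues $1,3,1+i\sqrt{2},3+i\sqrt{2}$.

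For the third statement, I would compute directly in $\Zrminustwo/(2i\sqrt{2})$ using the same coordinate-wise congruence structure. Since $-3\equiv 1\pmod 4$, we get $u\equiv 3$ if and only if $-u\equiv 1$; and since the difference $-i\sqrt{2}-i\sqrt{2}=-2i\sqrt{2}$ lies in the ideal, we have $-i\sqrt{2}\equiv i\sqrt{2}\pmod{2i\sqrt{2}}$, so applying the same argument to the $i\sqrt{2}$-part gives $u\equiv 3+i\sqrt{2}$ if and only if $-u\equiv 1+i\sqrt{2}$.

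The main obstacle is the second statement, and specifically the explicit identification of the ideal $(2i\sqrt{2})$ together with the resulting coordinate-wise description of $\Zrminustwo/(2i\sqrt{2})$: because the modulus $2i\sqrt{2}$ is not a rational integer, one must verify carefully that reducing by it is equivalent to reducing the real coefficient modulo $4$ and the $i\sqrt{2}$-coefficient modulo $2$. Once this structural fact is established, all three statements follow from elementary parity and divisibility bookkeeping.
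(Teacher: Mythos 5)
Your proposal is correct, and it fills in exactly the computation the paper leaves implicit (the paper offers no written proof, saying only that the claims "can be established by reasoning using residue tables in the relevant quotient rings"). Your norm formula $u^\dagger u = x_0^2+2x_1^2$ and the identification of the ideal $(2i\sqrt{2})$ as $4\Z + 2\Z i\sqrt{2}$ are the right structural facts, and they make the residue-table check rigorous.
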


\begin{proposition}
  \label{prop:ziresiduesmod2}
  The following statements hold.
  \begin{itemize} 
  \item In $\Zi/(2)$, if $u^2 \equiv 1$, then $u\equiv 1 \mbox{ or } i$.
  \item In $\Zi/(2)$, $u\equiv i$ if and only if $iu\equiv 1$.
  \end{itemize}
\end{proposition}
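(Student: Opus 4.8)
The plan is to reduce both statements to finite verifications using the explicit description of $\Zi/(2)$ from \Cref{prop:ringsofresidues}, namely $\Zi/(2) = \s{0,1,i,1+i}$. Since the quotient has only four residues, each claim becomes a matter of direct computation, provided every product is reduced back into this set of representatives. The one structural fact worth isolating at the outset is that $2\equiv 0$ in $\Zi/(2)$, so that $-x = x - 2x \equiv x$ for every residue $x$; in particular $-1\equiv 1$. This is the reduction that makes the computations collapse, and it is what I would record before treating either item.

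For the first statement, I would simply square each of the four residues. The cases $u\equiv 0$ and $u\equiv 1$ are immediate, and $u\equiv 1+i$ gives $(1+i)^2 = 2i \equiv 0$. The only interesting case is $u\equiv i$, where $u^2 \equiv i^2 = -1 \equiv 1$ by the observation above. Reading off the resulting table of squares shows that $u^2\equiv 1$ holds exactly when $u\equiv 1$ or $u\equiv i$, which yields the claim (and in fact its converse as well).

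For the second statement I would argue algebraically rather than tabulate. If $u\equiv i$, then $iu \equiv i^2 = -1 \equiv 1$, giving the forward direction. Conversely, if $iu\equiv 1$, multiplying through by $i$ gives $-u = i^2 u \equiv i$, and since $-u\equiv u$ this forces $u\equiv i$, establishing the reverse direction.

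There is no real obstacle here: the statements are finite and the arithmetic is elementary. The closest thing to a subtlety is the bookkeeping involved in reducing $-1$ and $-u$ back to $1$ and $u$, i.e.\ the fact that $\Zi/(2)$ has characteristic $2$. Once that single observation is in place, both parts follow immediately, so the only genuine work is organizing the four-element case analysis cleanly.
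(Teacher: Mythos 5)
Your proof is correct and follows essentially the same route as the paper, which establishes this proposition by "reasoning using residue tables in the relevant quotient rings" — i.e., exactly the finite case check over $\Zi/(2)=\s{0,1,i,1+i}$ that you carry out. The observation that $-1\equiv 1$ in characteristic $2$ is the right bookkeeping point, and your algebraic argument for the second item is a clean way to package the table check.
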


\subsection{Matrices}
\label{ssec:matrices}

We write $e_j$ for the $j$-th standard basis vector and $M^\dagger$
for the conjugate transpose of the matrix $M$. If $R$ is a ring, we
sometimes write $R^{n\times n'}$ for the collection of $n\times n'$
matrices over $R$. We will use one-, two-, and four-level matrices
which act non-trivially on only one, two, or four of the components of
their input. These matrices will be defined using basic matrices. The
construction is best explained with an example. If
\[
V = \begin{bmatrix} v_{1,1} & v_{1,2} \\ v_{2,1} &
  v_{2,2} \end{bmatrix}
\]
is a 2-dimensional unitary matrix, then in 3 dimensions the two-level
operator of type $V$, denoted by $V_{[1,3]}$, is the matrix given
below.
\[
V_{[1,3]} = \begin{bmatrix} v_{1,1} & 0 & v_{1,2} \\ 0 & 1 & 0
  \\ v_{2,1} & 0 & v_{2,2} \end{bmatrix}
\]

\begin{definition}
  \label{def:onetwolevel}
  Let $W$ be an $n\times n$ unitary matrix, let $n\leq n'$, and let
  $a_1,\ldots,a_n\in[n']$. The \emph{$n$-level matrix of type $W$} is
  the $n'\times n'$ unitary matrix $W_{[a_1,\ldots,a_n]}$ defined by
  \[
  {W_{[a_1,\ldots,a_n]}}_{j,k} = \begin{cases} W_{j',k'} \mbox{ if }
    j=a_{j'} \mbox{ and } k = a_{k'}\\ I_{j,k} \mbox{
      otherwise.} \end{cases}
  \]
\end{definition}

Let $R$ be one of $\Z$, $\Zrtwo$, $\Zrminustwo$, $\Zi$ or $\Zomega$
and let $p$ be an element of $\Zomega$. We will be interested in
matrices of the form
\begin{equation}
  \label{eq:generic}
  V=\frac{1}{p^q}W
\end{equation}
where $W$ is a matrix over $R$ and $q\in\N$.

\begin{definition}
  \label{def:denomexp}
  Fix $R\in \s{\Z, \Zrtwo, \Zrminustwo, \Zi, \Zomega}$. If $V$ is a
  matrix of the form \eqref{eq:generic} and $q'\in\N$, then we say
  that $q'$ is a \emph{$p$-denominator exponent} of $V$ if
  \[
  p^{q'}V\in R^{m\times n}.
  \]
  The smallest such $q'$ is the \emph{least $p$-denominator exponent} of
  $V$, denoted $\lde_p(V)$.
\end{definition}

We sometimes omit $p$ when the base of the exponent is clear from the
context. Note that the notion of denominator exponent applies to
matrices of any dimension and we can therefore talk about the
denominator exponent of a vector or scalar.

\section{Circuits}
\label{sec:circuits}

In this section, we review basic circuit constructions which will be
useful below. A more detailed discussion of quantum circuits can be
found in Chapter 4 of \cite{NC}.

Let $\zeta$ be an $m$-th root of unity. We sometimes call $\zeta$ a
\emph{global phase of order $m$}. We think of these global phases as
gates acting on 0 qubits and in what follows we will be especially
interested in the global phases of order 2, 4, and 8, which we denote
$-1$, $i$, and $\omega$, respectively.  The single-qubit \emph{phase
  gate of order $m$} is defined as
\[
  P_\zeta =
  \begin{bmatrix}
      1 & 0 \\
      0 & \zeta
    \end{bmatrix}.
\]
We will be particularly interested in phase gates of order 2, 4, and 8
which we call the $Z$, $S$, and $T$ gates, respectively. Hence
\[
  Z=
  \begin{bmatrix}
      1 & 0 \\
      0 & -1
  \end{bmatrix},
\quad
  S = 
  \begin{bmatrix}
      1 & 0 \\
      0 & i
  \end{bmatrix},
\quad
\mbox{ and }
\quad
  T =  
  \begin{bmatrix}
      1 & 0 \\
      0 & \omega
  \end{bmatrix}. 
\]
In addition to phase gates, we will also use the single-qubit gates
$H$ and $X$ defined by
\[
  H  = 
  \frac{1}{\sqrt 2}\begin{bmatrix}
    1 & 1 \\
    1 & -1
  \end{bmatrix}
  \quad
  \mbox{ and }
  \quad
  X = 
  \begin{bmatrix}
    0 & 1 \\
    1 & 0
  \end{bmatrix}.
\]
The $H$ gate is the \emph{Hadamard} gate and the $X$ gate is the
\emph{NOT} gate.  The last single-qubit gate we will use is the $F$
gate defined below.
\[
  F  = 
  \frac{1}{2}\begin{bmatrix}
    1 + i\sqrt{2} & 1 \\
    1 & -1 + i\sqrt{2}
  \end{bmatrix}.
\]
The $F$ gate is not as common as the other single-qubit gates
introduced above. We note that $F^2 = iH$ and that $F$ can be
expressed as a product of better-known gates in Matsumoto-Amano normal
form \cite{MA08},
\[
  F = SHTSHTSHS\omega^{-1}.
\]
We will also make use of the two-qubit $H\otimes H$ gate as well as
the \emph{controlled} gates defined below.
\[
  CH =
  I_2 \oplus H,
\qquad
  CX =
  I_2 \oplus X,
  \quad
  \mbox{ and }
\qquad
  CCX = I_6 \oplus X.
\]
We will refer to these gates as the \emph{controlled}-$H$ gate,
\emph{controlled}-$X$ or \emph{CNOT} gate, and the
\emph{doubly-controlled}-$X$ or \emph{Toffoli} gate, respectively. In
general, if $G$ is a gate, then we write $C^nG$ for the
\emph{$n$-control}-$G$ gate.

As usual, \emph{circuits} are built from gates through composition and
tensor product. An \emph{ancilla} is a qubit used locally within a
circuit but on which the global action of the circuit is trivial.  We
say that a $2^n\times 2^n$ unitary matrix $W$ is exactly represented
by a circuit $D$ using $m$ \emph{clean} ancillas if for any $n$-qubit
state $\ket{\psi}$,
\[
  D\ket{\psi}\ket{0}^{\otimes m} = (W\ket{\psi})\ket{0}^{\otimes m}.
\]
If the circuit is independent of the initial state of the ancilla, it
is said to accept a \emph{dirty} ancilla. In particular, we say that a
$2^n\times 2^n$ unitary matrix $W$ is exactly represented by a circuit
$D$ using $m$ \emph{dirty} ancillas if for any $n$-qubit state
$\ket{\psi}$ and any $m$-qubit state $\ket{\phi}$,
\[
  D\ket{\psi}\ket{\phi} = (W\ket{\psi})\ket{\phi}.
\]
Note that a clean ancilla can always be used in place of a dirty
ancilla.

In order to characterize restricted Clifford+$T$ circuits, it is
helpful to establish some basic facts about the construction of
multi-level matrices over gate sets including the Toffoli gate.

\begin{proposition}
  \label{lem:perm}
  Any $2^n\times 2^n$ permutation matrix $V$ can be exactly
  represented by a circuit over the gate set $\{X, CX, CCX \}$ with at
  most one dirty ancilla.
\end{proposition}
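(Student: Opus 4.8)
The plan is to identify the $2^n$ computational basis states with the bit strings in $\s{0,1}^n$, so that a $2^n\times 2^n$ permutation matrix is exactly a permutation of $\s{0,1}^n$, and to show that the permutations realizable over $\s{X,CX,CCX}$ with a single dirty ancilla form all of the symmetric group on $\s{0,1}^n$. A circuit that restores its dirty ancilla can be concatenated with another such circuit while still using only one dirty ancilla, so the set of realizable permutations is closed under composition; hence it suffices to realize a generating set. I would use the transpositions $(x,\,x\oplus e_j)$ that exchange two basis states differing in a single coordinate $j$. These are the edge transpositions of the $n$-dimensional hypercube, and since the hypercube is connected, they generate the full symmetric group.

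Next I would observe that each such elementary transposition is, up to relabeling, a multiply-controlled NOT. Concretely, $(x,\,x\oplus e_j)$ flips coordinate $j$ precisely on the inputs whose remaining $n-1$ coordinates agree with $x$; this is the gate $C^{n-1}X$ with target $j$ and the other $n-1$ qubits as controls, conjugated by $X$ gates on those control qubits where $x$ is $0$ (so as to set the control polarities). Since $X$ and its products lie in the gate set and conjugation only prepends and appends gates, the problem reduces to implementing a single $C^{n-1}X$ gate over $\s{X,CX,CCX}$ with one dirty ancilla.

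The crux is this decomposition of the generalized Toffoli gate. For $n-1\leq 2$ controls the gate is already $X$, $CX$, or $CCX$ and no ancilla is needed. For more controls I would use the ancilla $g$ as a borrowed work bit, exploiting the conjugation identity illustrated, for three controls $c_1,c_2,c_3$ and target $t$ (writing $CCX_{(a,b,c)}$ for the Toffoli with controls $a,b$ and target $c$), by
\[
CCX_{(c_3,g,t)}\,CCX_{(c_1,c_2,g)}\,CCX_{(c_3,g,t)}\,CCX_{(c_1,c_2,g)},
\]
which flips $t$ exactly when $c_1c_2c_3=1$ and returns $g$ to its initial value \emph{whatever} that value was, so that $g$ may indeed be dirty. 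Iterating this idea along a cascade of Toffoli gates (the standard borrowed-ancilla construction for multiply-controlled NOTs, cf.\ Chapter~4 of \cite{NC}) implements $C^{n-1}X$ for any number of controls using the single ancilla $g$, which is restored at the end. The ancilla is genuinely needed here because on $n$ qubits the gate $C^{n-1}X$ already occupies all $n$ wires as controls and target, leaving no room for a work bit; the one extra wire supplies it. Because every elementary transposition reuses and restores the same qubit $g$, a single dirty ancilla suffices for the entire circuit.

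The main obstacle I anticipate is this last step: carrying out the generalized-Toffoli decomposition with exactly \emph{one} borrowed ancilla rather than the several clean work qubits that the naive ``compute the conjunction of the controls'' strategy would require, and verifying that the ancilla is correctly restored for an arbitrary initial value throughout the recursion.
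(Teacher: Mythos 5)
Your proposal is correct and follows essentially the same route as the paper: both reduce the permutation to transpositions of computational basis states, realize each such transposition as a fully-controlled NOT conjugated by $X$ gates, and implement that gate with a single dirty ancilla via the standard construction of Barenco et al. The only cosmetic difference is that you generate the symmetric group from Hamming-adjacent transpositions using connectivity of the hypercube, whereas the paper obtains arbitrary two-level $X$ gates from the Giles--Selinger algorithm and then reduces them to fully-controlled NOTs by a Gray code --- two phrasings of the same reduction.
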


\begin{proof}
  By the Giles-Selinger algorithm \cite{GS13} restricted to
  permutation matrices, $V$ can be represented by a circuit over
  two-level $X$ gates. Likewise, each two-level $X$ gate can be
  implemented over fully-controlled $X$ and single-qubit gates by
  using a Gray code (see, e.g., \cite[Sec.~4.5.2]{NC}). Each
  fully-controlled $X$ gate can be implemented with one dirty ancilla
  by \cite{BBC95}, which completes the proof.
\end{proof}

\begin{proposition}
  \label{lem:foo}
  Let $W$ be a $2^m\times 2^m$ unitary matrix and let $\mathcal{G}$ be
  a set of gates. If $CW$ can be exactly represented over $\{X, CX,
  CCX\} \cup \mathcal{G}$ using at most one dirty ancilla, then, for
  any $n\geq 1$, $C^nW$ can also be exactly represented over $\{X, CX,
  CCX\}\cup\mathcal{G}$. Moreover, a single clean ancilla suffices.
\end{proposition}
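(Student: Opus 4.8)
The plan is to reduce the construction of $C^nW$ to that of $CW$ by precomputing the conjunction of the $n$ control qubits onto a single clean ancilla. Write the control qubits as $c_1,\ldots,c_n$ and let $R$ denote the $m$-qubit register on which $W$ acts. I would introduce one clean ancilla $a$ initialized to $\ket{0}$ and build a three-stage circuit: first apply the $(n+1)$-qubit gate $C^nX$ with controls $c_1,\ldots,c_n$ and target $a$; then apply $CW$ with control $a$ and target $R$; and finally apply $C^nX$ again with the same controls and target.

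Correctness follows by tracking computational basis states. Since $a$ starts in $\ket{0}$, the first $C^nX$ sets $a$ to the conjunction $c_1\wedge\cdots\wedge c_n$. The subsequent $CW$ therefore applies $W$ to $R$ exactly on the branches in which every control equals $1$, and it leaves $a$ and the controls unchanged. The final $C^nX$ recomputes the same conjunction and returns $a$ to $\ket{0}$, so the global action on the ancilla is trivial and the overall effect on $c_1,\ldots,c_n,R$ is precisely $C^nW$. By linearity this extends from basis states to arbitrary inputs.

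It remains to check that each stage lies in the required gate set and that the ancilla budget is respected. The gate $C^nX$ is a permutation matrix, so by \cref{lem:perm} it can be exactly represented over $\s{X, CX, CCX}$ using at most one dirty ancilla; since it acts only on $c_1,\ldots,c_n,a$, any qubit of the target register $R$ may serve as that dirty ancilla. The middle stage $CW$ is representable over $\s{X, CX, CCX}\cup\mathcal{G}$ with at most one dirty ancilla by hypothesis, and since it acts only on $a$ and $R$, one of the control qubits, which exists because $n\geq 1$, may serve as its dirty ancilla. Thus every dirty ancilla required by a subcircuit is borrowed from a qubit already present in the $C^nW$ instance, and the only genuinely new qubit is the clean ancilla $a$.

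The main point to get right is the ancilla bookkeeping. The single new ancilla must be clean so that the first $C^nX$ genuinely computes the conjunction—a dirty ancilla would leave a residual term that would cause $CW$ to fire on the wrong branches—while the dirty ancillas demanded by \cref{lem:perm} and by the hypothesis on $CW$ must be supplied from disjoint sets of already-present qubits (the target register for the two $C^nX$ stages, a control qubit for the $CW$ stage) so that no subcircuit corrupts a qubit another stage depends on. Verifying this disjointness is the only delicate part; the rest is the standard compute--act--uncompute pattern.
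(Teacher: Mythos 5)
Your proof is correct and follows essentially the same compute--act--uncompute construction as the paper: a single clean ancilla receives the conjunction of the controls via $C^nX$, the hypothesized $CW$ fires off that ancilla, and the conjunction is uncomputed, with all dirty ancillas borrowed from the qubits already present. Your explicit bookkeeping of which existing qubits serve as dirty ancillas for each stage is a welcome elaboration of what the paper leaves implicit in its circuit diagram.
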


\begin{proof}
  This follows from standard techniques, e.g. \cite{BBC95}.  If $n =
  1$, then $CW$ can be implemented with a single dirty ancilla and
  thus also with a clean one. If $n> 1$, then the $C^n W$ gate can be
  implemented as follows, where each gate on the right has at least
  one dirty ancilla available for use.
  \[
  \Qcircuit @C=1em @R=.5em @!R {
          & \qw & \ctrl{4} & \qw & \qw \\
          & \vdots  & & \vdots & \\
          & \qw & \ctrl{2} & \qw & \qw  \\
          &  &  &  & \\
          & {/} \qw & \gate{W} & {/} \qw & \qw
  }
  \raisebox{-3.7em}{\quad=\quad}
  \Qcircuit @C=1em @R=.5em @!R {
          & \qw & \ctrl{3} & \qw & \ctrl{3} & \qw & \qw \\
          & \vdots  & & & & \vdots & \\
          & \qw & \ctrl{1} & \qw & \ctrl{1} & \qw & \qw  \\
          & \lstick{\ket{0}} & \gate{X}  & \ctrl{1} & \gate{X} & \qw & \rstick{\hspace{-1em}\ket{0}} \\
          & \qw & {/} \qw & \gate{W} & {/} \qw & \qw & \qw
  }
  \]
\end{proof}

\begin{corollary}
  \label{lem:multilevel}
  Let $W$ be a $2^m\times 2^m$ unitary matrix and let $\mathcal{G}$ be
  a set of gates.  If $CW$ can be exactly represented over $\{X, CX,
  CCX\} \cup \mathcal{G}$ with at most one dirty ancilla, then
  $W_{[a_1,a_2\dots, a_{2^m}]}$ is representable over $\{X, CX, CCX\}
  \cup \mathcal{G}$ with at most one clean ancilla.
\end{corollary}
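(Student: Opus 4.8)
The plan is to realize the multi-level matrix $W_{[a_1,\ldots,a_{2^m}]}$, of ambient dimension $2^n$, as a permutation conjugate of the multiply-controlled gate $C^{n-m}W$. The first step is to recognize that $C^{n-m}W$ is \emph{itself} a multi-level matrix of type $W$: setting $b_i := 2^n-2^m+i$ for $i\in[2^m]$, the gate $C^{n-m}W$ acts as $W$ on exactly the basis states whose first $n-m$ qubits are all $1$, so that $C^{n-m}W = W_{[b_1,\ldots,b_{2^m}]}$ in the sense of \Cref{def:onetwolevel}. Since $a_1,\ldots,a_{2^m}$ are distinct, I can choose a permutation $\sigma$ of $[2^n]$ with $\sigma(b_i)=a_i$ and let $P$ be the associated permutation matrix, so that $Pe_{b_i}=e_{a_i}$. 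Checking both sides on each basis vector then yields the factorization
\[
  W_{[a_1,\ldots,a_{2^m}]} = P\,\bigl(C^{n-m}W\bigr)\,P^{-1},
\]
which is what I would take as the backbone of the argument (here $n>m$, as holds for a genuine multi-level matrix).

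Next I would assemble a circuit from the three factors using results already in hand. Both $P$ and $P^{-1}$ are permutation matrices, hence representable over $\s{X, CX, CCX}$ with at most one dirty ancilla by \Cref{lem:perm}. The middle factor $C^{n-m}W$ has at least one control, so the hypothesis that $CW$ is representable over $\s{X, CX, CCX}\cup\mathcal{G}$ with one dirty ancilla lets me invoke \Cref{lem:foo} to represent $C^{n-m}W$ over the same gate set with a single clean ancilla. Composing the circuits for $P^{-1}$, then $C^{n-m}W$, then $P$ produces a circuit for $W_{[a_1,\ldots,a_{2^m}]}$ over $\s{X, CX, CCX}\cup\mathcal{G}$, which is the required conclusion apart from the ancilla count.

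The one delicate point, and where I expect the main obstacle to lie, is verifying that a \emph{single} ancilla suffices for the composite even though each of the three stages separately requests one. The resolution is an ancilla-reuse argument: the two permutation stages use their ancilla only as a dirty ancilla and therefore return it to its incoming state. Hence if the one physical ancilla starts clean in $\ket{0}$, it is still in $\ket{0}$ after running $P^{-1}$, is available as the clean ancilla that \Cref{lem:foo} demands for $C^{n-m}W$, is returned to $\ket{0}$ afterward, and is then used only as a dirty ancilla by $P$. The same qubit thus serves all three stages, so one clean ancilla is enough. Everything else is routine once $C^{n-m}W$ is identified with the canonical multi-level matrix $W_{[b_1,\ldots,b_{2^m}]}$.
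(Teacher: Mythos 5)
Your proposal is correct and follows essentially the same route as the paper, which also writes $W_{[a_1,\ldots,a_{2^m}]}$ as a permutation conjugate of a multiply-controlled $W$ and then invokes \cref{lem:perm} and \cref{lem:foo}. Your explicit identification of $C^{n-m}W$ with $W_{[b_1,\ldots,b_{2^m}]}$ and the ancilla-reuse argument simply spell out details the paper leaves implicit.
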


\begin{proof}
  Follows from \cref{lem:perm,lem:foo} by noting that there exists a
  $2^{n+m}$-dimensional permutation matrix $V$ such that
  \[
    W_{[a_1,a_2\dots, a_{2^m}]} = V^\dagger(C^nW)V.
  \]
\end{proof}

We can now use \cref{lem:multilevel} to give constructions of
multi-level matrices of different types over their uncontrolled
versions in the presence of the Toffoli gate.

\begin{proposition}
  \label{prop:intcircs}
  The operators
  \[
  \s{(-1)_{[a]},X_{[a,b]},(H\otimes H)_{[a,b,c,d]}},
  \]
  where $a$, $b$, $c$, and $d$ are distinct elements of $[n]$, can be
  exactly represented by quantum circuits over the gate set $\s{X, CX,
    CCX, H\otimes H}$ using at most one clean ancilla.
\end{proposition}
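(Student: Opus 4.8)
The plan is to prove each of the three operators is representable, using \Cref{lem:multilevel} as the main engine. That corollary reduces the construction of an $n$-level matrix of type $W$ to the construction of the \emph{controlled} gate $CW$ over $\s{X, CX, CCX}\cup\mathcal{G}$ with at most one dirty ancilla. So for each generator I would exhibit such a construction of its controlled version and then invoke \Cref{lem:multilevel}.

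First, for $(-1)_{[a]}$, I need to build $C(-1)$, the controlled global phase $-1$ acting on one target qubit. But a controlled $(-1)$ phase gate is exactly the $Z$ gate on the control qubit (since $C(-1) = \operatorname{diag}(1,1,1,-1) = \operatorname{diag}(1,-1)\otimes\operatorname{diag}(1,1)$ after accounting for which wire carries the phase), and more to the point $Z = CZ$-type diagonal gates can be obtained using the Toffoli-based tricks. The cleanest route, however, is to note that $Z$ is available because $HXH = Z$, but $Z$ itself is not in the gate set — so instead I would observe that $(-1)_{[a]}$ is a diagonal two-level matrix and that its controlled form reduces to constructing a doubly-controlled $Z$, which can be conjugated from $CCX$ via an $H\otimes H$ (or built directly once $Z$ is synthesized). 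Second, for $X_{[a,b]}$, the controlled version $CX_{[a,b]}$ is just a Toffoli-type gate $CCX$ up to wire relabelling, which is already in the gate set, so \Cref{lem:multilevel} applies immediately. Third, and this is where the real work lies, for $(H\otimes H)_{[a,b,c,d]}$ I must construct the controlled gate $C(H\otimes H)$ over $\s{X, CX, CCX, H\otimes H}$ with at most one dirty ancilla.

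The hard part will be synthesizing $C(H\otimes H)$, since $H$ alone is not in the gate set and the gate set only supplies $H\otimes H$ as an indivisible two-qubit block. The strategy I would pursue is to write $H\otimes H$ as a product whose controlled version decomposes into pieces each expressible over the allowed gates. A convenient identity is that $H\otimes H$ is a real orthogonal matrix in $U_4(\D)$ whose action can be realized as a sequence involving $CX$, $CCX$, a single $H\otimes H$, and phase corrections; controlling such a decomposition then turns single-qubit controls into Toffoli gates and leaves a single controlled-$(H\otimes H)$ core to handle. Concretely, I expect to use the fact that conjugating $H\otimes H$ by permutation (CNOT/Toffoli) gates lets me isolate the nontrivial $4\times 4$ rotation, and that the controlled form of this rotation can be assembled from $CCX$ gates and one application of $H\otimes H$ on an enlarged register, using a dirty ancilla to hold an intermediate control bit in the style of \Cref{lem:foo}.

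To keep the argument rigorous I would present the controlled constructions explicitly as small circuit diagrams (in the manner of the proof of \Cref{lem:foo}), verify by direct matrix computation that each diagram implements the desired controlled operator on the relevant subspace while acting as the identity on an ancilla, and then feed each resulting $CW$ into \Cref{lem:multilevel} to obtain the corresponding $n$-level matrix with at most one clean ancilla. The verification for $(-1)_{[a]}$ and $X_{[a,b]}$ is routine matrix bookkeeping; for $(H\otimes H)_{[a,b,c,d]}$ the verification amounts to checking that the proposed decomposition of $C(H\otimes H)$ multiplies out correctly, which I anticipate being the most delicate calculation but still a finite check over a fixed small unitary.
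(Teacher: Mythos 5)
Your overall skeleton matches the paper's: reduce via \cref{lem:multilevel} to constructing the singly-controlled generators over $\s{X, CX, CCX, H\otimes H}$ with one dirty ancilla. The cases $(-1)_{[a]}$ (a one-level matrix, incidentally, not a two-level one) and $X_{[a,b]}$ are handled correctly in spirit; the paper obtains $CZ$ with a dirty ancilla $d$ as $(H\otimes H)_{t,d}\,CX_{c\to t}\,(H\otimes H)_{t,d}$, which is the conjugation you gesture at, though note that the second Hadamard must land on a spectator wire where it cancels, not on a control wire of a $CCX$.

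The genuine gap is $C(H\otimes H)$. You correctly identify it as the crux, but you never supply the construction, and the strategy you sketch is circular: you say that controlling your decomposition of $H\otimes H$ ``leaves a single controlled-$(H\otimes H)$ core to handle'' --- which is exactly the operator to be built --- and your proposed role for the dirty ancilla (holding an intermediate control bit as in \cref{lem:foo}) is not the mechanism that makes the construction work. The mechanism the paper uses is different: because $H\otimes H$ is available only as an indivisible two-qubit block, one applies it to \emph{one target together with the dirty ancilla}, so that the two Hadamards hitting the ancilla cancel. Explicitly, with control $c$, targets $t_1,t_2$, and dirty ancilla $d$, let $G = CX_{t_2\to t_1}\, CCX_{c,t_1\to t_2}\, CX_{t_2\to t_1}$, which is the identity when $c=0$ and $\mathrm{SWAP}_{t_1t_2}$ when $c=1$. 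Then
\[
G\,(H\otimes H)_{t_2,d}\,G\,(H\otimes H)_{t_2,d} = C(H\otimes H)_{t_1,t_2}\otimes I_d,
\]
since for $c=0$ the two $H\otimes H$ blocks cancel, while for $c=1$ the conjugation relocates one of them to $(H\otimes H)_{t_1,d}$ and the product becomes $H_{t_1}H_{t_2}H_d^2=H_{t_1}\otimes H_{t_2}$. Some identity of this kind is indispensable; without it the proposition is unproved, since this is the only generator whose controlled form does not reduce to routine manipulation of classical gates.
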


\begin{proof}
  By \cref{lem:multilevel} it suffices to give constructions for the
  singly-controlled $Z$ and $H\otimes H$ gates using at most 
  a single dirty ancilla. We have 
  \[
  \Qcircuit @C=1em @R=.5em @!R {
          & \ctrl{1} & \qw \\
          & \gate{Z} & \qw \\
          & \qw & \qw
  }
  \raisebox{-1.8em}{\quad=\quad}
  \Qcircuit @C=1em @R=.5em @!R {
          & \qw & \ctrl{1} & \qw & \qw \\
          & \multigate{1}{H\otimes H} & \gate{X} & \multigate{1}{H\otimes H} & \qw \\
          & \ghost{H\otimes H} & \qw & \ghost{H\otimes H} & \qw
  }
  \]
  and it can be verified that the equality below holds.
  \[
  \Qcircuit @C=1em @R=.7em @!R {
          & \ctrl{1} & \qw \\
          & \multigate{1}{H\otimes H} & \qw \\
          & \ghost{H\otimes H} & \qw \\
          & \qw & \qw
  }
  \raisebox{-2.5em}{\quad=\quad}
  \Qcircuit @C=1em @R=.3em @!R {
          & \qw & \qw & \ctrl{2} & \qw & \qw & \qw & \ctrl{2} & \qw & \qw \\
          & \qw & \gate{X} & \ctrl{1} & \gate{X} & \qw & \gate{X} & \ctrl{1} & \gate{X} & \qw \\
          & \multigate{1}{H\otimes H} & \ctrl{-1} & \gate{X} & \ctrl{-1} & \multigate{1}{H\otimes H} & \ctrl{-1} & \gate{X} & \ctrl{-1} & \qw \\
          & \ghost{H\otimes H} & \qw & \qw & \qw & \ghost{H\otimes H} & \qw & \qw & \qw & \qw
  }
  \]
\end{proof}

\begin{corollary}
  \label{cor:intcircs}
  The operators
  \[
  \s{(-1)_{[a]},X_{[a,b]},(H\otimes H)_{[a,b,c,d]}, I_{2^{n-1}}\otimes
    H},
  \]
  where $a$, $b$, $c$, and $d$ are distinct elements of $[n]$, can be
  exactly represented by quantum circuits over the gate set $\s{X, CX,
    CCX, H}$ using at most one clean ancilla.
\end{corollary}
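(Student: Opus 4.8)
The plan is to deduce this corollary almost immediately from \cref{prop:intcircs}, the key observation being that within the gate set $\s{X, CX, CCX, H}$ the single-qubit Hadamard is strictly more expressive than the two-qubit $H\otimes H$ gate of \cref{prop:intcircs}. Concretely, the two-qubit gate factors as $H\otimes H = (H\otimes I)(I\otimes H)$, that is, as the application of a single-qubit $H$ to each of the two qubits on which it acts. Since these two single-qubit gates commute and are both available in $\s{X, CX, CCX, H}$, any circuit over $\s{X, CX, CCX, H\otimes H}$ can be rewritten as a circuit over $\s{X, CX, CCX, H}$ of the same width by replacing every occurrence of $H\otimes H$ with the corresponding pair of single-qubit Hadamards. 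This rewriting introduces no additional qubits and in particular no additional ancillas.

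With this rewriting in hand, the first three operators are handled directly. By \cref{prop:intcircs}, the operators $(-1)_{[a]}$, $X_{[a,b]}$, and $(H\otimes H)_{[a,b,c,d]}$ admit exact representations over $\s{X, CX, CCX, H\otimes H}$ using at most one clean ancilla. Applying the substitution described above converts each such representation into an exact representation over $\s{X, CX, CCX, H}$, still using at most one clean ancilla, since the ancilla count is unchanged by replacing $H\otimes H$ gates with pairs of single-qubit Hadamards.

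It remains only to account for the new operator $I_{2^{n-1}}\otimes H$, and this is the easiest case: by definition it is precisely the single-qubit Hadamard applied to the $n$-th qubit, hence a generator of $\s{X, CX, CCX, H}$, and it requires no ancilla at all. Since every operator in the stated list is therefore exactly representable over $\s{X, CX, CCX, H}$ with at most one clean ancilla, the proof is complete. I do not expect any genuine obstacle here; the only point to check is the elementary factorization $H\otimes H = (H\otimes I)(I\otimes H)$, which is exactly what lets the single-qubit $H$ gate both simulate the content of \cref{prop:intcircs} and supply the additional operator $I_{2^{n-1}}\otimes H$ that cannot be obtained from $H\otimes H$ alone.
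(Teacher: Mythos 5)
Your proposal is correct and matches the paper's (implicit) reasoning: the paper states this as an immediate corollary of \cref{prop:intcircs}, precisely because $H\otimes H=(H\otimes I)(I\otimes H)$ lets every circuit over $\s{X,CX,CCX,H\otimes H}$ be rewritten over $\s{X,CX,CCX,H}$ with no change in ancilla usage, and $I_{2^{n-1}}\otimes H$ is itself a generator. Nothing further is needed.
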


\begin{proposition}
  \label{prop:gaussiancircs}
  The operators
  \[
  \s{i_{[a]},X_{[a,b]},\omega H_{[a,b]}},
  \]
  where $a$ and $b$ are distinct elements of $[n]$, can be exactly
  represented by quantum circuits over the gate set $\s{X, CX, CCX,
    \omega H, S}$ using at most one clean ancilla.
\end{proposition}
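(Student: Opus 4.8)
The plan is to reduce, via \cref{lem:multilevel}, the construction of each listed operator to that of a suitable singly-controlled gate with at most one dirty ancilla. For the one-level phase $i_{[a]}$ I take $W=[i]$, a $1\times1$ unitary, whose controlled version is the $1$-qubit gate $C(i)=\operatorname{diag}(1,i)=S$; for the two-level $X_{[a,b]}$ I take $W=X$, with $C(X)=CX$. Since $S$ and $CX$ already belong to the gate set, \cref{lem:multilevel} immediately yields $i_{[a]}$ and $X_{[a,b]}$ with one clean ancilla. The entire content of the proposition therefore lies in the case $W=\omega H$: I must exhibit a circuit over $\s{X,CX,CCX,\omega H,S}$, using at most one dirty ancilla, that realizes $C(\omega H)=\operatorname{diag}(I_2,\omega H)$.

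Before building $C(\omega H)$ it helps to see why it cannot be done naively. Pushing the scalar through the control gives $C(\omega H)=(T\otimes I_2)\cdot CH$, and neither factor is available: $CH$ has entries $\pm1/\sqrt2\notin\Di$, and $T\otimes I_2$ is a bare $T$ gate, exactly the gate absent from the Gaussian set. More sharply, $\det C(\omega H)=\det(\omega H)=-i$, whereas every generator acting on two qubits has determinant in $\s{1,-1}$ (indeed $\det(CX)=-1$ and $\det(g\otimes I_2)=\det(g)^2\in\s{1,-1}$ for $g\in\s{X,S,\omega H}$). Hence no ancilla-free circuit exists, and an ancilla is genuinely needed. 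On three qubits the situation changes precisely because $\det(CCX)=-1$ while every other generator still contributes a square; since $\det(C(\omega H)\otimes I_2)=(-i)^2=-1$, any valid circuit must use an odd number of Toffoli gates acting on all three qubits. This is the mechanism that supplies the ``$T$-like'' phase that is forbidden on two qubits.

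Guided by this, I would build $C(\omega H)$ by factoring $\omega H=S^\dagger\,(S\omega H)$, where $\det(S\omega H)=i\cdot(-i)=1$. Because $S\omega H$ has determinant $1$, its controlled version admits a standard controlled-single-qubit (Euler/ABC) decomposition in which the only control-side phase is $\pm1=S^2$ and the conjugating single-qubit gates are products of $\omega H$, $S$, and $X$; together with two $CX$ gates this realizes $C(S\omega H)$ with no ancilla. It then remains to realize the residual controlled phase $C(S^\dagger)=\operatorname{diag}(1,1,1,-i)$, for which I use phase kickback through the ancilla: conjugating an $S^\dagger$ on the ancilla by a Toffoli, as in $CCX\cdot(I_4\otimes S^\dagger)\cdot CCX$, imprints the phase $-i$ exactly on the component $\ket{c=t=1}$ and returns the ancilla. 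Multiplying $C(S^\dagger)$ by $C(S\omega H)$ gives $C(\omega H)$, after which \cref{lem:multilevel} lifts it to $\omega H_{[a,b]}$ with a single clean ancilla.

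The main obstacle is the ancilla bookkeeping, and in particular meeting the \emph{dirty}-ancilla hypothesis of \cref{lem:multilevel}. The kickback written above restores a clean ($\ket 0$) ancilla but, for an unknown initial ancilla state $w$, leaves a spurious $(-1)^{wct}$ factor, so one Toffoli pair does not suffice for a dirty ancilla. I would resolve this exactly as \cref{prop:intcircs} does for $C(H\otimes H)$: apply the Toffoli-phase construction in a symmetric conjugation pattern (the additional Toffolis keeping their total number odd, consistent with the determinant count) so that the unknown ancilla state cancels and the net action is $C(\omega H)\otimes I$ on every input. This is precisely the standard dirty-ancilla technique of \cite{BBC95} already invoked for \cref{lem:foo}. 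Verifying that the composite acts as $C(\omega H)$ on the data qubits while restoring the ancilla for every ancilla state is the crux, and the determinant invariant both certifies that the Toffolis are unavoidable and pins down how many are required.
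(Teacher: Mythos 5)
Your reduction via \cref{lem:multilevel} to singly-controlled gates with one dirty ancilla, and your treatment of $i_{[a]}$ and $X_{[a,b]}$, match the paper, and the determinant observation that $C(\omega H)$ cannot be realized ancilla-free on two qubits is correct. However, the construction of $C(\omega H)$ itself has a genuine gap: you assert that $C(S\omega H)$, having determinant $1$, admits an ABC-type decomposition $(I\otimes A)\,CX\,(I\otimes B)\,CX\,(I\otimes C)$ with $A,B,C$ products of $\omega H$, $S$, and $X$, hence realizable with no ancilla. This is false. All of $X$, $S$, $\omega H$, and $CX$ are Clifford gates, so every ancilla-free two-qubit circuit over them is a Clifford operator up to phase; but $C(S\omega H)$ is not Clifford (it conjugates $I\otimes X$ to $\ket{0}\!\bra{0}\otimes X+\ket{1}\!\bra{1}\otimes Z$, which is not a Pauli). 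Equivalently, the Euler angles of $S\omega H$ force the standard ABC factors to contain a $y$-rotation by $\pi/4$, whose entries lie outside $\Domega$. So the non-Clifford content of $C(\omega H)$ cannot be pushed entirely into the residual controlled phase $C(S^\dagger)$: \emph{both} factors in your splitting require the Toffoli mechanism, and your plan supplies it for only one of them.

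The repair is essentially the paper's proof. First build $CS$ over $\s{X, CX, CCX, \omega H, S}$ with a single dirty ancilla by interleaving three Toffolis targeting the ancilla with the single-qubit word $S^\dagger,\ \omega H,\ (\omega H)^\dagger,\ S$ on that ancilla; the crux is that the conjugated word on the triggered subspace, $X S(\omega H)^\dagger X(\omega H)XS^\dagger$, collapses to $\pm i$ times the identity (using $(\omega H)^\dagger X(\omega H)=Z$ and $SXS^\dagger=Y$), so the ancilla returns to its arbitrary initial state while the data qubits acquire exactly the controlled phase. This is the concrete identity hiding behind your unverified ``symmetric conjugation pattern,'' and it is also where your $(-1)^{wct}$ and $(-i)^{w}$ spurious factors get cancelled. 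Second, obtain $C(\omega H)$ from $CS$ with no further ancilla use via the identity $S(\omega H)^\dagger S(\omega H)S=\omega H$, i.e.
\[
C(\omega H)\;=\;CS\cdot\bigl(I\otimes(\omega H)^\dagger\bigr)\cdot CS\cdot\bigl(I\otimes\omega H\bigr)\cdot CS,
\]
conjugating controlled phases by available single-qubit gates rather than invoking an ABC decomposition. With these two identities in place of your $C(S\omega H)$ step, the rest of your outline goes through.
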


\begin{proof}
  Again, it suffices to give constructions for the singly-controlled
  $S$ and $\omega H$ gates.  In this case it can be verified that both
  of the equalities below hold.
  \[
  \Qcircuit @C=1em @R=.7em @!R {
          & \ctrl{1} & \qw \\
          & \gate{S} & \qw \\
          & \qw & \qw
  }
  \raisebox{-2em}{\quad=\quad}
  \Qcircuit @C=1em @R=.35em @!R {
          & \qw & \ctrl{2} & \qw & \ctrl{2} & \qw & \qw & \ctrl{2} & \qw \\
          & \qw & \ctrl{1} & \qw & \ctrl{1} &\qw & \qw & \ctrl{1} & \qw \\
          & \gate{S^\dagger} & \gate{X} & \gate{\omega H} & \gate{X} & \gate{(\omega H)^\dagger} & \gate{S} & \gate{X} & \qw
  }
  \]
  \[
  \Qcircuit @C=1em @R=.7em @!R {
          & \ctrl{1} & \qw \\
          & \gate{\omega H} & \qw \\
          & \qw & \qw
  }
  \raisebox{-2em}{\quad=\quad}
  \Qcircuit @C=1em @R=.35em @!R {
          & \ctrl{1} & \qw & \ctrl{1} & \qw & \ctrl{1} & \qw \\
          & \gate{S} & \gate{\omega H} & \gate{S} & \gate{(\omega H)^\dagger} & \gate{S} & \qw \\
          & \qw & \qw & \qw & \qw & \qw & \qw 
  }
  \]
\end{proof}

\begin{corollary}
  \label{cor:gaussiancircs}
  The operators
  \[
  \s{i_{[a]},X_{[a,b]},\omega H_{[a,b]}, \omega I_n},
  \]
  where $a$ and $b$ are distinct elements of $[n]$, can be exactly
  represented by quantum circuits over the gate set $\s{X, CX, CCX, H,
    S}$ using at most one clean ancilla.
\end{corollary}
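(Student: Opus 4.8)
The plan is to deduce this corollary from \cref{prop:gaussiancircs} by observing that the only difference between the two gate sets is the global phase $\omega$: the corollary replaces the gate $\omega H$ by the ordinary Hadamard $H$, while adding the scalar operator $\omega I_n$ to the list of operators to be synthesized. I would therefore reduce everything to a single task, namely realizing the global phase $\omega$ using only the gates $H$ and $S$.

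The key computation is the single-qubit identity
\[
  (HS)^3 = \frac{1+i}{\sqrt{2}}\, I = \omega I.
\]
Granting this, two consequences follow immediately. First, applying the three-fold product $HSHSHS$ to any one of the qubits implements the scalar $\omega I_n$ on the whole register, with no ancilla and using only $H$ and $S$. Second, since $\omega H = (HS)^3 H = HSHSHSH$ and, dually, $(\omega H)^\dagger = \omega^{-1}H$ is likewise a word in $H$ and $S$ (using $S^\dagger = S^3$), the single-qubit gate $\omega H$ appearing in \cref{prop:gaussiancircs} can be expressed exactly over $\s{H,S}$.

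To assemble the proof, I would take the circuits produced by \cref{prop:gaussiancircs} for $i_{[a]}$, $X_{[a,b]}$, and $\omega H_{[a,b]}$ over $\s{X, CX, CCX, \omega H, S}$ and perform the purely local substitution replacing every $\omega H$ (respectively $(\omega H)^\dagger$) gate by its expansion over $\s{H,S}$. This substitution changes neither the qubits acted upon nor the ancilla count, so the three operators are obtained over $\s{X, CX, CCX, H, S}$ with at most one clean ancilla; combined with the ancilla-free construction of $\omega I_n$ above, this yields all four operators within the stated ancilla budget. The only real content is spotting the identity $(HS)^3 = \omega I$ --- a routine $2\times 2$ verification --- after which the argument is entirely bookkeeping; there is no genuine obstacle once the global phase has been isolated as the sole discrepancy between the two gate sets.
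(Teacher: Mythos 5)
Your proposal is correct and matches the paper's proof, which simply cites \cref{prop:gaussiancircs} together with the identity $\omega = SHSHSH$ (the same scalar identity as your $(HS)^3=\omega I$, up to cyclic reordering). The substitution of $\omega H$ by its expansion over $\s{H,S}$ and the ancilla-free realization of $\omega I_n$ are exactly the intended content of that one-line argument.
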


\begin{proof}
  Follows from \cref{prop:gaussiancircs} and the fact that $\omega =
  SHSHSH$.
\end{proof}

\begin{proposition}
  \label{prop:realcircs}
  The operators
  \[
  \s{(-1)_{[a]},X_{[a,b]},H_{[a,b]}},
  \]
  where $a$ and $b$ are distinct elements of $[n]$, can be exactly
  represented by quantum circuits over the gate set $\s{X, CX, CCX, H,
    CH}$ using at most one clean ancilla.
\end{proposition}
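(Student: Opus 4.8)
The plan is to reuse the reduction that drove the proofs of \cref{prop:intcircs,prop:gaussiancircs}: by \cref{lem:multilevel}, a multi-level operator of type $W$ can be synthesised over $\s{X, CX, CCX, H, CH}$ with at most one clean ancilla as soon as its singly-controlled version $CW$ is available over the same gate set using at most one dirty ancilla. Since $(-1)_{[a]}$ coincides with the two-level matrix $Z_{[a',a]}$ for any $a'\neq a$, the three target operators $(-1)_{[a]}$, $X_{[a,b]}$, and $H_{[a,b]}$ reduce respectively to constructing the controlled gates $CZ$, $CX$, and $CH$.

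First I would observe that $CX$ and $CH$ require no work at all: they are already primitive gates of $\s{X, CX, CCX, H, CH}$. The only construction left to supply is therefore $CZ$. Here I would invoke the identity $Z = HXH$ and conjugate the target wire of a $CX$ by Hadamards, giving $CZ = (I_2\otimes H)\,CX\,(I_2\otimes H)$; because a single $H$ is itself a generator in this case, this uses no ancilla. Feeding the three controlled gates into \cref{lem:multilevel} then produces $(-1)_{[a]}$, $X_{[a,b]}$, and $H_{[a,b]}$ over $\s{X, CX, CCX, H, CH}$ with at most one clean ancilla, as required.

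I expect this to be the easiest of the four circuit propositions, with essentially no obstacle to overcome. The contrast with \cref{prop:intcircs} is instructive: there only the two-qubit gate $H\otimes H$ was available, so even building the controlled $Z$ and controlled $H\otimes H$ gates required auxiliary wires to absorb spurious Hadamards; here both $H$ and $CH$ are provided outright, so every required controlled gate is either a generator or a one-line conjugation identity. The only point deserving care is the bookkeeping identification $(-1)_{[a]} = Z_{[a',a]}$, which lets the one-level phase operator be handled by the two-level machinery of \cref{lem:multilevel} through the controlled gate $CZ$ rather than through a bare $Z$ gate.
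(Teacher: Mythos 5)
Your proposal is correct and follows essentially the same route as the paper: reduce everything through \cref{lem:multilevel} to the controlled gates $CZ$, $CX$, and $CH$, observe that the latter two are generators, and supply $CZ$ by Hadamard conjugation of $CX$. The only cosmetic difference is that the paper obtains $(-1)_{[a]}$ by citing \cref{prop:intcircs} (whose $CZ$ construction uses $H\otimes H$ and a dirty ancilla), whereas you build $CZ = (I_2\otimes H)\,CX\,(I_2\otimes H)$ directly -- a harmless simplification made possible because a bare $H$ is a generator here.
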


\begin{proof}
  By \cref{prop:intcircs}, $(-1)_{[a]}$ can be represented by a
  quantum circuit over $\s{X, CX, CCX, H\otimes H}$ and hence also
  $\s{X, CX, CCX, H, CH}$. Since $CH$ is already in the generating set
  the proof is complete.
\end{proof}

\begin{proposition}
  \label{prop:unrealcircs}
  The operators
  \[
  \s{(-1)_{[a]},X_{[a,b]},F_{[a,b]}},
  \]
  where $a$ and $b$ are distinct elements of $[n]$, can be exactly
  represented by quantum circuits over the gate set $\s{X, CX, CCX,
    F}$ using at most one clean ancilla.
\end{proposition}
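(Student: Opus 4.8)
The plan is to follow the strategy of \cref{prop:intcircs,prop:gaussiancircs}. By \cref{lem:multilevel}, to represent $X_{[a,b]}$, $(-1)_{[a]}$, and $F_{[a,b]}$ it suffices to represent, using at most one dirty ancilla, the controlled gates obtained by taking $W=X$, $W=-1$, and $W=F$ in that corollary, namely $CX$, the controlled $(-1)$ global phase $C(-1)=Z$, and $CF$. The gate $CX$ is already a generator (and $X_{[a,b]}$, being a permutation matrix, is in any case covered by \cref{lem:perm}), so the real work is to build the single-qubit gate $Z$ and the two-qubit gate $CF$ over $\s{X, CX, CCX, F}$.

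The gate $Z$ is immediate from the defining relations of $F$. First I would record that $F^2=iH$ gives $F^4=(iH)^2=-I$, hence $F^6=-iH$. Since $HXH=Z$, this yields the single-qubit identity $Z=(-iH)X(iH)=F^6XF^2$, a circuit over $\s{X,F}$ with no ancilla; \cref{lem:multilevel} then produces $(-1)_{[a]}$ with at most one clean ancilla. (The same relation also gives $CZ=(I\otimes F^6)\,CX\,(I\otimes F^2)$, should a controlled phase be preferable.)

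The crux, and the step I expect to be the main obstacle, is the construction of $CF$. Because $F$ is not diagonal, the pure phase-kickback trick used for $CS$ in \cref{prop:gaussiancircs} does not apply directly; instead I would realize the non-diagonal part by conjugating the available $CX$ (or the $CZ$ just built) with uncontrolled powers of $F$, in the manner of the standard decomposition of a controlled single-qubit gate, so that the single-qubit factors collapse to the identity when the control is off and compose to $F$ when it is on. The obstruction is the residual phase on the control line: in the Gaussian case this phase was supplied by an $S$ gate, but here no such gate exists, since every circuit over $\s{X, CX, CCX, F}$ has entries in $\Drminustwo$ and $i\notin\Drminustwo$, so $S=\operatorname{diag}(1,i)$ is not representable. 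The way around this is to exploit $\det F=-1$: choosing the decomposition to use an \emph{odd} number of entangling layers forces the leftover control phase into $\s{1,-1}$, hence to be a representable $Z$ (or the identity) rather than the forbidden $S$. Concretely, I would search for an explicit identity $F=\pm\,AXBXCXD$ with $A,B,C,D\in\langle F,X\rangle$ and $ABCD=I$, giving $CF=(Z^{\epsilon}\otimes I)\,(I\otimes A)\,CX\,(I\otimes B)\,CX\,(I\otimes C)\,CX\,(I\otimes D)$ with the $Z^{\epsilon}$ acting on the control, and then verify it by a direct matrix computation; failing an ancilla-free version, one falls back on a Toffoli and a dirty ancilla to kick back the $\pm1$ phase as in \cref{prop:gaussiancircs}. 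Producing and checking this explicit circuit while keeping every intermediate gate inside $\s{X, CX, CCX, F}$ is the technical heart of the argument.
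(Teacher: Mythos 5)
Your reduction via \cref{lem:multilevel} to the gates $Z$ and $CF$, and your construction of $Z=F^6XF^2$ and $CZ=(I\otimes F^6)\,CX\,(I\otimes F^2)$ from $F^2=iH$ and $F^6=-iH$, match the paper's proof exactly. Your diagnosis of the $CF$ step is also correct and insightful: since $\det F=-1$, the usual two-entangling-layer decomposition $F=e^{i\alpha}AXBXC$ forces $e^{i\alpha}=\pm i$, i.e.\ an $S$ gate on the control, which is unavailable because $i\notin\Drminustwo$; an odd number of entangling layers brings the residual control phase into $\s{1,-1}$, which $CZ$ can supply.

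The gap is that you never actually produce the decomposition: the existence of an identity $F=\pm\,AXBXCXD$ with $A,B,C,D\in\langle F,X\rangle$ and $ABCD=I$ is precisely the nontrivial content of the proposition, and ``search for it and verify by matrix computation'' leaves it unestablished. Your fallback does not close this gap either: a dirty ancilla with phase kickback only helps with the leftover \emph{phase}, which is already the easy part (it is $Z^{\epsilon}$, realizable by the $CZ$ you built); the genuine difficulty is realizing the non-diagonal conjugators inside $\langle F,X\rangle$, and an ancilla does nothing for that. The paper supplies the missing algebraic facts: $(ZXF)^2=I$ and $X(ZXF)X(ZXF)X=ZXF$ (equivalently $(X\cdot ZXF)^3=I$, which one can check since $-ZF\in SU(2)$ has trace $-1$). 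These give $C(ZXF)=(I\otimes X)(I\otimes ZXF)\,CX\,(I\otimes ZXF)(I\otimes X)$ --- the target-line gates cancel to $I$ when the control is off and compose to $ZXF$ when it is on --- and then $CF=CX\cdot CZ\cdot C(ZXF)$ since $F=XZ\cdot ZXF$. This is a three-entangling-layer circuit of exactly the shape you predicted, but predicting its shape is not the same as exhibiting it.
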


\begin{proof}
  To show that $CZ$ is representable over the gate set, it can be
  observed that the equality below holds, since $F^2 = iH$ and $F^6=-iH$.
  \[
  \Qcircuit @C=1em @R=.7em @!R {
          & \ctrl{1} & \qw \\
          & \gate{Z} & \qw
  }
  \raisebox{-1em}{\quad=\quad}
  \Qcircuit @C=1em @R=.7em @!R {
          & \qw & \ctrl{1} & \qw & \qw \\
          & \gate{F^2} & \gate{X} & \gate{F^6} & \qw
  }
  \]
  The construction of $CF$ is somewhat more involved, but can be
  obtained from standard constructions (e.g., \cite{BBC95}) by first noting
  that $(ZXF)^2$ and $X(ZXF)X(ZXF)X = ZXF$. The $CF$ gate can then be constructed as below.
  \[
  \Qcircuit @C=1em @R=.7em @!R {
          & \ctrl{1} & \qw \\
          & \gate{F} & \qw
  }
  \raisebox{-1em}{\quad=\quad}
  \Qcircuit @C=1em @R=.7em @!R {
          & \ctrl{1} & \ctrl{1} & \qw & \qw & \qw & \qw & \ctrl{1} & \qw & \qw & \qw & \qw & \qw \\
          & \gate{X} & \gate{Z} & \gate{X} & \gate{Z} & \gate{X} & \gate{F} & \gate{X} & \gate{Z} & \gate{X} & \gate{F} & \gate{X} & \qw
  }
  \]
\end{proof}

\section{Number-Theoretic Characterizations}
\label{sec:characterizations}

\subsection{The \texorpdfstring{$\D$}{D} case}
\label{ssec:integral}

We start by studying the group of $n\times n$ unitary matrices over
$\D$. Since $X$, $CX$, $CCX$, and $H\otimes H$ have entries in $\D$,
any circuit over the gate set $\s{X, CX, CCX, H\otimes H}$ must
represent a unitary matrix over $\D$. Here, we show the converse: any
unitary matrix over $\D$ can be represented by a circuit over $\s{X,
  CX, CCX, H\otimes H}$. To prove this, it is sufficient to establish
that every unitary over $\D$ can be expressed as a product of the
following generators
\begin{equation}
  \label{eq:evenintgens}
  \s{(-1)_{[a]},X_{[a,b]},(H\otimes H)_{[a,b,c,d]}},
\end{equation}
where $a$, $b$, $c$, and $d$ are distinct elements of $[n]$. Indeed,
by \cref{prop:intcircs}, all of the above generators can be exactly
represented by quantum circuits over the gate set $\s{X, CX, CCX,
  H\otimes H}$.

If $V$ is a matrix over $\D$, then $V$ can be written as
\begin{equation}
  \label{eq:integralmatrix}
  V=\frac{1}{2^q}W
\end{equation}
where $q\in\N$ and $W$ is a matrix over $\Z$. We will consider 
$2$-denominator exponents of such matrices.

The following four lemmas are devoted to proving the analogue of Giles
and Selinger's \emph{Column Lemma} (Lemma 5 in \cite{GS13}). Here, the
goal is to establish that any unit vector over $\D$ can be reduced to
a standard basis vector by multiplying it on the left by an
appropriately chosen sequence of generators. We consider the case of
vectors of dimension $n<4$ first, before moving on to higher
dimensions.

\begin{lemma}
  \label{lem:integralcolumnevendenomlessthanfour}
  Let $n <4$ and let $j\in [n]$. If $v$ is an $n$-dimensional unit
  vector over $\D$, then there exist generators $G_1, \ldots, G_\ell$
  from \eqref{eq:evenintgens} such that $G_1\cdots G_\ell v = e_j$.
\end{lemma}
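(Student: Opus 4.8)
The plan is to exploit the fact that the hypothesis $n<4$ sharply restricts which generators in \eqref{eq:evenintgens} are even available. The generator $(H\otimes H)_{[a,b,c,d]}$ requires four \emph{distinct} indices $a,b,c,d\in[n]$, which is impossible when $n<4$; so the only usable generators are the sign flips $(-1)_{[a]}$ and the transpositions $X_{[a,b]}$. Both are signed permutation matrices, so left-multiplying $v$ by any product of them merely permutes the entries of $v$ and toggles some signs, leaving the multiset $\s{|v_1|,\ldots,|v_n|}$ unchanged. Since the target $e_j$ has a single entry equal to $1$ and the rest equal to $0$, I expect the lemma to reduce to a purely number-theoretic claim: any unit vector over $\D$ of dimension $n<4$ is already a signed standard basis vector, i.e.\ exactly one entry is $\pm1$ and the others vanish.

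To establish this claim, I would write $v=\frac{1}{2^q}u$ with $u$ a vector over $\Z$ and $q=\lde_2(v)$ the least $2$-denominator exponent, so that not all entries of $u$ are even. The unit condition $\sum_k v_k^2=1$ then reads $\sum_k u_k^2=4^q$ (recall that $\D\subseteq\mathbb{R}$, so all entries are real). The key input is that every integer square is $\equiv0$ or $1\pmod 4$, whence $\sum_k u_k^2$ is congruent modulo $4$ to the number of odd entries among $u_1,\ldots,u_n$. If $q\geq1$ then $4^q\equiv0\pmod4$, forcing the number of odd entries to be a multiple of $4$; but there are only $n<4$ entries in total, so this number must be $0$ and every $u_k$ is even, contradicting the minimality of $q$. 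Hence $q=0$, and $\sum_k u_k^2=1$ with integer entries forces exactly one $u_k=\pm1$ and the rest $0$.

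Once the claim is in hand the reduction is immediate: letting $a$ be the index of the unique nonzero entry, I apply $X_{[a,j]}$ (omitted if $a=j$) to move it into position $j$, then apply $(-1)_{[j]}$ if its sign is negative, obtaining $e_j$. The hard part is not this final step but the realization that for $n<4$ the Hadamard-type generator disappears, together with the modulo-$4$ counting argument that pins $q$ to $0$; this is exactly where the hypothesis $n<4$ does its work, and it explains why the small-dimensional case is treated separately as the base case for the higher-dimensional column lemmas to follow.
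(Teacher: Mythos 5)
Your proof is correct and follows essentially the same route as the paper's: write $v=u/2^q$ with $q=\lde_2(v)$, use the fact that odd squares are $1\pmod 4$ to force $q=0$ when $n<4$, conclude $v=\pm e_{j'}$, and finish with $X_{[a,b]}$ and $(-1)_{[a]}$. The preliminary observation that $(H\otimes H)_{[a,b,c,d]}$ is unavailable for $n<4$ is a nice framing but not needed for the argument.
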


\begin{proof}
  Write $v$ as $v=u/2^q$ with $u\in\Z^n$ and $q=\lde_2(v)$. Since $v$
  is a unit vector, we have $v^\dagger v = 1$ and thus $4^q=\sum
  u_k^\dagger u_k =\sum u_k^2$. The square of any odd number is
  congruent to 1 modulo 4. Thus when $n<4$, we have $\sum u_k^2\equiv
  0 \pmod{4}$ only if every $u_k$ is even. This implies that
  $\lde_2(v)=0$ when $n<4$ and therefore that $v=\pm e_{j'}$ for some
  $j'\in[n]$. Hence one of
    \[
    v=e_j, \quad (-1)_{[j]}v=e_j, \quad X_{[j,j']}v = e_j, \quad
      \mbox{or} \quad X_{[j,j']}(-1)_{[j']}v = e_j
    \]
    must hold, which completes the proof.
\end{proof}  

Because $(H\otimes H)_{[a,b,c,d]}$ is a four-level matrix, we consider
its action on certain 4-dimensional vectors in the lemma below. This
is in contrast with Giles and Selinger's algorithm, for which only
one- and two-level matrices are needed.

\begin{lemma}
  \label{lem:integralvector}
  If $u_1,\ldots,u_4\in\Z$ are such that $u_1^2 \equiv \ldots \equiv
  u_4^2 \equiv 1 \pmod{4}$, then there exist $m_1,\ldots,m_4$ such
  that
  \[
  (H \otimes H)(-1)^{m_1}_{[1]}(-1)^{m_2}_{[2]}(-1)^{m_3}_{[3]}
  (-1)^{m_4}_{[4]} \begin{bmatrix} u_1 \\ u_2 \\ u_3
    \\ u_4 \end{bmatrix} = \begin{bmatrix} u_1' \\ u_2' \\ u_3'
    \\ u_4' \end{bmatrix}
  \]
  for some $u_1',\ldots,u_4'\in\Z$ such that $u_1'\equiv\ldots\equiv
  u_4'\equiv 0 \pmod{2}$.
\end{lemma}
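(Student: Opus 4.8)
The plan is to use the sign operators $(-1)_{[k]}$ to normalize the four entries into a common residue class modulo $4$, after which $H\otimes H$ will automatically return an even vector. First I would observe that $u_k^2\equiv 1\pmod 4$ forces each $u_k$ to be odd, so that $u_k\equiv 1$ or $u_k\equiv 3\pmod 4$. Invoking the elementary fact recorded earlier---that $u\equiv 3\pmod 4$ if and only if $-u\equiv 1\pmod 4$---I would choose each $m_k\in\s{0,1}$ so that $(-1)^{m_k}u_k\equiv 1\pmod 4$. Writing $a_k=(-1)^{m_k}u_k$, this brings all four entries to the single residue $1$ modulo $4$.

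Next I would write $H\otimes H$ explicitly as $\tfrac12$ times the $4\times 4$ matrix whose successive rows are $(1,1,1,1)$, $(1,-1,1,-1)$, $(1,1,-1,-1)$, and $(1,-1,-1,1)$. The $k$-th output component $u_k'$ is then one half of the corresponding signed sum of the $a_j$. The crucial observation is that each of these four rows has signed entries whose total is a multiple of $4$, namely $4,0,0,0$; consequently, when every $a_j\equiv 1\pmod 4$, each signed sum is $\equiv 0\pmod 4$. Hence every $u_k'$ is not merely an integer but satisfies $u_k'\equiv 0\pmod 2$, which is exactly the conclusion.

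The argument is short, and the point I would flag as the crux---rather than a genuine obstacle---is that a \emph{single} sign normalization of the input simultaneously clears all four output components. This is not a priori obvious, but it follows from the balanced structure of $H\otimes H$: in every row of $2(H\otimes H)$ the $\pm 1$ entries occur in proportions whose signed total is divisible by $4$. Once this is noted, no case analysis on the individual residues of the $u_k$ is required, and the lemma drops out immediately from the two residue computations above.
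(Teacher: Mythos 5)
Your proposal is correct and follows essentially the same route as the paper: negate entries so that each becomes $\equiv 1 \pmod 4$, then apply $H\otimes H$ and check that all four outputs are even. The paper leaves the final verification as an ``it can then be verified'' step, whereas you make it explicit via the observation that each row of $2(H\otimes H)$ has signed entries summing to a multiple of $4$; this is a nice way to carry out that check, but it is not a different argument.
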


\begin{proof}
  If $u\in\Z$ is such that $u^2\equiv 1 \pmod{4}$, then $u\equiv 1
  \pmod{4}$ or $u\equiv 3 \pmod{4}$. Furthermore, if $u\equiv 3
  \pmod{4}$, then $-u\equiv 1\pmod{4}$. Hence, given
  $u_1,\ldots,u_4\in\Z$ such that $u_1^2 \equiv \ldots \equiv u_4^2
  \equiv 1 \pmod{4}$, we can find $m_1,\ldots,m_4$ such that
  $(-1)^{m_1}u_1 \equiv \ldots \equiv (-1)^{m_4}u_4 \equiv 1
  \pmod{4}$. It can then be verified that
  \[
  (H \otimes H) \begin{bmatrix} (-1)^{m_1}u_1 \\ (-1)^{m_2}u_2
    \\ (-1)^{m_3}u_3 \\ (-1)^{m_4}u_4 \end{bmatrix} =
  \begin{bmatrix} u_1' \\ u_2' \\ u_3' \\ u_4' \end{bmatrix}
  \]
  for some $u_1'\equiv\ldots\equiv u_4'\equiv 0 \pmod{2}$.
\end{proof}

\begin{lemma}
  \label{lem:evendenommodfour}
  Let $n \geq 4$. If $v$ is an $n$-dimensional unit vector over $\D$
  and $\lde_2(v)>0$, then there exist generators $G_1, \ldots,
  G_\ell$ from \eqref{eq:evenintgens} such that $G_1\cdots G_\ell v =
  v'$ and $\lde_2(v')<\lde_2(v)$.
\end{lemma}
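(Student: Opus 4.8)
The plan is to mimic the denominator-reduction step of the Giles--Selinger Column Lemma, but using the four-level Hadamard $(H\otimes H)_{[a,b,c,d]}$ in place of a two-level Hadamard. First I would write $v=u/2^q$ with $u\in\Z^n$ and $q=\lde_2(v)>0$. Since $v$ is a unit vector, $v^\dagger v=1$ gives $\sum_k u_k^2 = 4^q$. As $q>0$ we have $4^q\equiv 0 \pmod 4$; and since the square of an odd integer is $\equiv 1 \pmod 4$ while the square of an even integer is $\equiv 0\pmod 4$, the number of odd entries of $u$ must be a multiple of $4$. Moreover this number is positive: if every $u_k$ were even, then $u=2u'$ for some $u'\in\Z^n$ and $v=u'/2^{q-1}$ would contradict $\lde_2(v)=q$. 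Hence $u$ has exactly $4m$ odd entries for some $m\geq 1$, and the hypothesis $n\geq 4$ guarantees there is room for at least one such group of four.

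Next I would partition the set of indices $k$ with $u_k$ odd into $m$ disjoint groups of four, say $\s{a,b,c,d}$. For each such group the four entries $u_a,u_b,u_c,u_d$ all satisfy $u^2\equiv 1\pmod 4$, so \cref{lem:integralvector} supplies signs $m_1,\ldots,m_4$ such that the operator $(H\otimes H)(-1)^{m_1}_{[a]}(-1)^{m_2}_{[b]}(-1)^{m_3}_{[c]}(-1)^{m_4}_{[d]}$, acting on these four coordinates, replaces $u_a,u_b,u_c,u_d$ by even integers while fixing every other coordinate. Each such operator is a product of generators from \eqref{eq:evenintgens}, and because distinct groups act on disjoint coordinates these operations commute and do not interfere.

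Letting $G_1\cdots G_\ell$ denote the composite of all these operations and setting $w=G_1\cdots G_\ell u$, every entry of $w$ is then even: the formerly odd entries were turned even by \cref{lem:integralvector}, while the even entries, lying in no group, were left untouched. Writing $w=2w'$ with $w'\in\Z^n$, we get $v'=G_1\cdots G_\ell v = w/2^q = w'/2^{q-1}$, so $\lde_2(v')\leq q-1<\lde_2(v)$, as required.

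The only real content is the counting argument modulo $4$: it is precisely the fact that the odd entries occur in multiples of four that lets us clear all of them with the four-level Hadamard, and this is exactly why $n\geq 4$ is needed so that at least one group of four odd entries fits. The verification that $H\otimes H$ sends four appropriately signed odd integers to even integers is already dispatched by \cref{lem:integralvector}, so no further calculation is required here.
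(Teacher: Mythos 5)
Your proposal is correct and follows essentially the same route as the paper's proof: count the odd entries modulo $4$ via $\sum u_k^2 = 4^q \equiv 0 \pmod 4$, group them into fours, and clear each group with \cref{lem:integralvector}. The extra observations you include (that the number of odd entries is positive by minimality of $q$, and that disjoint groups do not interfere) are correct and simply make explicit what the paper leaves implicit.
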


\begin{proof}
  Write $v$ as $v=u/2^q$ where $u\in\Z^n$ and $q>1$. Since $v$ is a
  unit vector we have $v^\dagger v=1$ and thus $4^q = \sum u_k^\dagger
  u_k = \sum u_k^2$ since $u$ is real. The number of $u_k$ such that
  $u_k^2\equiv 1 \pmod{4}$ is therefore congruent to 0 modulo
  4. Hence, we can group these entries in sets of size 4 and apply
  \cref{lem:integralvector} to each such set in order to reduce the
  2-denominator exponent of the vector.
\end{proof}

\begin{lemma}
  \label{lem:integralcolumnevendenom}
  Let $j\in [n]$. If $v$ is an $n$-dimensional unit vector over $\D$,
  then there exist generators $G_1, \ldots, G_\ell$ from
  \eqref{eq:evenintgens} such that $G_1\cdots G_\ell v = e_j$.
\end{lemma}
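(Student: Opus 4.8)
The plan is to combine the three preceding lemmas into an induction on the $2$-denominator exponent $q=\lde_2(v)$, after first disposing of the low-dimensional case. If $n<4$, the statement is precisely \cref{lem:integralcolumnevendenomlessthanfour}, so I would assume $n\geq 4$ for the remainder of the argument and induct on $q$.

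For the base case $q=0$, the vector $v$ already lies in $\Z^n$. Since $v$ is a unit vector, $v^\dagger v=\sum u_k^2=1$, which forces exactly one component to equal $\pm 1$ and all others to vanish; hence $v=\pm e_{j'}$ for some $j'\in[n]$. Exactly as in the closing step of \cref{lem:integralcolumnevendenomlessthanfour}, one of the products $v$, $(-1)_{[j]}v$, $X_{[j,j']}v$, or $X_{[j,j']}(-1)_{[j']}v$ equals $e_j$, which supplies the required sequence of generators.

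For the inductive step I would assume $q=\lde_2(v)>0$ and that the claim holds for every unit vector over $\D$ with strictly smaller denominator exponent. By \cref{lem:evendenommodfour} there are generators $G_1,\ldots,G_k$ from \eqref{eq:evenintgens} such that $v'=G_1\cdots G_k v$ satisfies $\lde_2(v')<\lde_2(v)$. Each generator is unitary, so $v'$ is again an $n$-dimensional unit vector over $\D$, and the induction hypothesis yields generators $G_{k+1},\ldots,G_\ell$ with $G_{k+1}\cdots G_\ell v'=e_j$. Concatenating the two sequences gives $G_1\cdots G_\ell v=e_j$, closing the induction.

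The argument is essentially bookkeeping once the earlier lemmas are in place, and I expect no genuine obstacle. The one point that makes everything terminate is that \cref{lem:evendenommodfour} guarantees a \emph{strict} decrease of the denominator exponent rather than merely a decrease by one; all of the substantive work—reducing a block of four entries by sign changes followed by $H\otimes H$—has already been carried out in \cref{lem:integralvector} and \cref{lem:evendenommodfour}.
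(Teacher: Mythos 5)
Your proposal is correct and follows exactly the paper's own argument: handle $n<4$ via \cref{lem:integralcolumnevendenomlessthanfour}, then induct on $\lde_2(v)$, using the four-case analysis for the base case and \cref{lem:evendenommodfour} for the inductive step. The paper states this more tersely but the content is identical.
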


\begin{proof}
  The case of vectors of dimension $n<4$ was treated in
  \cref{lem:integralcolumnevendenomlessthanfour} so we assume that
  $n\geq 4$ and we proceed by induction on the least 2-denominator
  exponent of $v$.
  \begin{itemize}
  \item If $\lde_2(v)=0$, then $v$ is a unit vector in $\Z^n$. Hence
    $v=\pm e_{j'}$ for some $j'\in [n]$ and one of
    \[
    v=e_j, \quad (-1)_{[j]}v=e_j, \quad X_{[j,j']}v = e_j, \quad
      \mbox{or} \quad X_{[j,j']}(-1)_{[j']}v = e_j
    \]
    must hold.
  \item If $\lde_2(v)>0$, apply \cref{lem:evendenommodfour} to reduce
    the 2-denominator exponent of $v$. \qedhere
  \end{itemize}
\end{proof}

We can now use \cref{lem:integralcolumnevendenom} to prove that every
unitary matrix with entries in $\D$ can be written as a product of
generators. This, together with \cref{prop:intcircs} establishes our
characterization of circuits over the gate set $\s{X, CX,
  CCX, H \otimes H}$.

\begin{theorem}
  \label{thm:even}
  If $V$ is an $n$-dimensional unitary matrix with entries in $\D$,
  then there exist generators $G_1, \ldots, G_\ell$ from
  \eqref{eq:evenintgens} such that $G_1 \cdots G_\ell V = I$.
\end{theorem}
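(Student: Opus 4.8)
The plan is to prove this by induction on the dimension $n$, using \cref{lem:integralcolumnevendenom} as the key ingredient to clear one column at a time. The overall strategy mirrors the outer induction of the Giles--Selinger algorithm described in \cref{sec:overview}: reduce the first column of $V$ to $e_1$, observe that the resulting matrix splits off a trivial first row and column, and recurse on the smaller unitary block.

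First I would write $V = \mleft[\begin{array}{c|c} v & V'\end{array}\mright]$, where $v$ is the first column of $V$. Since $V$ is unitary, $v$ is an $n$-dimensional unit vector over $\D$, so \cref{lem:integralcolumnevendenom} (with $j=1$) supplies generators $G_1,\ldots,G_\ell$ from \eqref{eq:evenintgens} such that $G_1\cdots G_\ell v = e_1$. Left-multiplying $V$ by this product fixes the first column to $e_1$. Here I would invoke unitarity again: because $G_1\cdots G_\ell V$ is unitary and its first column is $e_1$, the first row must also be $e_1^\dagger$ (the off-diagonal entries of the first row have to vanish for the rows to be orthonormal). Thus
\[
G_1\cdots G_\ell V = \mleft[\begin{array}{c|c} 1 & 0 \\ \hline 0 & V''\end{array}\mright]
\]
for some $(n-1)$-dimensional unitary $V''$ over $\D$.

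Next I would set up the induction. The base case $n=1$ is immediate, since a $1\times 1$ unitary over $\D$ is $\pm 1$, handled by the generator $(-1)_{[1]}$. For the inductive step, the generators clearing $V''$ act on indices $\s{2,\ldots,n}$ and can be regarded as generators from \eqref{eq:evenintgens} of the same type in dimension $n$ that fix the first coordinate; applying them to the block-diagonal matrix above does not disturb the already-fixed first row and column. By the induction hypothesis there exist generators $G_1',\ldots,G_{\ell'}'$ reducing $V''$ to the identity, and combining the two sequences yields a product of generators $G_1'\cdots G_{\ell'}' G_1\cdots G_\ell V = I$, as desired.

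I expect the main subtlety to be purely bookkeeping rather than mathematical: one must check that a generator acting on the lower-right $(n-1)\times(n-1)$ block is itself of the form \eqref{eq:evenintgens} in dimension $n$ (this is clear for $(-1)_{[a]}$ and $X_{[a,b]}$, and holds for $(H\otimes H)_{[a,b,c,d]}$ provided the four indices lie in $\s{2,\ldots,n}$, which requires $n-1\geq 4$). For the small-dimensional blocks that arise near the bottom of the recursion, one falls back on \cref{lem:integralcolumnevendenomlessthanfour}, which already covers dimensions below $4$ and guarantees the column can still be cleared using only one- and two-level generators. The only genuine content has already been discharged in the Column Lemma, so the remaining argument is the standard constrained Gaussian elimination.
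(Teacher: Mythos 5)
Your proposal is correct and follows essentially the same route as the paper, which proves \cref{thm:even} by iteratively applying \cref{lem:integralcolumnevendenom} to the columns of $V$ in the manner of the Giles--Selinger outer induction sketched in \cref{sec:overview}. The bookkeeping point you flag about four-level generators in small residual blocks is handled exactly as you suggest, via \cref{lem:integralcolumnevendenomlessthanfour}, which shows that in dimension below $4$ the least denominator exponent is already $0$ and only one- and two-level generators are needed.
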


\begin{proof}
  By iteratively applying \cref{lem:integralcolumnevendenom} to the
  columns of $V$.
\end{proof}

\begin{corollary}
  \label{cor:intevencharac}
  A matrix $V$ can be exactly represented by an $n$-qubit circuit over
  $\s{X, CX, CCX, {H\otimes H}}$ if and only if $V\in
  U_{2^n}(\D)$. Moreover, a single ancilla always suffices to
  construct a circuit for $V$.
\end{corollary}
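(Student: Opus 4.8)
The plan is to prove the two implications separately and then verify the ancilla bound. For the forward direction, I would observe that every generator of the gate set lies in $U_{2^n}(\D)$: the entries of $X$, $CX$, and $CCX$ are integers, hence elements of $\D$, while the entries of $H\otimes H$ are all $\pm 1/2\in\D$. Since $\D$ is a ring, it is closed under addition and multiplication, and tensoring a gate with an identity block does not take entries outside of $\D$. Thus any finite product of such gates---that is, any matrix represented by a circuit over $\s{X, CX, CCX, H\otimes H}$---has entries in $\D$; being a product of unitaries, it is itself unitary, so it lies in $U_{2^n}(\D)$.

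For the converse, suppose $V\in U_{2^n}(\D)$. By \cref{thm:even} there exist generators $G_1,\ldots,G_\ell$ from \eqref{eq:evenintgens} with $G_1\cdots G_\ell V = I$, so that $V = G_\ell^{-1}\cdots G_1^{-1}$. Each generator is an involution, since $(-1)^2=1$, $X^2=I$, and $(H\otimes H)^2 = H^2\otimes H^2 = I$; hence $G_k^{-1}=G_k$ for every $k$, and $V = G_\ell\cdots G_1$ is itself a product of generators from \eqref{eq:evenintgens}. By \cref{prop:intcircs}, each such generator can be exactly represented by a circuit over $\s{X, CX, CCX, H\otimes H}$ using at most one clean ancilla, and concatenating these sub-circuits yields a circuit over $\s{X, CX, CCX, H\otimes H}$ that represents $V$.

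The remaining point is the ancilla count. Naively concatenating $\ell$ sub-circuits would appear to consume up to $\ell$ ancillas, but each ancilla used in \cref{prop:intcircs} is \emph{clean}: it is initialized to $\ket{0}$ and returned to $\ket{0}$ at the end of the sub-circuit. Consequently the same single clean ancilla can be reused for every factor in the product, so one ancilla suffices for the entire circuit. I expect this reuse argument to be the only step requiring genuine care; the two implications themselves are immediate consequences of \cref{thm:even} and \cref{prop:intcircs}, the substantive work having already been carried out in the column lemmas preceding \cref{thm:even}.
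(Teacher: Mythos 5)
Your proposal is correct and follows the same route the paper intends: the forward direction is closure of $\D$ under the ring operations, and the converse combines \cref{thm:even} with \cref{prop:intcircs}, reusing the single clean ancilla across the concatenated sub-circuits. Your extra observation that each generator in \eqref{eq:evenintgens} is an involution (so that inverting $G_1\cdots G_\ell V=I$ stays within the generating set) is a nice touch the paper leaves implicit, but it is not a departure in method.
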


\subsubsection{Super-integral Clifford+\texorpdfstring{$T$}{T} operators}

One might wonder whether the $H \otimes H$ gate can be replaced with
the $H$ gate.  Since the $H$ gate lies strictly outside of
$U_{2^n}(\D)$, having denominator $\sqrt{2}$, we get a slightly more
general gate set. Circuits over this gate set generate matrices of the
form
\begin{equation}
  \label{eq:integralmatrixalt}
  V=\frac{1}{\sqrt{2}^q}W
\end{equation}
where $q\in\N$ and $W$ is a matrix over $\Z$.

We now leverage \cref{thm:even} and \cref{cor:intcircs} to show that
every unitary matrix of the form in \cref{eq:integralmatrixalt} can be
represented by a circuit over $\s{X, CX, CCX, H }$. For these
matrices, we use $\sqrt{2}$-denominator exponents. We extend the set
of generators from \eqref{eq:evenintgens} with a matrix of the form
$I\otimes H$. Thus the relevant generators are now
\begin{equation}
  \label{eq:intgens}
  \s{(-1)_{[a]},X_{[a,b]},(H\otimes H)_{[a,b,c,d]}, I_{n/2}\otimes H}
\end{equation}
where $a$, $b$, $c$, and $d$ are distinct elements of $[n]$, and
$I_{n/2}\otimes H$ is only well-defined when $n$ is even.  As the
extra generator is only available in even dimensions, we start by
showing that there are no odd-dimension unitary matrices of the form
of \cref{eq:integralmatrixalt} with odd $q$. The proof of this fact is
due to Xiaoning Bian \cite{bian}.

\begin{lemma}
  \label{lem:integraldenomexpprop}
  If $V\neq 0$ is as in \eqref{eq:integralmatrixalt}, then all the
  $\sqrt{2}$-denominator exponents of $V$ are congruent modulo~2.
\end{lemma}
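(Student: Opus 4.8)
The plan is to show directly that any two $\sqrt{2}$-denominator exponents of $V$ differ by an even integer, which is precisely the assertion that they are all congruent modulo~2. Suppose $q_1$ and $q_2$ are both $\sqrt{2}$-denominator exponents of $V$, and assume without loss of generality that $q_1 \leq q_2$. By \cref{def:denomexp}, the matrices $W_1 = \sqrt{2}^{q_1} V$ and $W_2 = \sqrt{2}^{q_2} V$ both have entries in $\Z$, and since they arise from the same $V$ they satisfy $W_2 = \sqrt{2}^{\,q_2 - q_1}\, W_1$.

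The key step reduces this matrix identity to a statement about a single scalar. Because $V\neq 0$, the integer matrix $W_1$ is nonzero, so it has some entry $a = (W_1)_{j,k}$ that is a nonzero integer. Reading off the same entry of $W_2$ gives $(W_2)_{j,k} = \sqrt{2}^{\,q_2 - q_1}\, a$, and this must be an integer since $W_2$ is a matrix over $\Z$. I would then invoke the irrationality of $\sqrt{2}$: for a nonzero integer $a$, the product $\sqrt{2}^{\,m}\, a$ lies in $\Z$ precisely when $m$ is even, as $\sqrt{2}^{\,m}$ is irrational whenever $m$ is odd. Hence $q_2 - q_1$ is even, giving $q_1 \equiv q_2 \pmod{2}$.

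Equivalently, one may phrase the argument through the $\sqrt{2}$-adic valuation on $\Zrtwo$: since $\sqrt{2}$ generates the prime above $2$ and $(2) = (\sqrt{2})^2$, every nonzero rational integer has even $\sqrt{2}$-adic valuation, so the parity of a denominator exponent is pinned down by the least valuation occurring among the entries of $W_1$. I expect the only point requiring any care is the reduction from the full matrix to a single nonzero entry; once $V\neq 0$ is used to guarantee that such an entry exists, the irrationality of $\sqrt{2}$ does all the remaining work, and no genuine obstacle remains.
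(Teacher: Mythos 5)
Your proof is correct and follows essentially the same route as the paper's: both express the two integer matrices as $W' = \sqrt{2}^{\,q'-q}W$ and conclude from $V\neq 0$ and the irrationality of $\sqrt{2}$ that the exponent difference must be even. You merely spell out the reduction to a single nonzero entry, which the paper leaves implicit.
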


\begin{proof}
  Suppose that $q<q'$ are two $\sqrt{2}$-denominator exponents of
  $V$. Then $V= W/\sqrt{2}^q =W'/\sqrt{2}^{q'}$ for some integer
  matrices $W$ and $W'$. Assume without loss of generality that
  $q<q'$. Then
  \[
  W'=\sqrt{2}^{q'}V=\sqrt{2}^{q'-q}W
  \]
  so that $\sqrt{2}^{q'-q}W$ is an integer matrix. Hence $q\equiv q'
  \pmod{2}$, since $V\neq 0$ and $\sqrt{2}\notin\Z$.
\end{proof}

\begin{lemma}
  \label{lem:columnodd}
  If $v$ is an $n$-dimensional unit vector of the form $v =
  (1/\sqrt{2})^qu$ where $u$ is an integer vector and $q$ is odd, then
  there exist generators $G_1,\ldots, G_\ell$ from \eqref{eq:intgens}
  such that
  \[
  G_1\cdots G_\ell v = \frac{1}{\sqrt{2}}\begin{bmatrix} 1 \\ 1 \\ 0 \\ \vdots \\ 0 \end{bmatrix}.
  \]
\end{lemma}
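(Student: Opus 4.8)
The plan is to mirror the even-denominator argument of \cref{lem:integralvector} and \cref{lem:integralcolumnevendenom}, except that instead of reducing to a standard basis vector I reduce to the fixed odd-parity target $\frac{1}{\sqrt2}(1,1,0,\ldots,0)^T$. Write $v=u/\sqrt2^{\,q}$ with $u\in\Z^n$ and $q$ odd. Since $v$ is a unit vector, $v^\dagger v=1$ gives $\sum_k u_k^2=2^q$. I would then induct on $q$, at each stage lowering the exponent by $2$ so that it remains odd and eventually reaches the base value $q=1$.

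For the base case $q=1$ we have $\sum_k u_k^2=2$, which forces exactly two of the coordinates to equal $\pm1$ and all others to vanish (no entry can have absolute value $\geq 2$). A signed permutation assembled from the generators $X_{[a,b]}$ and $(-1)_{[a]}$ moves these two nonzero coordinates to positions $1$ and $2$ and adjusts their signs to $+1$, producing exactly $\frac{1}{\sqrt2}(1,1,0,\ldots,0)^T$. For the inductive step $q\geq 3$, we use that $q\geq 2$ implies $2^q\equiv 0\pmod 4$; since even squares are $\equiv 0$ and odd squares $\equiv 1\pmod 4$, the number of odd $u_k$ is divisible by $4$. Grouping the odd coordinates into blocks of four and applying \cref{lem:integralvector} to each block (using the four-level generators $(H\otimes H)_{[a,b,c,d]}$ together with sign flips) renders every coordinate even while leaving coordinates outside the block untouched. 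The resulting integer numerator $u'$ is then all even; writing $u'=2u''$ exhibits the vector as $u''/\sqrt2^{\,q-2}$, so its $\sqrt2$-denominator exponent drops to $q-2$, which is again odd. Applying the induction hypothesis to this vector and concatenating the two blocks of generators yields the required sequence $G_1\cdots G_\ell$.

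I expect no serious obstacle, as the argument reuses \cref{lem:integralvector} almost verbatim; the only points needing care are the base-case count (that $\sum_k u_k^2=2$ leaves precisely two $\pm1$ entries) and the bookkeeping that extracting the common factor of $2$ lowers the exponent by exactly $2$ and preserves its parity, so the induction is well-founded and halts at $q=1$. It is worth noting that the extra generator $I_{n/2}\otimes H$ is not needed for this reduction at all: applying $I_{n/2}\otimes H$ would send $\frac{1}{\sqrt2}(1,1,0,\ldots,0)^T$ to $e_1$ and thereby flip the parity of the exponent, so it is instead reserved for the later step that passes from the odd-parity target to a standard basis vector.
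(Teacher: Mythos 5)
Your proof is correct and follows essentially the same route as the paper's: induction on the odd exponent $q$, with the base case $q=1$ handled by a signed permutation of the two $\pm 1$ entries and the inductive step grouping the (multiple-of-four many) odd entries into blocks of four and applying \cref{lem:integralvector} to drop the exponent by $2$. Your phrasing of the mod-4 count ($2^q\equiv 0\pmod 4$ for $q\geq 2$) is in fact slightly cleaner than the paper's, and your closing remark that $I_{n/2}\otimes H$ is not needed here matches the paper's use of that generator only later, in \cref{thm:int}.
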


\begin{proof}
  By induction on $q$. For the base case, it suffices to observe that
  since $\sum u_k^2 = 2$, there exist $j, j'$ such that $u_j =
  (-1)^{m_j}$ and $u_{j'} = (-1)^{m_{j'}}$ and the other entries of
  $u$ are all 0.  It can then be verified that
  \[
    X_{[0,j]}X_{[1,j']}(-1)^{m_j}_{[j]}(-1)^{m_{j'}}_{[j']}v =
    \frac{1}{\sqrt{2}}\begin{bmatrix} 1 \\ 1 \\ 0 \\ \vdots
      \\ 0 \end{bmatrix}.
  \]
  Now suppose $q\geq 3$. Then $\sum u_k^2 = 2^q=4^{q'}$ and hence the
  number of $u_k^2\equiv 1 \pmod{4}$ is congruent to $0$ modulo $4$.
  Then, as in \cref{lem:evendenommodfour}, we can group these entries
  in sets of size $4$ and apply \cref{lem:integralvector} to each set
  to reduce the $\sqrt{2}$-denominator exponent of the vector by $2$.
\end{proof}

Note also that the proof of \cref{lem:columnodd} implies that there
are no unit vectors of the form $v = u/\sqrt{2}^q$ with odd $q\geq 3$
and dimension $n<4$.

\begin{lemma}
  \label{lem:odd}
  There are no odd-dimensional unitary matrices $V=W/\sqrt{2}^q$ such
  that $W$ is an integer matrix and $q$ is odd.
\end{lemma}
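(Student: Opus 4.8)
The plan is to obtain a contradiction from the determinant of $V$, bypassing the constructive column-reduction machinery used in the preceding lemmas. Suppose, for contradiction, that $V$ is an $n\times n$ unitary matrix with $n$ odd and $V=W/\sqrt{2}^q$ for some integer matrix $W$ and some odd $q$. Since $V$ is unitary it is invertible, so $\det V\neq 0$ and $|\det V|=1$.

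Taking determinants of $V=W/\sqrt{2}^q$ gives $\det V=\det W/\sqrt{2}^{\,qn}$, since $V$ is $n\times n$ and the scalar $1/\sqrt{2}^q$ is pulled out of each of its $n$ columns. As $W$ is an integer matrix, $\det W\in\Z$, and $\det W\neq 0$ because $\det V\neq 0$. Taking absolute values and using $|\det V|=1$ then yields
\[
|\det W| = \sqrt{2}^{\,qn}.
\]

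The key step is a parity obstruction. Since $q$ and $n$ are both odd, their product $qn$ is odd, so $\sqrt{2}^{\,qn}=2^{qn/2}$ is irrational; equivalently, squaring gives $(\det W)^2=2^{qn}$, whose $2$-adic valuation $qn$ is odd, whereas the $2$-adic valuation of a perfect square must be even. In either formulation this contradicts $\det W\in\Z\setminus\s{0}$, which completes the proof.

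I expect no serious obstacle here: the argument is short and self-contained, the only points requiring care being the observation that unitarity forces $\det W\neq 0$ (so that the valuation argument genuinely applies) and the elementary fact that a nonzero integer cannot equal an odd power of $\sqrt{2}$. One could alternatively argue at the level of a single column via \cref{lem:integraldenomexpprop} together with the remark following \cref{lem:columnodd}, but the determinant gives the cleanest route, as it converts the global parity of $qn$ directly into a $2$-adic valuation obstruction.
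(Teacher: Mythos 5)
Your proof is correct, and it takes a genuinely different route from the paper's. The paper proves \cref{lem:odd} by induction on the dimension $n$: it invokes \cref{lem:integraldenomexpprop} and \cref{lem:columnodd} to reduce the first column of $V$ to $\tfrac{1}{\sqrt{2}}(1,1,0,\dots,0)^T$, then uses unitarity of the resulting matrix to peel off a $2\times 2$ Hadamard block, leaving a smaller odd-dimensional unitary of the same form with odd least $\sqrt{2}$-denominator exponent, which yields the contradiction by the induction hypothesis. Your determinant argument bypasses all of that constructive machinery: from $\det V = \det W/\sqrt{2}^{\,qn}$ and $|\det V|=1$ you get $|\det W|=\sqrt{2}^{\,qn}$ with $\det W$ a nonzero integer, which is impossible when $qn$ is odd by the parity of the $2$-adic valuation of a perfect square. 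Each step checks out, and the observation that $\det W\neq 0$ is not even strictly needed since $\sqrt{2}^{\,qn}>0$ forces it. What the two approaches buy: yours is shorter, self-contained, and independent of the column-reduction lemmas (it would survive even if the surrounding synthesis machinery were reorganized), whereas the paper's inductive proof stays within the constructive framework used throughout \cref{ssec:integral} and reuses lemmas that are needed anyway for \cref{thm:int}. Your argument is arguably the cleaner proof of this particular statement.
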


\begin{proof}
  By induction on $n$. If $n=1$, then the only possibility is $V = 1$,
  hence there is no such unitary with odd least $\sqrt{2}$-denominator
  exponent. Now consider $n\geq 3$ and assume that $V=W/\sqrt{2}^q$
  where $q$ is odd and $W$ is an integer matrix.  Let $v$ be the first
  column of $V$. By \cref{lem:integraldenomexpprop}, all the
  $\sqrt{2}$-denominator exponents of $v$ are odd and by
  \cref{lem:columnodd} there exists a unitary transformation
  $G=G_1\cdots G_\ell U$ such that
  \[
  GV= \left[\begin{array}{c|ccc} 
		\frac{1}{\sqrt{2}} & & & \\ 
		\frac{1}{\sqrt{2}} & & & \\ 
		0 & & V' & \\ 
		\vdots & & & \\ 
		0 & & & 
	\end{array}\right].
  \]
  Let $u_1, u_2$ be the first two columns of $(GV)^\dagger$. Since
  $(GV)^\dagger$ is unitary, we know that $u_1^\dagger u_1 =
  u_2^\dagger u_2 = 1$, and $u_1^\dagger u_2 = u_2^\dagger u_1 =
  0$. In can then be observed from the unit condition on $u_1$ and
  $u_2$, that they each have one additional $\pm(1/\sqrt{2})$ entry,
  and are $0$ everywhere else.  Further, by the orthogonality
  condition it follows that these entries both occur on the same row
  $j$. Hence there exists $m$ such that
  \[
  (-1)_{[2]}^m X_{[2,j]} \begin{bmatrix} u_1 & u_2 \end{bmatrix}  = 
	\frac{1}{\sqrt{2}}\begin{bmatrix} 1 & 1 \\ 1 & -1 \\ 0 & 0 \\ \vdots & \vdots \\ 0 & 0 \end{bmatrix}.
  \]
  Thus
  \[
  UV((-1)_{[2]}^m X_{[2,j]})^\dagger = \left[\begin{array}{c|c} 
		H & 0 \\ \hline 0 & V''
	\end{array}\right]
  \]
  where $V''$ is a unitary matrix of the form of
  \cref{eq:integralmatrixalt} that has odd dimension and, by
  \cref{lem:integraldenomexpprop}, odd least $\sqrt{2}$-denominator
  exponent, a contradiction.
\end{proof}

Having ruled out matrices with odd dimension and odd
$\sqrt{2}$-denominator exponent, we can now prove our theorem.

\begin{theorem}
  \label{thm:int}
  If $V=W / \sqrt{2}^q$ is an $n$-dimensional unitary matrix such that
  $W$ is an integer matrix, then there exist generators $G_1, \ldots,
  G_\ell$ from \eqref{eq:intgens} such that $G_1 \cdots G_\ell V = I$.
\end{theorem}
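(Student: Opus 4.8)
The plan is to reduce the claim to the already-settled $\D$ case by a parity analysis of the $\sqrt{2}$-denominator exponent of $V$. By \cref{lem:integraldenomexpprop}, all $\sqrt{2}$-denominator exponents of $V$ share a common parity (note $V\neq 0$ since it is unitary), so it is meaningful to call $V$ even or odd according to this parity, and this dichotomy organizes the whole proof.

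First I would dispatch the even case. If $V$ admits an even $\sqrt{2}$-denominator exponent $q$, then $\sqrt{2}^q = 2^{q/2}$, so $V = W/2^{q/2}$ is a unitary matrix with entries in $\D$. Since the generators of \eqref{eq:evenintgens} are among those of \eqref{eq:intgens}, \cref{thm:even} directly produces generators $G_1,\ldots,G_\ell$ from \eqref{eq:intgens} with $G_1\cdots G_\ell V = I$, finishing this case.

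The odd case is where the extra generator $I_{n/2}\otimes H$ is needed, and the key preliminary is to check that it is even available. If the exponent of $V$ is odd, then \cref{lem:odd} forbids $n$ from being odd, so $n$ is even and $I_{n/2}\otimes H$ is well-defined. Writing $I_{n/2}\otimes H = M/\sqrt{2}$ for an integer matrix $M$, the product $(I_{n/2}\otimes H)\,V = MW/\sqrt{2}^{q+1} = MW/2^{(q+1)/2}$ is unitary and, because $q+1$ is even, has entries in $\D$. Applying \cref{thm:even} to it yields generators $G_1,\ldots,G_\ell$ from \eqref{eq:evenintgens} with $G_1\cdots G_\ell (I_{n/2}\otimes H)\,V = I$; appending $I_{n/2}\otimes H$ to the reducing sequence then exhibits generators from \eqref{eq:intgens} taking $V$ to $I$.

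Most of the real content lives in the supporting lemmas, so the theorem itself should be short. Within this proof the only genuine obstacle is the odd-dimensional exclusion of \cref{lem:odd}: it is precisely what guarantees that $I_{n/2}\otimes H$ exists in the odd case, and without it the parity-flipping step would have no generator to invoke. The one insight to record is that a single application of $I_{n/2}\otimes H$ turns any odd-exponent unitary into one over $\D$, collapsing the problem onto \cref{thm:even}.
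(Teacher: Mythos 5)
Your proof is correct and follows essentially the same route as the paper: split on the parity of the $\sqrt{2}$-denominator exponent, reduce the even case directly to \cref{thm:even}, and in the odd case invoke \cref{lem:odd} to guarantee that $I_{n/2}\otimes H$ is available and use it to flip the parity. The paper's version is just a terser rendering of the same argument.
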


\begin{proof}
  If $q$ is even, the result follows from \cref{thm:even}. If $q$ is
  odd, then by \cref{lem:odd} $n$ must be even, and so
  $(I_{n/2}\otimes H)V$ is a matrix with entries in $\D$. Hence the
  result follows by applying \cref{thm:even} to $(I_{n/2}\otimes H)V$.
\end{proof}

\begin{corollary}
  \label{cor:intcharac}
  A matrix $V$ can be exactly represented by an $n$-qubit circuit over
  $\s{X, CX, CCX, H}$ if and only if $V$ is a $2^n$-dimensional
  unitary matrix such that $V=W/\sqrt{2}^q$ for some integer matrix 
  $W$ and some $q\in\N$. Moreover, a single ancilla always suffices to
  construct a circuit for $V$.
\end{corollary}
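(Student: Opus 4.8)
The plan is to combine \cref{thm:int} with \cref{cor:intcircs}, treating the two implications separately. For the forward (soundness) direction, I would observe that $X$, $CX$, and $CCX$ have entries in $\Z$, while $H$ has entries in $\frac{1}{\sqrt{2}}\Z$, so every gate in the set has entries in $\Z[1/\sqrt{2}]$. Since a circuit is built by composition and tensor product, the unitary it implements on all of its wires---ancillas included---again has entries in $\Z[1/\sqrt{2}]$ and can therefore be written as $W'/\sqrt{2}^{q'}$ for some integer matrix $W'$ and some $q'\in\N$. Restricting to the block in which the ancillas enter and exit in the state $\ket{0}$ yields $V$, which is thus itself a unitary of the claimed form $V=W/\sqrt{2}^q$.

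For the backward (completeness) direction, I would start from a $2^n$-dimensional unitary $V=W/\sqrt{2}^q$ and apply \cref{thm:int} to obtain generators $G_1,\ldots,G_\ell$ from \eqref{eq:intgens} with $G_1\cdots G_\ell V = I$, so that $V = G_\ell^\dagger\cdots G_1^\dagger$. The key observation is that each generator in \eqref{eq:intgens}---namely $(-1)_{[a]}$, $X_{[a,b]}$, $(H\otimes H)_{[a,b,c,d]}$, and $I_{n/2}\otimes H$---is both Hermitian and unitary, hence equal to its own inverse. Consequently $G_i^\dagger = G_i$ for each $i$, and $V = G_\ell\cdots G_1$ is itself a product of generators from \eqref{eq:intgens}. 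By \cref{cor:intcircs}, each such generator is exactly representable by a circuit over $\s{X, CX, CCX, H}$, and concatenating these subcircuits produces a circuit for $V$.

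Finally, for the ancilla count, I would note that \cref{cor:intcircs} represents each individual generator using at most one \emph{clean} ancilla, which by definition is returned to the state $\ket{0}$ after use. The same ancilla wire can therefore be borrowed and returned by the subcircuit for each generator in turn, so a single clean ancilla suffices for the entire product regardless of the length $\ell$.

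I do not anticipate a serious obstacle, since the corollary is essentially a repackaging of \cref{thm:int}: the two points that require care are verifying that every generator is self-inverse (so that no gate outside $\s{X, CX, CCX, H}$ is needed to realize the daggers $G_i^\dagger$) and justifying that the single clean ancilla may be reused across all $\ell$ subcircuits rather than requiring one ancilla per generator.
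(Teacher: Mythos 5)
Your proposal is correct and follows essentially the same route as the paper, which obtains this corollary directly by combining \cref{thm:int} (writing $V$ as a product of the generators in \eqref{eq:intgens}) with \cref{cor:intcircs} (realizing each generator over $\s{X,CX,CCX,H}$ with at most one reusable clean ancilla). Your added observations—that the generators are self-inverse and that a clean ancilla can be borrowed and returned across the $\ell$ subcircuits—are accurate and consistent with the paper's conventions.
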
  

\subsection{The \texorpdfstring{$\Drtwo$}{D[√2]} case}
\label{ssec:real}

We now focus on the group of $n\times n$ unitary matrices with entries
in $\Drtwo$. The elements of this group can be written as
\begin{equation}
  \label{eq:realmatrix}
  V=\frac{1}{\sqrt{2}^q}W
\end{equation}
where $q\in\N$ and $W$ is a matrix over $\Zrtwo$. We now use
$\sqrt{2}$-denominator exponents and the relevant generators are
\begin{equation}
  \label{eq:realgens}
  \s{(-1)_{[a]},X_{[a,b]},H_{[a,b]}}
\end{equation}
where $a$ and $b$ are distinct elements of $[n]$. By
\cref{prop:realcircs}, all of the above generators can be exactly
represented by quantum circuits over the gate set $\s{X, CX, CCX, H,
  CH}$. As in the previous cases, we prove our characterization by
showing that any unitary matrix of the form \eqref{eq:realmatrix} can
be expressed as a product of generators from \eqref{eq:realgens}.

\begin{lemma}
  \label{lem:realvector}
  If $u_1,u_2\in\Zrtwo$ are such that $u_1 \equiv u_2 \pmod{2}$, then
  \[
  H \begin{bmatrix} u_1 \\ u_2 \end{bmatrix} =
  \begin{bmatrix} u_1' \\ u_2' \end{bmatrix}
  \]
  for some $u_1',u_2'\in\Zrtwo$ such that $u_1'\equiv u_2' \equiv 0
  \pmod{\sqrt{2}}$.
\end{lemma}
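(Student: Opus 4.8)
The plan is to compute the image of the vector under $H$ explicitly and then read off both claims directly from the numerators. Since
\[
H\begin{bmatrix} u_1 \\ u_2 \end{bmatrix} = \frac{1}{\sqrt{2}}\begin{bmatrix} u_1 + u_2 \\ u_1 - u_2 \end{bmatrix},
\]
we have $u_1' = (u_1 + u_2)/\sqrt{2}$ and $u_2' = (u_1 - u_2)/\sqrt{2}$, so the entire content of the lemma reduces to understanding the two numerators $u_1 + u_2$ and $u_1 - u_2$.

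The key step is to invoke the generic property of residues noted earlier in the paper: from $u_1 \equiv u_2 \pmod{2}$ it follows that $u_1 \pm u_2 \equiv 0 \pmod{2}$. Hence there exist $a, b \in \Zrtwo$ with $u_1 + u_2 = 2a$ and $u_1 - u_2 = 2b$. Substituting into the expressions above and using $2/\sqrt{2} = \sqrt{2}$, I would obtain $u_1' = \sqrt{2}\,a$ and $u_2' = \sqrt{2}\,b$. Because $\sqrt{2} \in \Zrtwo$ and $a, b \in \Zrtwo$, both $u_1'$ and $u_2'$ lie in $\Zrtwo$, which gives the integrality half of the statement. Moreover, each is manifestly a multiple of $\sqrt{2}$, so $u_1' \equiv u_2' \equiv 0 \pmod{\sqrt{2}}$, which is exactly the congruence claimed.

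I do not anticipate any genuine obstacle; the argument is a short direct computation. The only point requiring a moment of care is that one must clear a factor of $2$ from \emph{both} numerators simultaneously, which is precisely what the hypothesis $u_1 \equiv u_2 \pmod{2}$ guarantees via the $u \pm v$ observation. This lemma is the $\Zrtwo$-analogue of \cref{lem:integralvector}, and it will play the same role downstream: applying $H$ to a suitably congruent pair of entries strips off one factor of $\sqrt{2}$ and thereby lowers the $\sqrt{2}$-denominator exponent of a vector.
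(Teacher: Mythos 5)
Your proof is correct and follows essentially the same route as the paper's: both use the observation that $u_1 \equiv u_2 \pmod{2}$ forces $u_1 \pm u_2 \equiv 0 \pmod{2}$ and then read off the divisibility of the numerators after applying $H$. Your write-up is simply a more explicit version of the paper's ``it can then be verified'' step (and, incidentally, correctly states the conclusion modulo $\sqrt{2}$ where the paper's proof body has a small typo saying modulo $2$).
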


\begin{proof}
  Since $u_1\equiv u_2 \pmod{2}$, we have $u_1+u_2 \equiv u_1-u_2
  \equiv 0 \pmod{2}$. It can then be verified that
  \[
  H\begin{bmatrix} u_1 \\ u_2 \end{bmatrix} = \begin{bmatrix} u_1'
    \\ u_2' \end{bmatrix}
  \]
  for some $u_1'\equiv u_2'\equiv 0 \pmod{2}$.
\end{proof}

\begin{lemma}
  \label{lem:realdenommodtwo}
  If $v$ is an $n$-dimensional unit vector over $\Drtwo$ and
  $\lde_{\sqrt{2}}(v)>0$, then there exist generators $G_1, \ldots,
  G_\ell$ from \eqref{eq:realgens} such that $G_1\cdots G_\ell v = v'$
  and $\lde_{\sqrt{2}}(v') < \lde_{\sqrt{2}}(v)$.
\end{lemma}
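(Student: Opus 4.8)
The plan is to reduce the least $\sqrt2$-denominator exponent by one, following the same pattern as \cref{lem:evendenommodfour} in the integral case but pairing entries two at a time and applying $H_{[a,b]}$ (via \cref{lem:realvector}) rather than four at a time. First I would write $v=u/\sqrt2^q$ with $u\in\Zrtwo^n$ and $q=\lde_{\sqrt2}(v)>0$; the unit condition $v^\dagger v=1$ then gives $\sum_k u_k^2 = 2^q$ (recall $u_k^\dagger=u_k$ on the real ring $\Zrtwo$). Writing each entry as $u_k=x_0^{(k)}+x_1^{(k)}\sqrt2$, an entry is divisible by $\sqrt2$ exactly when $x_0^{(k)}$ is even, i.e.\ exactly when its residue mod $2$ lies in $\s{0,\sqrt2}$; call the remaining entries (residues $1$ and $1+\sqrt2$) \emph{odd}. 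The goal is to turn every odd entry into a multiple of $\sqrt2$ without disturbing the even entries, since then $\sqrt2^q v'\in\Zrtwo^n$ is divisible by $\sqrt2$ and $\lde_{\sqrt2}(v')\le q-1$.

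The crux is a parity count that lets me pair odd entries which are congruent mod $2$, as required by \cref{lem:realvector}. Expanding $u_k^2 = ((x_0^{(k)})^2+2(x_1^{(k)})^2) + 2x_0^{(k)}x_1^{(k)}\sqrt2$ and equating the rational and $\sqrt2$ parts of $\sum_k u_k^2 = 2^q$ yields $\sum_k x_0^{(k)}x_1^{(k)}=0$ (the $\sqrt2$ part must vanish) and $\sum_k (x_0^{(k)})^2 \equiv 0 \pmod 2$ (the rational part, using $q\ge1$). Reading the second congruence mod $2$ shows the number of odd entries (those with $x_0^{(k)}$ odd) is even; reading $\sum_k x_0^{(k)}x_1^{(k)}\equiv 0\pmod 2$ shows the number of entries with both $x_0^{(k)}$ and $x_1^{(k)}$ odd --- exactly the entries of residue $1+\sqrt2$ --- is even. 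Hence the number of entries of residue $1$ is also even.

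With both of these counts even, I would partition the odd entries into disjoint pairs, each pair lying entirely in residue class $1$ or entirely in residue class $1+\sqrt2$, so that within each pair $u_a\equiv u_b\pmod2$. Applying $H_{[a,b]}$ to every such pair (these commute, having disjoint supports) makes both entries of each pair divisible by $\sqrt2$ by \cref{lem:realvector}, while leaving the already-even entries untouched. The resulting vector $v'=G_1\cdots G_\ell v$ therefore satisfies $\sqrt2^q v'\in\sqrt2\,\Zrtwo^n$, so $\lde_{\sqrt2}(v')<q$, as required. Note that only the $H_{[a,b]}$ generators from \eqref{eq:realgens} are needed here.

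The step I expect to be the main obstacle is precisely this refined parity count. Unlike the integral case, where sign flips normalize every odd residue to $1\pmod4$, here the two odd residues $1$ and $1+\sqrt2$ are not interchanged by the available $(-1)_{[a]}$ gates (which preserve residues mod $2$), so \cref{lem:realvector} can only be applied within a single odd class. Establishing separately that each class has even cardinality --- which forces using both the rational and the $\sqrt2$-components of the unit condition --- is what makes the pairing possible.
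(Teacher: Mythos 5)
Your proof is correct and follows essentially the same route as the paper's: expand the unit condition $\sum_k u_k^2=2^q$ into its rational and $\sqrt{2}$ components to conclude that evenly many entries are congruent to $1$ and evenly many to $1+\sqrt{2}$ modulo $2$, then pair entries within each residue class and apply \cref{lem:realvector}. The only difference is that you spell out the parity count (and the observation that the two odd residue classes cannot be mixed) in more detail than the paper does.
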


\begin{proof}
  Write $v$ as $v=u/\sqrt{2}^q$ where $u\in\Zrtwo$ and $q>0$. Since
  $v$ is a unit vector we have $v^\dagger v=1$ and thus $2^q= \sum
  u_j^\dagger u_j = \sum u_j^2$ since $u$ is real. Letting $u_j = x_j
  + y_j \sqrt{2}$, this yields the following equation
  \[
  2^q = \sum x_j^2 + 2y_j^2 + x_jy_j2\sqrt{2}.
  \]
  Thus $\sum x_j^2 \equiv 0 \pmod{2}$ and $\sum x_jy_j=0$. It follows
  that $u_j \equiv 1 \pmod{2}$ for evenly many $j$ and $u_j\equiv
  1+\sqrt{2} \pmod{2}$ for evenly many $j$. We can therefore group
  these entries in sets of size 2 and apply \cref{lem:realvector} to
  each such set in order to reduce the $\sqrt{2}$-denominator exponent
  of the vector.
\end{proof}

The following three statements are established like the corresponding
ones in the previous section. For this reason, we omit their proofs.

\begin{lemma}
  \label{thm:realcolumn}
  Let $j\in [n]$. If $v$ is an $n$-dimensional unit vector over
  $\Drtwo$, then there exist generators $G_1, \ldots, G_\ell$ from
  \eqref{eq:realgens} such that $G_1\cdots G_\ell v = e_j$.
\end{lemma}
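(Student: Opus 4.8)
The plan mirrors the structure of the integral case. The statement is the "Column Lemma" for the real case: any unit vector over $\Drtwo$ can be reduced to a standard basis vector using the generators in \eqref{eq:realgens}. I would prove this by induction on the least $\sqrt{2}$-denominator exponent $\lde_{\sqrt{2}}(v)$, exactly as in \cref{lem:integralcolumnevendenom}. The two ingredients needed are already in place: \cref{lem:realdenommodtwo} provides the inductive step (strictly reducing the denominator exponent whenever it is positive), and the base case is handled by direct reasoning.

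First I would treat the base case $\lde_{\sqrt{2}}(v)=0$. Here $v$ is a unit vector with entries in $\Zrtwo$, so $\sum_k u_k^2 = 1$ with $u_k = x_k + y_k\sqrt{2}$. Writing out $\sum_k(x_k^2 + 2y_k^2) + 2\sqrt{2}\sum_k x_k y_k = 1$ and comparing the rational and $\sqrt{2}$ parts forces $\sum_k x_k^2 + 2y_k^2 = 1$ and $\sum_k x_k y_k = 0$. Since each $x_k^2 + 2y_k^2$ is a nonnegative integer, exactly one term equals $1$ and the rest vanish; that single nonzero term must have $y_k=0$ and $x_k = \pm 1$. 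Hence $v = \pm e_{j'}$ for some $j'\in[n]$, and one of
\[
v=e_j, \quad (-1)_{[j]}v=e_j, \quad X_{[j,j']}v = e_j, \quad \mbox{or} \quad X_{[j,j']}(-1)_{[j']}v = e_j
\]
holds, just as in the integral case.

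For the inductive step, when $\lde_{\sqrt{2}}(v)>0$ I would simply invoke \cref{lem:realdenommodtwo} to produce generators $G_1,\ldots,G_\ell$ from \eqref{eq:realgens} with $\lde_{\sqrt{2}}(G_1\cdots G_\ell v) < \lde_{\sqrt{2}}(v)$, then apply the induction hypothesis to the resulting vector. Composing the two sequences of generators and relabeling the target index via a two-level $X$ (as in the base case) yields the desired reduction to $e_j$. I do not anticipate a genuine obstacle here: the real case is actually simpler than the integral case because \cref{lem:realvector} operates on two-level blocks rather than the four-level blocks of \cref{lem:integralvector}, so there is no need for a separate small-dimension lemma analogous to \cref{lem:integralcolumnevendenomlessthanfour}. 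The only point requiring mild care is confirming that the parity count in \cref{lem:realdenommodtwo} always permits a complete pairing of the relevant entries, but that is already guaranteed by the lemma's statement. This is precisely why the authors can declare the proof omitted as "established like the corresponding ones in the previous section."
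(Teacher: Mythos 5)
Your proof is correct and is essentially the argument the paper intends: the paper omits this proof, stating it is ``established like the corresponding ones in the previous section,'' i.e., by induction on $\lde_{\sqrt{2}}(v)$ with the base case $v=\pm e_{j'}$ handled by two-level $X$ and $(-1)$ operators and the inductive step supplied by \cref{lem:realdenommodtwo}. Your base-case computation comparing rational and $\sqrt{2}$ parts, and your observation that no analogue of \cref{lem:integralcolumnevendenomlessthanfour} is needed since \cref{lem:realvector} is two-level, are both accurate.
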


\begin{theorem}
  \label{thm:real}
  If $V$ is an $n$-dimensional unitary matrix with entries in
  $\Drtwo$, then there exist generators $G_1, \ldots, G_\ell$ from
  \eqref{eq:realgens} such that $G_1 \cdots G_\ell V = I$.
\end{theorem}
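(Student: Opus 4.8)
The plan is to follow the proof of \cref{thm:even} and carry out the outer induction over columns underlying the Giles--Selinger algorithm, using the Column Lemma \cref{thm:realcolumn} as the inner reduction step. I would induct on the dimension $n$. For the base case $n=1$, the only $1\times 1$ unitaries over $\Drtwo$ are $(\pm 1)$, since an element $x\in\Drtwo$ with $x^\dagger x = x^2 = 1$ must be $\pm 1$; either such matrix is reduced to $I$ by at most one application of $(-1)_{[1]}$.

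For the inductive step, let $v$ be the first column of $V$; it is an $n$-dimensional unit vector over $\Drtwo$, so \cref{thm:realcolumn} supplies generators $G_1,\ldots,G_\ell$ from \eqref{eq:realgens} with $G_1\cdots G_\ell v = e_1$. Left-multiplying $V$ by this sequence yields a matrix $M = G_1\cdots G_\ell V$ whose first column is $e_1$. I would then use unitarity to promote this to block-diagonal form: the columns of $M$ are orthonormal, and orthogonality of each later column to the first column $e_1$ forces every off-diagonal entry of the first row to vanish. Hence $M = 1\oplus V'$ with $V'$ an $(n-1)$-dimensional unitary over $\Drtwo$, to which the induction hypothesis applies. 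Reinterpreting the generators that reduce $V'$ as generators on $[n]$ acting on the indices $2,\ldots,n$, and composing with $G_1,\ldots,G_\ell$, reduces $V$ to $I$.

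The main obstacle is the bookkeeping of the recursion rather than any single computation: one must ensure that reducing later columns never disturbs the columns already fixed. This is guaranteed precisely because each generator in \eqref{eq:realgens} indexed by coordinates in $\s{2,\ldots,n}$ acts as the identity on the first coordinate, so the block-diagonal structure $1\oplus V'$ is preserved throughout the recursive phase. This is the same property that motivated the restriction to multi-level matrices acting non-trivially on few rows in \cref{sec:overview}, and since the argument is structurally identical to that for \cref{thm:even}, the one-line citation of the previous section's proof is justified.
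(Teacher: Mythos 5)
Your proposal is correct and matches the paper's (omitted) argument: the paper proves \cref{thm:real} exactly as it proves \cref{thm:even}, namely by iteratively applying the column lemma (\cref{thm:realcolumn}) to the columns of $V$, with unitarity yielding the block-diagonal form and the multi-level structure of the generators ensuring that previously fixed columns are undisturbed. No gaps.
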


\begin{corollary}
  \label{cor:realcharac}
  A matrix $V$ can be exactly represented by an $n$-qubit quantum 
  circuit over the gate set $\s{X, CX, CCX, H, CH}$ if and only if 
  $V\in U_{2^n}\left( \Drtwo \right)$. Moreover, a single ancilla 
  always suffices to construct a circuit for $V$.
\end{corollary}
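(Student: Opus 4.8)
The plan is to prove \Cref{cor:realcharac} by combining the two directions of the biconditional, exactly as in the integral case (\Cref{cor:intevencharac}). The forward direction is the easy half: since each of $X$, $CX$, $CCX$, $H$, and $CH$ has entries in $\Drtwo$, and this ring is closed under the operations of composition (matrix multiplication) and tensor product, any circuit built from these gates represents a unitary matrix whose entries lie in $\Drtwo$. Hence if $V$ is exactly represented by an $n$-qubit circuit over $\s{X, CX, CCX, H, CH}$, then $V \in U_{2^n}(\Drtwo)$.

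For the converse, I would invoke the exact synthesis result just established. Given $V \in U_{2^n}(\Drtwo)$, \Cref{thm:real} produces generators $G_1, \ldots, G_\ell$ from \eqref{eq:realgens} with $G_1 \cdots G_\ell V = I$, whence $V = G_\ell^\dagger \cdots G_1^\dagger$. Each generator in \eqref{eq:realgens} is one of $(-1)_{[a]}$, $X_{[a,b]}$, or $H_{[a,b]}$, and its inverse is again of this form (these matrices are self-inverse or differ from their inverse only by sign placement), so $V$ is a product of generators from \eqref{eq:realgens}. By \Cref{prop:realcircs}, each such generator can be exactly represented by a circuit over $\s{X, CX, CCX, H, CH}$ using at most one clean ancilla. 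Composing these circuits yields a circuit for $V$ over the desired gate set.

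The only subtlety, and the step I would treat most carefully, is the ancilla count: each generator is built using \emph{at most one} clean ancilla by \Cref{prop:realcircs}, and I must argue that a single ancilla suffices for the whole product rather than accumulating one ancilla per generator. This follows because a clean ancilla is restored to the $\ket{0}$ state by each generator's subcircuit (that is the meaning of ``clean ancilla'' in \Cref{sec:circuits}), so the same ancilla wire can be reused sequentially across all $\ell$ generator-circuits without interference. I would state this reuse explicitly. With that observation, the circuit for $V$ uses a single clean ancilla, completing the proof of the moreover clause and hence the corollary.
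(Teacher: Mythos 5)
Your proposal is correct and follows essentially the same route as the paper, which derives \cref{cor:realcharac} from \cref{thm:real} together with \cref{prop:realcircs} exactly as \cref{cor:intevencharac} is derived from \cref{thm:even} and \cref{prop:intcircs} (the paper explicitly omits the proof for this reason). Your explicit attention to the reuse of the single clean ancilla across the generator subcircuits is a welcome clarification of a point the paper leaves implicit.
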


\subsection{The \texorpdfstring{$\Drminustwo$}{D[i√2]} case}
\label{ssec:unreal}

We now consider the group of $n\times n$ unitary matrices with entries
in $\Drminustwo$. Such matrices can be written as
\begin{equation}
  \label{eq:unrealmatrix}
  V=\frac{1}{(i\sqrt{2})^q}W
\end{equation}
where $q\in\N$ and $W$ is a matrix over $\Zrminustwo$. We now use
$i\sqrt{2}$-denominator exponents and the relevant generators are
\begin{equation}
  \label{eq:unrealgens}
  \s{(-1)_{[a]},X_{[a,b]},F_{[a,b]}}
\end{equation}
where $a$ and $b$ are distinct elements of $[n]$. By
\cref{prop:unrealcircs}, all of the above generators can be exactly
represented by quantum circuits over the gate set $\s{X, CX, CCX,
  F}$. As in the previous cases, we establish our characterization by
showing that any unitary matrix of the form \eqref{eq:unrealmatrix}
can be expressed as a product of generators from
\eqref{eq:unrealgens}.

\begin{lemma}
  \label{lem:unrealvector}
  If $u_1,u_2\in\Zrminustwo$ are such that $u_1^\dagger u_1 \equiv
  u_2^\dagger u_2 \equiv 1 \pmod{2}$, then there exist $m_0$, $m_1$,
  $m_2$, and $m_3$ such that
  \[
  F^{m_0} (-1)_{[1]}^{m_1}(-1)_{[2]}^{m_2} X^{m_3} \begin{bmatrix} u_1
    \\ u_2 \end{bmatrix} =
  \begin{bmatrix} u_1' \\ u_2' \end{bmatrix}
  \]
  for some $u_1',u_2'\in\Zrminustwo$ such that $u_1'\equiv u_2' \equiv
  0 \pmod{i\sqrt{2}}$.
\end{lemma}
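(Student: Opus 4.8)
The plan is to follow the template of \cref{lem:integralvector} and \cref{lem:realvector}: use the hypothesis to pin down the residues of $u_1$ and $u_2$, normalize them with the sign gates (and here also the swap $X$), and then verify that an appropriate power of $F$ sends the pair into the ideal $(i\sqrt{2})$. First I would record the target: writing $u=x_0+x_1i\sqrt{2}$, one has $u\equiv 0\pmod{i\sqrt{2}}$ exactly when $x_0$ is even, so the goal is to make the coefficient of $1$ in each output even. Since $u_1^\dagger u_1\equiv u_2^\dagger u_2\equiv 1\pmod 2$, \cref{prop:zrminus2residues} gives that each $u_j$ is congruent to one of $1,3,1+i\sqrt{2},3+i\sqrt{2}$ modulo $2i\sqrt{2}$, and the third bullet of that proposition lets me pick $m_1,m_2$ so that after applying $(-1)_{[1]}^{m_1}(-1)_{[2]}^{m_2}$ each entry is congruent to $1$ or to $1+i\sqrt{2}$ modulo $2i\sqrt{2}$. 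Note that the sign gates do not change residues modulo $2$, so they preserve the parities of both coordinates $x_0,x_1$.

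The argument then splits into two cases according to whether $u_1\equiv u_2\pmod 2$. If $u_1\equiv u_2\pmod 2$, I take $m_3=0$ and $m_0=2$: since $F^2=iH$, the output is $(i/\sqrt{2})$ times $[\,u_1+u_2,\ u_1-u_2\,]$, and $u_1\equiv u_2\pmod 2$ forces $u_1\pm u_2$ to be twice an element of $\Zrminustwo$, so each output is $i\sqrt{2}$ times an element of $\Zrminustwo$, hence $\equiv 0\pmod{i\sqrt{2}}$; this is exactly the reduction of the real case in \cref{lem:realvector}. If $u_1\not\equiv u_2\pmod 2$, then precisely one entry has odd coefficient of $i\sqrt{2}$; I apply $X^{m_3}$ so that this entry is in position $1$, then use the sign gates to arrange $u_1\equiv 1+i\sqrt{2}$ and $u_2\equiv 1\pmod{2i\sqrt{2}}$, and finally take $m_0=1$. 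A direct expansion of $F[\,u_1,u_2\,]$ shows that under these residues both numerators are divisible by $2i\sqrt{2}$, so both outputs lie in $(i\sqrt{2})$; the crucial point is that the finer normalization to coefficient of $1$ congruent to $1\pmod 4$, rather than merely odd, is what makes the coefficients of $1$ in the outputs come out even.

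The main obstacle is recognizing that a single $F$ does not suffice in general: when $u_1\equiv u_2\pmod 2$ a single $F$ fails to even return an element of $\Zrminustwo$, and one must instead use $F^2=iH$. Getting the case split right---equal residues modulo $2$ handled by $F^2$, unequal residues handled by a single $F$ after a swap---and verifying that the $2i\sqrt{2}$-normalization supplied by \cref{prop:zrminus2residues} is precisely what controls the offending coefficients, is the crux of the proof; the two verifications of the $F$ and $F^2$ identities are then routine residue bookkeeping.
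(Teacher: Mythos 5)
Your proof is correct and follows essentially the same route as the paper's: the same case split on whether $u_1\equiv u_2\pmod 2$, with $F^2=iH$ handling the congruent case exactly as in \cref{lem:realvector}, and a swap plus sign normalization to $u_1\equiv 1+i\sqrt{2}$, $u_2\equiv 1\pmod{2i\sqrt{2}}$ followed by a single $F$ in the incongruent case. Your added observation that the mod-$2i\sqrt{2}$ normalization (i.e., coefficient of $1$ congruent to $1$ mod $4$) is what forces the output coefficients to be even is a correct and slightly more explicit account of why the paper's residue computation goes through.
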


\begin{proof}
  First consider the case in which $u_1\equiv u_2 \pmod{2}$. Then
  $u_1+u_2 \equiv u_1-u_2 \equiv 0 \pmod{2}$ and it can be verified
  that
  \[
  F^2\begin{bmatrix} u_1 \\ u_2 \end{bmatrix} = iH\begin{bmatrix} u_1
  \\ u_2 \end{bmatrix} = \begin{bmatrix} u_1' \\ u_2' \end{bmatrix}
  \]
  for some $u_1'\equiv u_2'\equiv 0 \pmod{i\sqrt{2}}$. We now
  consider the case in which $u_1\not\equiv u_2 \pmod{2}$. In this
  case, the fact that $u_1^\dagger u_1 \equiv u_2^\dagger u_2 \equiv 1
  \pmod{2}$ implies that one of $u_1$ or $u_2$ is congruent to $1$ or
  $3$ modulo $2i\sqrt{2}$ while the other is congruent to $(1
  +i\sqrt{2})$ or $(3 +i\sqrt{2})$ modulo
  $2i\sqrt{2}$. We can therefore find $m_1,m_2,m_3$ such that
  \[
  (-1)^{m_1}_{[1]}(-1)_{[2]}^{m_2} X^{m_3} \begin{bmatrix} u_1
    \\ u_2 \end{bmatrix} =
  \begin{bmatrix} u_1'' \\ u_2'' \end{bmatrix}
  \]
  where $u_1'' \equiv 1+i\sqrt{2} \pmod{2i\sqrt{2}}$ and
  $u_2'' \equiv 1 \pmod{2i\sqrt{2}}$. Then
  \[
  F \begin{bmatrix} u_1'' \\ u_2'' \end{bmatrix} = \frac{1}{2}
  \begin{bmatrix} (1+i\sqrt{2})u_1'' + u_2'' \\
    u_1'' + (-1+i\sqrt{2})u_2'' \end{bmatrix}.
  \]
  But $u_1'' \equiv 1+i\sqrt{2} \pmod{2i\sqrt{2}}$ and
  $u_2'' \equiv 1 \pmod{2i\sqrt{2}}$ so that
  \[
  (1+i\sqrt{2})u_1'' + u_2'' \equiv (1+i\sqrt{2})^2 + 1
  \equiv 2i\sqrt{2} \equiv 0 \pmod{2i\sqrt{2}}.
  \]
  and
  \[
  u_1'' + (-1+i\sqrt{2})u_2'' \equiv (1+i\sqrt{2}) + (-1+i\sqrt{2})
  \equiv 2i\sqrt{2} \equiv 0 \pmod{2i\sqrt{2}}.
  \]
  Hence we can set $u_1' = ((1+i\sqrt{2})u_1'' + u_2'')/2$ and
  $u_2' = ( u_1'' + (-1+i\sqrt{2})u_2'')/2$ to complete the
  proof.
\end{proof}

\begin{lemma}
  \label{lem:unrealdenommodtwo}
  If $v$ is an $n$-dimensional unit vector over $\Drminustwo$ and
  $\lde_{i\sqrt{2}}(v)>0$, then there exist generators $G_1,
  \ldots, G_\ell$ from \eqref{eq:unrealgens} such that $G_1\cdots
  G_\ell v = v'$ and $\lde_{i\sqrt{2}}(v') <
  \lde_{i\sqrt{2}}(v)$.
\end{lemma}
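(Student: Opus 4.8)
The plan is to mimic the structure of the analogous lemmas in the earlier cases (Lemmas \ref{lem:evendenommodfour} and \ref{lem:realdenommodtwo}). I would write $v = u/(i\sqrt{2})^q$ with $u \in \Zrminustwo^n$ and $q = \lde_{i\sqrt{2}}(v) > 0$, and use the unit condition $v^\dagger v = 1$ to derive a congruence constraint on the entries $u_k$ that will let me pair them up and apply \cref{lem:unrealvector} to each pair. Since \cref{lem:unrealvector} reduces the $i\sqrt{2}$-denominator exponent of a pair of entries satisfying $u_1^\dagger u_1 \equiv u_2^\dagger u_2 \equiv 1 \pmod{2}$, the crux is to show that the number of entries $u_k$ with $u_k^\dagger u_k \equiv 1 \pmod 2$ is even, so they can be grouped into such pairs.

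The key computation would run as follows. From $v^\dagger v = 1$ we get $\sum_k u_k^\dagger u_k = (i\sqrt{2})^q (i\sqrt{2})^{\dagger q} = (i\sqrt{2})^q(-i\sqrt{2})^q = 2^q$. Since $q > 0$, the right-hand side is even, so $\sum_k u_k^\dagger u_k \equiv 0 \pmod 2$. By the first item of \cref{prop:zrminus2residues}, each term $u_k^\dagger u_k$ is congruent to $0$ or $1$ modulo $2$ in $\Zrminustwo/(2)$, and since these norms are ordinary integers, reducing modulo $2$ in $\Z$ agrees with the residue class. Hence the number of indices $k$ with $u_k^\dagger u_k \equiv 1 \pmod 2$ is even. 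I would then partition exactly these entries into pairs, apply \cref{lem:unrealvector} to each pair (the remaining entries already satisfy $u_k \equiv 0 \pmod{i\sqrt{2}}$ since $u_k^\dagger u_k \equiv 0$ forces $u_k \equiv 0$ in $\Zrminustwo/(i\sqrt{2})$, as $i\sqrt{2}$ is, up to a unit, the unique prime of norm $2$), and conclude that after applying the corresponding generators every entry of the resulting vector is divisible by $i\sqrt{2}$, so its $i\sqrt{2}$-denominator exponent drops by at least one.

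The main obstacle I anticipate is not the parity count itself but the bookkeeping connecting the two residue conditions: \cref{lem:unrealvector} is stated in terms of the hypothesis $u_k^\dagger u_k \equiv 1 \pmod 2$, whereas the reduction claim is about divisibility by $i\sqrt{2}$. I need to be sure that (a) an entry with $u_k^\dagger u_k \equiv 0 \pmod 2$ is already $\equiv 0 \pmod{i\sqrt{2}}$ so it needs no treatment, and (b) \cref{lem:unrealvector} applied to a genuine pair produces two entries each $\equiv 0 \pmod{i\sqrt{2}}$ without disturbing the others. Point (a) follows because $N(i\sqrt{2}) = 2$ and $i\sqrt{2}$ divides $u_k$ iff $2 \mid u_k^\dagger u_k$ in the relevant quotient; point (b) is exactly the output guarantee of \cref{lem:unrealvector}. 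Once these are in place, the argument closes just as in the $\Drtwo$ case.
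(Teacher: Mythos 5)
Your proposal is correct and follows essentially the same route as the paper: use $v^\dagger v=1$ to get $\sum_k u_k^\dagger u_k = 2^q \equiv 0 \pmod 2$, invoke \cref{prop:zrminus2residues} to see each norm is $0$ or $1$ modulo $2$ so the entries with norm $\equiv 1$ come in even number, pair them up, and apply \cref{lem:unrealvector} to each pair (your explicit check that entries with even norm are already divisible by $i\sqrt{2}$ is left implicit in the paper but is exactly the right bookkeeping). The only discrepancy is cosmetic: the paper writes $(-2)^q$ for the sum of norms where your $2^q$ is the correct value, and either way the parity argument is unaffected.
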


\begin{proof}
  Write $v$ as $v=u/i\sqrt{2}^q$ where $u\in\Zrminustwo$ and
  $q>0$. Since $v$ is a unit vector we have $v^\dagger v=1$ and thus
  $(\mbox{-}2)^q= \sum u_j^\dagger u_j$. Thus $\sum u_j^\dagger u_j
  \equiv 0 \pmod{2}$ and it follows that $u_j^\dagger u_j \equiv 1
  \pmod{2}$ for evenly many $j$, since modulo 2 we have $u_j^\dagger
  u_j \equiv 0$ or $u_j^\dagger u_j \equiv 1$. We can therefore group
  these entries in sets of size 2 and apply \cref{lem:unrealvector} to
  each such set in order to reduce the denominator exponent.
\end{proof}

\begin{lemma}
  \label{thm:unrealcolumn}
  Let $j\in [n]$. If $v$ is an $n$-dimensional unit vector over
  $\Drminustwo$, then there exist generators $G_1, \ldots, G_\ell$
  from \eqref{eq:unrealgens} such that $G_1\cdots G_\ell v = e_j$.
\end{lemma}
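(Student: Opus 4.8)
The plan is to prove this Column Lemma by induction on the least $i\sqrt{2}$-denominator exponent $\lde_{i\sqrt{2}}(v)$, following exactly the template of \cref{lem:integralcolumnevendenom}, with \cref{lem:unrealvector} and \cref{lem:unrealdenommodtwo} playing the roles that \cref{lem:integralvector} and \cref{lem:evendenommodfour} played in the integral case. Since all the technical content is already packaged in those two lemmas, the remaining work is just the induction and a base-case classification of integral unit vectors.

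First I would handle the base case $\lde_{i\sqrt{2}}(v)=0$, where $v$ is a unit vector with entries $u_k=x_k+y_k i\sqrt{2}$ in $\Zrminustwo$. The unit condition $v^\dagger v=1$ then reads $\sum_k(x_k^2+2y_k^2)=1$. As each summand is a nonnegative integer, exactly one of them equals $1$ and the rest vanish; moreover $x_k^2+2y_k^2=1$ forces $y_k=0$ and $x_k=\pm 1$. Hence $v=\pm e_{j'}$ for some $j'$, and one of
\[
v=e_j, \quad (-1)_{[j]}v=e_j, \quad X_{[j,j']}v=e_j, \quad \mbox{or} \quad X_{[j,j']}(-1)_{[j']}v=e_j
\]
must hold, settling the base case using only sign-flip and transposition generators.

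For the inductive step, suppose $\lde_{i\sqrt{2}}(v)>0$. Then \cref{lem:unrealdenommodtwo} supplies generators from \eqref{eq:unrealgens} whose product $G_1\cdots G_\ell$ satisfies $\lde_{i\sqrt{2}}(G_1\cdots G_\ell v)<\lde_{i\sqrt{2}}(v)$, and applying the induction hypothesis to the resulting vector, then composing the two sequences of generators, yields the reduction to $e_j$.

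I do not expect a serious obstacle here; the one point worth flagging is that, in contrast with the integral $H\otimes H$ case, \emph{no} separate small-dimension analysis is required. Because $F_{[a,b]}$ is a two-level operator, \cref{lem:unrealdenommodtwo} only ever needs to pair up entries of odd norm, and whenever $\lde_{i\sqrt{2}}(v)>0$ at least one entry must have odd norm (else the denominator exponent could be lowered outright) while their total count is even, so a valid pairing always exists. This is precisely why the $\Drminustwo$ case sidesteps the $n<4$ bookkeeping needed in \cref{lem:integralcolumnevendenomlessthanfour,lem:evendenommodfour}.
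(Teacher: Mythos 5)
Your proposal is correct and matches the argument the paper intends: the paper omits this proof as being "established like the corresponding ones" in the earlier sections, i.e., induction on $\lde_{i\sqrt{2}}(v)$ with the base case $v=\pm e_{j'}$ handled by sign flips and transpositions and the inductive step supplied by \cref{lem:unrealdenommodtwo}, exactly as you write. Your closing observation about why no separate small-dimension analysis is needed (the two-level nature of $F_{[a,b]}$ versus the four-level $(H\otimes H)_{[a,b,c,d]}$) is also accurate.
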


\begin{theorem}
  \label{thm:unreal}
  If $V$ is an $n$-dimensional unitary matrix with entries in
  $\Drminustwo$, then there exist generators $G_1, \ldots, G_\ell$
  from \eqref{eq:unrealgens} such that $G_1 \cdots G_\ell V = I$.
\end{theorem}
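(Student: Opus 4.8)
The plan is to reduce the unitary matrix $V$ to the identity by iteratively clearing its columns, exactly as in the analogous theorems for the $\D$ and $\Drtwo$ cases. The key ingredient is the Column Lemma, \cref{thm:unrealcolumn}, which lets me reduce any unit vector to a standard basis vector using generators from \eqref{eq:unrealgens}. First I would let $v$ denote the first column of $V$. Since $V$ is unitary, $v$ is an $n$-dimensional unit vector over $\Drminustwo$, so \cref{thm:unrealcolumn} (applied with $j=1$) produces generators $G_1,\ldots,G_k$ such that $G_1\cdots G_k v = e_1$.

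Left-multiplying $V$ by this sequence yields a matrix whose first column is $e_1$. Because each generator in \eqref{eq:unrealgens} is a one- or two-level matrix, and because $G_1\cdots G_k V$ is still unitary with first column $e_1$, unitarity forces the first row to be $e_1^\dagger$ as well: the orthonormality of the columns means no other entry in the first row can be nonzero. Hence
\[
G_1\cdots G_k V = \mleft[\begin{array}{c|c} 1 & 0 \\ \hline 0 & V' \end{array}\mright]
\]
where $V'$ is an $(n-1)$-dimensional unitary matrix over $\Drminustwo$. I would then proceed by induction on the dimension $n$, applying the same argument to $V'$ to clear its leading column, and so on, until the remaining block is $1\times 1$. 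A $1\times 1$ unitary over $\Drminustwo$ is a unit of absolute value $1$, hence $\pm 1$, which is handled by an application of $(-1)_{[a]}$ if needed.

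The one point requiring a little care is that the generators clearing the column of the $(n-1)$-dimensional submatrix $V'$ must be lifted to act on the full $n$-dimensional space without disturbing the already-fixed first coordinate; this is immediate because the generators in \eqref{eq:unrealgens} are defined as multi-level matrices on arbitrary index sets, so one simply reindexes them to act on coordinates $2,\ldots,n$. I expect the main obstacle to be purely expository rather than mathematical: the argument is a verbatim transcription of the Giles–Selinger reduction already used in \cref{thm:even} and \cref{thm:real}. Indeed, given \cref{thm:unrealcolumn}, the proof is obtained by iteratively applying the Column Lemma to the successive columns of $V$, precisely as in the proof of \cref{thm:real}, so the proof reduces to a single sentence citing \cref{thm:unrealcolumn}.
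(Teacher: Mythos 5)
Your proposal is correct and matches the paper's approach exactly: the paper proves this theorem (like \cref{thm:even} and \cref{thm:real}) by iteratively applying the Column Lemma \cref{thm:unrealcolumn} to the columns of $V$, with the same unitarity argument showing the fixed columns and rows are undisturbed. The extra details you supply (the first row being forced to $e_1^\dagger$, reindexing the multi-level generators, the $1\times 1$ base case) are the standard justifications the paper leaves implicit.
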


\begin{corollary}
  \label{cor:unrealcharac}
  A matrix $V$ can be exactly represented by an $n$-qubit circuit over
  $\s{X, CX, CCX, F}$ if and only if $V\in U_{2^n}\left( \Drminustwo
  \right)$. Moreover, a single ancilla always suffices to construct a
  circuit for $V$.
\end{corollary}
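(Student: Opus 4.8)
The plan is to prove Corollary~\ref{cor:unrealcharac} by combining the two directions of the characterization. The forward direction is immediate from the gate definitions: the gates $X$, $CX$, $CCX$, and $F$ all have entries in $\Drminustwo$, since $F = \frac{1}{2}\begin{bmatrix} 1+i\sqrt{2} & 1 \\ 1 & -1+i\sqrt{2}\end{bmatrix}$ and $1/2 = (i\sqrt{2})^{-2}\cdot(-1) \in \Drminustwo$. Hence any circuit over $\s{X, CX, CCX, F}$ represents a unitary matrix whose entries lie in $\Drminustwo$, so that $V \in U_{2^n}(\Drminustwo)$. The work therefore lies entirely in the converse, and this is where the preceding machinery does the heavy lifting.

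For the converse, suppose $V \in U_{2^n}(\Drminustwo)$. By \cref{thm:unreal}, there exist generators $G_1, \ldots, G_\ell$ from \eqref{eq:unrealgens}, namely one-, two-level matrices of type $(-1)$, $X$, and $F$, such that $G_1 \cdots G_\ell V = I$. Since each $G_k$ is unitary, this yields the factorization $V = G_\ell^\dagger \cdots G_1^\dagger$, expressing $V$ as a product of generators from \eqref{eq:unrealgens} (the set is closed under taking adjoints up to the same generator types, as $X$ and $(-1)_{[a]}$ are self-inverse and $F_{[a,b]}^\dagger = (F^\dagger)_{[a,b]}$ is again a two-level matrix representable over the same gate set). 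By \cref{prop:unrealcircs}, each of these generators can be exactly represented by a quantum circuit over $\s{X, CX, CCX, F}$ using at most one clean ancilla. Composing these circuits produces a circuit over $\s{X, CX, CCX, F}$ that exactly represents $V$.

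The only subtlety is the ancilla count: \cref{prop:unrealcircs} furnishes a circuit for each individual generator using one clean ancilla, but naively composing $\ell$ such circuits might suggest $\ell$ ancillas. The point is that a clean ancilla is returned to the state $\ket{0}$ after each generator's subcircuit completes, so the same single ancilla qubit may be reused across all of the generators in the product. Thus a single clean ancilla suffices for the entire circuit representing $V$, which establishes the final sentence of the corollary.

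I do not expect any genuine obstacle here, since the corollary is a direct packaging of \cref{thm:unreal} (supplying the factorization into generators) together with \cref{prop:unrealcircs} (supplying the circuit implementation of each generator type). The structure is identical to that of \cref{cor:intevencharac} and \cref{cor:realcharac} in the earlier cases; the entire number-theoretic content — reducing a unit vector to a standard basis vector and then clearing columns by the Giles--Selinger induction — has already been carried out in \cref{lem:unrealvector,lem:unrealdenommodtwo,thm:unrealcolumn,thm:unreal}.
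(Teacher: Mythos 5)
Your proposal is correct and follows exactly the route the paper intends: the forward direction from the ring membership of the gates, and the converse by combining \cref{thm:unreal} (factorization into the generators of \eqref{eq:unrealgens}) with \cref{prop:unrealcircs} (each generator representable with at most one clean ancilla, reusable across the product since it is returned to $\ket{0}$). This matches the paper's implicit proof, which mirrors \cref{cor:intevencharac} and \cref{cor:realcharac}.
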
  

We close this section with a characterization of ancilla-free circuits
over $\s{ X, CX, CCX, F}$, focusing on circuits on four or more
qubits. The required circuit constructions are relegated to
\cref{app:unreal}.

\begin{corollary}
  \label{prop:ancillafreeunreal}
  Let $n\geq 4$. A matrix $V\in U_{2^n}(\Drminustwo)$ can be exactly
  represented by an ancilla-free $n$-qubit circuit over $\s{ X, CX,
    CCX, F}$ if and only if $\det{V} = 1$.
\end{corollary}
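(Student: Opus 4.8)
The plan is to prove both directions of the biconditional for $\det V = 1$ as the criterion for ancilla-free implementability.

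First I would establish the easy forward direction. The generators of the ancilla-free circuit group over $\s{X, CX, CCX, F}$ are $X$, $CX$, $CCX$, and $F$ (applied to various qubits). The permutation gates $X$, $CX$, and $CCX$ have determinant $\pm 1$; in fact, as even/odd permutation matrices on $2^n$ coordinates with $n \geq 2$, one checks they all have determinant $+1$ (a transposition-type action on $2^n \geq 4$ basis vectors, with the relevant cycle structure giving $+1$). The key computation is $\det F = 1$: since $F = \frac{1}{2}\begin{bmatrix} 1+i\sqrt{2} & 1 \\ 1 & -1+i\sqrt{2} \end{bmatrix}$, we get $\det F = \frac{1}{4}\bigl((1+i\sqrt{2})(-1+i\sqrt{2}) - 1\bigr) = \frac{1}{4}(-1 - 2 - 1) = -1$, so I must be careful here; applying $F$ to a single qubit in an $n$-qubit system tensors with identity on $2^{n-1}$ coordinates, giving $\det(I_{2^{n-1}} \otimes F) = (\det F)^{2^{n-1}} = (-1)^{2^{n-1}} = 1$ whenever $n \geq 2$. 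Hence every generator has determinant $1$ in dimension $2^n$ with $n \geq 4$, so any ancilla-free circuit yields $V$ with $\det V = 1$. Combined with \Cref{cor:unrealcharac}, this gives the forward implication.

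For the converse, the strategy is to refine the synthesis argument of \Cref{thm:unreal} so that it never consumes an ancilla and terminates not at $I$ but at a residual matrix that can be cleared ancilla-free using only the determinant-$1$ hypothesis. Concretely, I would run the column-reduction of \Cref{thm:unrealcolumn}, which uses only the two-level generators $(-1)_{[a]}$, $X_{[a,b]}$, and $F_{[a,b]}$ from \eqref{eq:unrealgens}; the crucial point is that by \Cref{lem:multilevel} these two-level generators require a \emph{clean} ancilla in general, so to stay ancilla-free I must supply direct ancilla-free circuit constructions for the controlled gates $CZ$ and $CF$ (and hence for the two-level operators via Gray-code/Toffoli conjugation that keeps the qubit count fixed). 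This is exactly the content deferred to \Cref{app:unreal}. Iterating over columns reduces $V$ to a matrix of the form $D$ supported on a small number of coordinates — ultimately a product of the residual one-level phases $(-1)_{[a]}$ that the reduction could not absorb.

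The main obstacle, and the place where $n \geq 4$ and $\det V = 1$ enter decisively, is clearing this leftover diagonal sign matrix without ancillas. Since each $(-1)_{[a]}$ has determinant $-1$ but the full synthesized unitary has $\det V = 1$, the residual sign matrix must flip an \emph{even} number of coordinates, i.e.\ it is a product of an even number of $(-1)_{[a]}$'s; the task is to realize such an even product $(-1)_{[a]}(-1)_{[b]}$ ancilla-free over $\s{X, CX, CCX, F}$. I expect the heart of the argument to be an explicit identity expressing the two-coordinate sign flip $(-1)_{[a]}(-1)_{[b]} = \operatorname{diag}(\ldots,-1,\ldots,-1,\ldots)$ in terms of $F_{[a,b]}$, $X_{[a,b]}$, and Toffoli-type permutations, using that $F^2 = iH$ and $F^4 = -I$ so that suitable even powers of $F$ generate the needed diagonal phase on a two-dimensional block; the requirement $n \geq 4$ guarantees enough coordinates (at least $16$) to route these operations through a Gray code without an auxiliary wire, while the failure of this routing for $n < 4$ is precisely why the determinant-$1$ condition can be dropped there (small cases being handled by direct enumeration, as noted after \Cref{lem:columnodd} and in the final clause of the statement). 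Assembling these pieces — forward determinant computation, ancilla-free controlled-gate constructions, column reduction, and the even-sign-flip lemma — completes the proof.
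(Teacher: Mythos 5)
Your forward direction is essentially the paper's: for $n\geq 4$ every generator acts as a tensor/permutation whose determinant is $+1$ (including $F$, whose $2\times 2$ determinant $-1$ gets raised to the even power $2^{n-1}$), so ancilla-free circuits can only produce determinant-$1$ unitaries. Your endgame is also the paper's: the residual diagonal matrix has an even number of $-1$ entries by the determinant hypothesis, and these are cleared in pairs.

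However, there is a genuine gap in the middle of your converse. You propose to run the column reduction of \cref{thm:unrealcolumn} using the generators $(-1)_{[a]}$, $X_{[a,b]}$, $F_{[a,b]}$ from \eqref{eq:unrealgens}, and to make this ancilla-free by supplying direct ancilla-free constructions of $CZ$ and $CF$ (equivalently, of the fully-controlled versions). This is impossible, and the obstruction is exactly the invariant you established in your forward direction: each of $(-1)_{[a]}$, $X_{[a,b]}$, and $F_{[a,b]}$ has determinant $-1$ as a $2^n$-dimensional operator, so no ancilla-free circuit over $\s{X,CX,CCX,F}$ can realize any one of them when $n\geq 4$. No amount of Gray-code routing fixes this; it is not a resource issue but a parity issue. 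The paper's key idea, which your proposal is missing, is to regroup the operators appearing in each denominator-reduction step of \cref{lem:unrealvector} into products of \emph{determinant-$1$} two-level operators --- e.g.\ $F^2\rightarrow (FZ)(ZF)$, $F(-1)_{[1]}X\rightarrow (ZF)(XZ)$, and so on through all eight cases of $F^{m_0}(-1)^{m_1}_{[1]}(-1)^{m_2}_{[2]}X^{m_3}$ --- where $FZ$, $ZF$, $XZ$, $ZX$ each have determinant $1$ and admit the ancilla-free constructions of \cref{app:unreal}; correspondingly, the column lemma is weakened to reduce each column only to $i^m e_j$ using the two-level $ZX$. Relatedly, your explanation of the role of $n\geq 4$ (``enough coordinates to route without an auxiliary wire'') is not the right mechanism: the condition is needed so that $CCX$ itself has determinant $+1$, making the determinant a genuine invariant of ancilla-free circuits; for $n<4$ it is not, which is why the hypothesis can be dropped there.
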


\begin{proof}
  If $\det{V} \neq 1$, then $V$ cannot be exactly represented over
  $\s{X, CX, CCX, F}$ without ancillas when $n\geq 4$, as each gate
  has determinant $1$ in this case.
	
  Now suppose $\det{V} = 1$.  First observe that in
  \cref{lem:unrealvector}, and consequently
  \cref{lem:unrealdenommodtwo}, the least
  $i\sqrt{2}$-denominator exponent can be reduced by
  substituting $F^{m_0}(-1)_{[1]}^{m_1}(-1)_{[2]}^{m_2}X^{m_3}$ as
  follows:
  \begin{align*}
	F^2 &\rightarrow (FZ)(ZF), \\
	F(-1)_{[1]} &\rightarrow (FZ)(XZ)(XZ), \\
	F(-1)_{[2]} &\rightarrow (FZ), \\
	F(-1)_{[1]}(-1)_{[2]} &\rightarrow (ZF)(XZ)(XZ), \\
	FX &\rightarrow (FZ)(ZX), \\
	F(-1)_{[1]}X &\rightarrow (ZF)(XZ), \\
        F(-1)_{[2]}X &\rightarrow (ZF)(ZX), \mbox{ and } \\
	F(-1)_{[1]}(-1)_{[2]}X &\rightarrow (FZ)(XZ).
  \end{align*}
  Moreover, $FZ$, $ZF$, $XZ$ and $ZX$ have determinant $1$ and
  can be represented over $\s{ X, CX, CCX, F}$ without
  ancillas when $n \geq 4$ as shown in \cref{app:unreal}. Likewise, an
  analogue of \cref{thm:unrealcolumn}, where $G_1\cdots G_\ell v =
  i^me_j$, holds by using the two-level $ZX$ operator, which has
  determinant $1$ and is representable without ancillas as shown in
  \cref{app:unreal}. It now suffices to note that there exist
  generators $G_1,\dots,G_\ell$ from the set $\s{XZ_{[a,b]},
    ZX_{[a,b]}, FZ_{[a,b]}, ZF_{[a,b]}}$ such that \[ G_1\cdots G_\ell
  V = D \] where $D$ is a diagonal unitary with entries $\pm 1$.
  Since $\det D = \det V = 1$, there are an even number of $-1$
  entries, thus we can group them into pairs and use four-level
  $I\otimes Z$ operators to change each pair into $+1$ and obtain the
  identity matrix.
\end{proof}

\subsection{The \texorpdfstring{$\Di$}{D[i]} case}
\label{ssec:gaussian}

Finally, we turn our attention to the group of $n\times n$ unitary
matrices with entries in $\Di$. The relevant set of generators is
\begin{equation}
  \label{eq:evengaussiangens}
  \s{i_{[a]},X_{[a,b]},\omega H_{[a,b]}}
\end{equation}
where $a$ and $b$ are distinct elements of $[n]$. We reason as in the
previous cases, noting by \cref{prop:gaussiancircs} that all of the
above generators can be exactly represented by circuits over
$\s{X, CX, CCX, \omega H, S}$.

If $V$ is a matrix over $\Di$, then $V$ can be written as $V=W/2^q$
where $q\in\N$ and $W$ is a matrix over $\Zi$. For our purposes,
however, it is more convenient to express these matrices as
\begin{equation}
  \label{eq:gaussianmatrixeven}
  V=\frac{1}{(1+i)^q}W
\end{equation}
where $q\in\N$ and $W$ is a matrix over $\Zi$. This is equivalent
since
\[
  \frac{1}{2^q}W = \frac{i^q}{(1+i)^{2q}}W = \frac{1}{(1+i)^{2q}}W'.
\]
We therefore use matrices of the form \eqref{eq:gaussianmatrixeven}
and use $(1+i)$-denominator exponents.

\begin{lemma}
  \label{lem:gaussianvector}
  If $u_1,u_2\in\Zi$ are such that $u_1^2 \equiv u_2^2 \equiv 1
  \pmod{2}$, then there exist $m_1$ and $m_2$ such that
  \[
  \omega H i_{[1]}^{m_1}i_{[2]}^{m_2}\begin{bmatrix} u_1
    \\ u_2 \end{bmatrix} =
  \begin{bmatrix} u_1' \\ u_2' \end{bmatrix}
  \]
  for some $u_1',u_2'\in\Zi$ such that $u_1'\equiv u_2' \equiv 0
  \pmod{1+i}$.
\end{lemma}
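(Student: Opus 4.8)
The plan is to reduce the entire statement to residue arithmetic in $\Zi$ modulo $2$, using \cref{prop:ziresiduesmod2}, and then to compute the action of $\omega H$ directly. First I would record the explicit form of the gate: since $\omega = (1+i)/\sqrt{2}$, we have
\[
\omega H = \frac{1+i}{2}\begin{bmatrix} 1 & 1 \\ 1 & -1 \end{bmatrix},
\]
while the diagonal operators $i_{[1]}^{m_1}i_{[2]}^{m_2}$ merely rescale the two entries to $i^{m_1}u_1$ and $i^{m_2}u_2$. The left-hand side of the claimed equation therefore equals
\[
\frac{1+i}{2}\begin{bmatrix} i^{m_1}u_1 + i^{m_2}u_2 \\ i^{m_1}u_1 - i^{m_2}u_2 \end{bmatrix},
\]
so the problem becomes to choose $m_1,m_2$ so that both the sum and the difference of $i^{m_1}u_1$ and $i^{m_2}u_2$ are divisible by $2$ in $\Zi$.

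Next I would use the hypothesis $u_k^2 \equiv 1 \pmod 2$ together with the first part of \cref{prop:ziresiduesmod2} to conclude that each $u_k$ is congruent to $1$ or $i$ modulo $2$. The second part of the same proposition states that $u\equiv i$ if and only if $iu\equiv 1$ in $\Zi/(2)$, so by taking $m_k=0$ when $u_k\equiv 1$ and $m_k=1$ when $u_k\equiv i$, I can arrange $i^{m_1}u_1 \equiv i^{m_2}u_2 \equiv 1 \pmod 2$. Writing $w_k = i^{m_k}u_k$, this gives $w_1 + w_2 \equiv w_1 - w_2 \equiv 0 \pmod 2$, so $w_1 + w_2 = 2a$ and $w_1 - w_2 = 2b$ for some $a,b \in \Zi$. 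Substituting back yields $u_1' = (1+i)a$ and $u_2' = (1+i)b$, both of which lie in $\Zi$ and are manifestly divisible by $1+i$, which is exactly the assertion $u_1' \equiv u_2' \equiv 0 \pmod{1+i}$.

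There is no serious obstacle here; the lemma is the $\Di$-analogue of \cref{lem:realvector} and \cref{lem:integralvector}, and the only point requiring care is that a residue of the form $1$ or $i$ modulo $2$ can always be normalized to $1$ by multiplication by a suitable power of $i$. This is precisely the content of the second bullet of \cref{prop:ziresiduesmod2}, and it is what makes the factor $(1+i)/2$ absorb cleanly into a single factor of $1+i$. The resulting statement would then feed into the subsequent denominator-reduction and column lemmas for the $\Di$ case in the same manner as the analogous vector lemmas in the earlier sections.
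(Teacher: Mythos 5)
Your proposal is correct and follows exactly the paper's argument: use \cref{prop:ziresiduesmod2} to normalize each $u_k$ to a residue of $1$ modulo $2$ via a power of $i$, then observe that $\omega H = \tfrac{1+i}{2}\left[\begin{smallmatrix}1 & 1\\ 1 & -1\end{smallmatrix}\right]$ sends the resulting pair to entries divisible by $1+i$. The only difference is that you carry out the final verification explicitly, which the paper leaves to the reader.
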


\begin{proof}
  If $u^2\equiv 1 \pmod{2}$, then $u\equiv 1 \pmod{2}$ or $u\equiv i
  \pmod{2}$. Furthermore, if $u\equiv i \pmod{2}$, then $iu\equiv 1
  \pmod{2}$. Hence, given $u_1,u_2\in\Z$ such that $u_1^2 \equiv u_2^2
  \equiv 1 \pmod{2}$, we can find $m_1$ and $m_2$ such that
  $i^{m_1}u_1 \equiv i^{m_2}u_2 \equiv 1 \pmod{2}$. It can then be
  verified that
  \[
  \omega H i_{[1]}^{m_1}i_{[2]}^{m_2}\begin{bmatrix} u_1
    \\ u_2 \end{bmatrix} =
  \begin{bmatrix} u_1' \\ u_2' \end{bmatrix}
  \]
  for some $u_1'\equiv u_2'\equiv 0 \pmod{1+i}$.
\end{proof}

\begin{lemma}
  \label{lem:gaussiandenommodtwo}
  If $v$ is an $n$-dimensional unit vector over $\Di$ and
  $\lde_{(1+i)}(v)>0$, then there exist generators $G_1, \ldots,
  G_\ell$ from \eqref{eq:evengaussiangens} such that $G_1\cdots G_\ell
  v = v'$ and $\lde_{(1+i)}(v') < \lde_{(1+i)}(v)$.
\end{lemma}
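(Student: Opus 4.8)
The plan is to follow the template of the denominator-reduction lemmas in the previous subsections, namely \cref{lem:realdenommodtwo} and \cref{lem:unrealdenommodtwo}, using the Gaussian vector lemma \cref{lem:gaussianvector} as the workhorse. First I would write $v$ as $v = u/(1+i)^q$ where $u \in \Zi^n$ and $q = \lde_{(1+i)}(v) > 0$. The key is then to extract an arithmetic constraint from the unit condition $v^\dagger v = 1$.

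Since $(1+i)^\dagger = 1-i$ and $(1+i)(1-i) = 2$, the denominator satisfies $((1+i)^q)^\dagger (1+i)^q = 2^q$, so the unit condition yields $\sum_j u_j^\dagger u_j = 2^q$. This is the one place where the Gaussian case differs from the real cases: the sum involves genuine complex conjugation rather than squaring. Because $q > 0$, we get $\sum_j u_j^\dagger u_j \equiv 0 \pmod{2}$. Writing $u_j = x_j + y_j i$, one has $u_j^\dagger u_j = x_j^2 + y_j^2 \equiv x_j + y_j \pmod{2}$, so each summand is $0$ or $1$ modulo $2$; consequently the number of indices $j$ with $u_j^\dagger u_j \equiv 1 \pmod{2}$ is even.

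Next I would observe that the two residue classes correspond exactly to divisibility by $(1+i)$: an entry $u_j$ is divisible by $(1+i)$ precisely when $x_j \equiv y_j \pmod{2}$, that is, when $u_j^\dagger u_j \equiv 0 \pmod{2}$. Thus the entries with even norm are already divisible by $(1+i)$ and require no attention, while the entries with $u_j^\dagger u_j \equiv 1 \pmod{2}$ (equivalently $u_j^2 \equiv 1 \pmod{2}$, as $u_j^\dagger u_j \equiv u_j^2 \pmod 2$) occur in even number. I would group these latter entries into pairs and apply \cref{lem:gaussianvector} to each pair, using the two-level operator $\omega H_{[a,b]}$ together with the phase operators $i_{[a]}$ and $i_{[b]}$ from \eqref{eq:evengaussiangens}; this makes every entry of each pair divisible by $(1+i)$.

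After these operations every entry of the resulting vector is divisible by $(1+i)$, so factoring out one further power of $(1+i)$ shows that the transformed vector $v' = G_1 \cdots G_\ell v$ satisfies $\lde_{(1+i)}(v') < \lde_{(1+i)}(v)$, as required. Since every step mirrors the earlier reductions, no genuinely new difficulty arises; the only point that needs care is the norm computation $\sum_j u_j^\dagger u_j = 2^q$ with complex conjugation, which is what supplies the parity constraint that makes the pairing possible.
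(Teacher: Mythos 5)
Your proposal is correct and follows essentially the same route as the paper: write $v = u/(1+i)^q$, use $\sum_j u_j^\dagger u_j = 2^q \equiv 0 \pmod{2}$ together with $u_j^\dagger u_j \equiv u_j^2 \pmod{2}$ to conclude that evenly many entries satisfy $u_j^2 \equiv 1 \pmod 2$, then pair them up and apply \cref{lem:gaussianvector}. The extra details you supply (each summand is $0$ or $1$ modulo $2$, and the even-norm entries are already divisible by $1+i$) are correct and merely make explicit what the paper leaves implicit.
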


\begin{proof}
  Write $v$ as $v=u/(1+i)^q$ where $u\in\Zi$ and $q>1$. Since
  $(1+i)^\dagger (1+i) = 2$ and $v$ is a unit vector, we have $2^q=
  \sum u_j^\dagger u_j$. Thus $ 0 \equiv \sum u_j^\dagger u_j \equiv
  \sum u_j^2 \pmod{2}$ and it follows that $u_j^2 \equiv 1 \pmod{2}$
  for evenly many $j$. We can therefore group these entries in sets of
  size 2 and apply \cref{lem:gaussianvector} to each such set in order
  to reduce the denominator exponent.
\end{proof}

\begin{lemma}
  \label{lem:gaussiancolumnevendenom}
  Let $j\in [n]$. If $v$ is an $n$-dimensional unit vector over $\Di$,
  then there exist generators $G_1, \ldots, G_\ell$ from
  \eqref{eq:evengaussiangens} such that $G_1\cdots G_\ell v = e_j$.
\end{lemma}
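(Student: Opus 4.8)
The plan is to mirror the structure of the integral case (\cref{lem:integralcolumnevendenom}) and the real case (\cref{thm:realcolumn}) and proceed by induction on the least $(1+i)$-denominator exponent $\lde_{(1+i)}(v)$, since the two supporting lemmas for this case, \cref{lem:gaussianvector,lem:gaussiandenommodtwo}, are the exact analogues of the workhorse lemmas used there.

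For the base case, suppose $\lde_{(1+i)}(v)=0$, so that $v$ has entries in $\Zi$ and satisfies $v^\dagger v=1$. For each Gaussian integer $u_j$, the quantity $u_j^\dagger u_j$ is a nonnegative integer, and these sum to $1$; hence exactly one entry $u_{j'}$ satisfies $u_{j'}^\dagger u_{j'}=1$ and all others vanish. The Gaussian integers of norm $1$ are precisely the units $1,-1,i,-i$, each of which is a power of $i$, so $v=i^m e_{j'}$ for some $m$ and some $j'\in[n]$. Applying the one-level phase generator $i_{[j']}$ from \eqref{eq:evengaussiangens} an appropriate number of times clears the phase to yield $e_{j'}$, and a single two-level $X_{[j,j']}$ (when $j\neq j'$) then produces $e_j$.

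For the inductive step, suppose $\lde_{(1+i)}(v)>0$. Here I would simply invoke \cref{lem:gaussiandenommodtwo} to obtain generators $G_1,\ldots,G_\ell$ from \eqref{eq:evengaussiangens} with $\lde_{(1+i)}(G_1\cdots G_\ell v)<\lde_{(1+i)}(v)$, and then apply the induction hypothesis to the resulting vector. Since the genuine work of the argument---the norm-parity counting that shows evenly many entries satisfy $u_j^2\equiv 1\pmod 2$, together with the application of $\omega H$ to suitably phased pairs of such entries---has already been carried out in \cref{lem:gaussianvector,lem:gaussiandenommodtwo}, I do not anticipate any real obstacle. The only point requiring care is the base-case classification of unit vectors over $\Zi$: one must verify that the norm condition $v^\dagger v=1$ forces $v$ to be a unit multiple of a standard basis vector, and that the available phase generator $i_{[j']}$ is precisely what is needed to absorb this unit scalar, which works because every unit of $\Zi$ is a power of $i$.
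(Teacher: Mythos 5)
Your proof is correct and follows exactly the route the paper intends: the paper omits the proof of this lemma precisely because it is established like \cref{lem:integralcolumnevendenom}, by induction on the least denominator exponent with \cref{lem:gaussiandenommodtwo} handling the inductive step. Your base case is also right --- the norm condition forces $v=i^m e_{j'}$, and the one-level generator $i_{[j']}$ together with $X_{[j,j']}$ reduces this to $e_j$.
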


\begin{theorem}
  \label{thm:gaussianeven}
  If $V$ is an $n$-dimensional unitary matrix with entries in $\Di$,
  then there exist generators $G_1, \ldots, G_\ell$ from
  \eqref{eq:evengaussiangens} such that $G_1 \cdots G_\ell V = I$.
\end{theorem}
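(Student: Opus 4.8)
The plan is to establish Theorem~\ref{thm:gaussianeven} by reducing a full unitary matrix $V$ to the identity column-by-column, following the same structure used in the $\D$, $\Drtwo$, and $\Drminustwo$ cases. The key ingredients are already in place: \cref{lem:gaussianvector} lets us reduce the $(1+i)$-denominator exponent of a pair of entries whose squares are both odd, and \cref{lem:gaussiandenommodtwo} packages this into a single-step reduction of $\lde_{(1+i)}$ for an entire unit vector with positive denominator exponent. Iterating the latter yields \cref{lem:gaussiancolumnevendenom}, the \emph{Column Lemma}, which reduces any unit vector over $\Di$ to a standard basis vector $e_j$ using generators from \eqref{eq:evengaussiangens}.

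First I would invoke \cref{lem:gaussiancolumnevendenom} on the leftmost column $v$ of $V$ to obtain a sequence of generators $G_1,\ldots,G_\ell$ from \eqref{eq:evengaussiangens} with $G_1\cdots G_\ell v = e_1$. Since each generator is unitary and acts nontrivially on only one or two rows, left-multiplying $V$ by this product clears the first column to $e_1$ and, by unitarity of the result, simultaneously clears the first row, leaving a block of the form $1\oplus V'$ where $V'$ is an $(n-1)$-dimensional unitary over $\Di$. I would then recurse on $V'$, being careful that the generators applied at each stage act only on the rows indexed by the not-yet-fixed coordinates, so that previously fixed columns are not perturbed. After at most $n$ rounds this produces a product of generators $G_1\cdots G_\ell$ with $G_1\cdots G_\ell V = I$, which is exactly the claim.

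The only genuinely new obstacle compared to the purely real cases is that the generator set \eqref{eq:evengaussiangens} uses the order-$4$ phase $i_{[a]}$ rather than $(-1)_{[a]}$, so one must check that the final diagonal cleanup is handled correctly: after all columns are reduced, the matrix is diagonal with unit-modulus entries in $\Di$, and one needs these entries to be exactly expressible by the available generators. In fact \cref{lem:gaussiancolumnevendenom} already delivers $e_j$ exactly (not merely up to a phase), because the base case of its underlying induction handles the denominator-exponent-zero situation using $i_{[j]}$ and $X_{[j,j']}$ to fix both the position and the phase of the single nonzero entry of a unit vector over $\Zi$. This point is precisely analogous to \cref{lem:integralcolumnevendenom} and \cref{thm:realcolumn}, so the argument is structurally identical and I would simply state that the proof proceeds by iteratively applying \cref{lem:gaussiancolumnevendenom} to the columns of $V$, as in \cref{thm:even} and \cref{thm:real}.
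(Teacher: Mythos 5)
Your proposal matches the paper's argument exactly: the paper proves \cref{thm:gaussianeven} by iteratively applying the Column Lemma (\cref{lem:gaussiancolumnevendenom}) to the columns of $V$, precisely as in \cref{thm:even}, and your observation that the base case uses $i_{[j]}$ and $X_{[j,j']}$ to fix both position and phase is the correct reason the reduction yields $e_j$ exactly. No gaps.
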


\begin{corollary}
  \label{cor:gaussianevencharac}
  A matrix $V$ can be exactly represented by an $n$-qubit circuit over
  $\s{X, CX, CCX, \omega H, S}$ if and only if $V\in
  U_{2^n}(\Di)$. Moreover, a single ancilla always suffices to
  construct a circuit for $V$.
\end{corollary}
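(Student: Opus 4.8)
The plan is to prove the two implications separately and then record that the synthesis uses only one ancilla. The two technical ingredients are already in hand: \cref{thm:gaussianeven}, which reduces any $V\in U_{2^n}(\Di)$ to the identity by a sequence of generators from \eqref{eq:evengaussiangens}, and \cref{prop:gaussiancircs}, which realizes each such generator by a circuit over $\s{X, CX, CCX, \omega H, S}$ using at most one clean ancilla. The corollary is essentially the packaging of these two facts, mirroring the wrap-up arguments in the earlier cases.

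For the soundness direction (circuit $\Rightarrow$ ring membership), I would check that every gate in the set lies in $U(\Di)$: the permutation gates $X$, $CX$, $CCX$ have integer entries, $S = \mathrm{diag}(1,i)$ has entries in $\Zi\subseteq\Di$, and $\omega H$ has entries $\pm\omega/\sqrt{2} = \pm(1+i)/2\in\Di$. Since $\Di$ is a ring, any composition and tensor product of these gates, padded with identities, stays in $U(\Di)$. If the circuit uses an ancilla, the represented matrix $V$ is the block of the $(n+1)$-qubit unitary acting on the ancilla-$\ket{0}$ subspace, whose entries therefore also lie in $\Di$; hence $V\in U_{2^n}(\Di)$.

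For the completeness direction, I would start from \cref{thm:gaussianeven}, which gives generators $G_1,\dots,G_\ell$ with $G_1\cdots G_\ell V = I$, so that $V = (G_1\cdots G_\ell)^\dagger = G_\ell^\dagger\cdots G_1^\dagger$. Each $G_k^\dagger$ is the adjoint of a generator, and since $i_{[a]}^\dagger = i_{[a]}^3$, $X_{[a,b]}^\dagger = X_{[a,b]}$, and $(\omega H_{[a,b]})^\dagger = (\omega H_{[a,b]})^7$ (using $(\omega H)^2 = iI$, whence $\omega H$ has order $8$), each adjoint is a power of a generator and so is representable over $\s{X, CX, CCX, \omega H, S}$ by \cref{prop:gaussiancircs}. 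Concatenating these circuits yields a circuit for $V$. A single clean ancilla suffices for the whole sequence: each generator's circuit returns its ancilla to $\ket{0}$, so the same clean ancilla can be reused by every factor.

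The argument is routine given the two prior results; the only points requiring any care are (i) observing that the gate set is effectively closed under taking adjoints, so that the inverse reduction can itself be synthesized, and (ii) the ancilla-reuse observation needed for the single-ancilla claim. Neither presents a genuine obstacle.
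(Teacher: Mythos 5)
Your proposal is correct and follows essentially the same route as the paper, which leaves this corollary's proof implicit as the standard packaging of \cref{thm:gaussianeven} (the exact synthesis of $U_{2^n}(\Di)$ over the generators) with \cref{prop:gaussiancircs} (the circuit constructions for those generators using one clean ancilla). Your explicit checks — that $\omega H$ has entries $(1\pm i)/2\in\Di$, that the generator set is closed under adjoints via $i_{[a]}^3$, $X_{[a,b]}$, and $(\omega H_{[a,b]})^7$, and that the single clean ancilla can be reused across factors — are exactly the routine details the paper omits.
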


\begin{corollary}
	\label{prop:ancillafreegauss}
	Let $n\geq 4$. A matrix $V\in U_{2^n}(\Di)$ can be exactly represented by an 
	ancilla-free $n$-qubit circuit over $\s{ X, CX, CCX, \omega H, S}$ 
	if and only if $\det{V} = 1$.
\end{corollary}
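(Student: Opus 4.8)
The plan is to mirror the proof of \cref{prop:ancillafreeunreal}, adapting each step to the Gaussian setting. For the ``only if'' direction I would show that every generator of $\s{X, CX, CCX, \omega H, S}$, acting on $n\geq 4$ qubits without ancillas, has determinant $1$; since an ancilla-free circuit is a product of such generators, its matrix then also has determinant $1$. The reversible gates $X$, $CX$, and $CCX$ act on the $2^n$ basis states as $2^{n-1}$, $2^{n-2}$, and $2^{n-3}$ disjoint transpositions respectively, which are even permutations once $n\geq 4$ and hence have determinant $1$. For the phase-type gates, a single-qubit gate $g$ placed on one of $n$ qubits contributes $(\det g)^{2^{n-1}}$; since $\det S = i$ and $\det(\omega H) = -i$, and $2^{n-1}$ is divisible by $4$ for $n\geq 4$, both contribute $1$. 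This settles the forward implication.

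For the converse, suppose $\det V = 1$. As in \cref{prop:ancillafreeunreal}, I would first replace the reduction operators of \cref{lem:gaussianvector} and \cref{lem:gaussiandenommodtwo} by determinant-$1$ combinations. The operator $\omega H\, i_{[1]}^{m_1} i_{[2]}^{m_2}$ used there has determinant $i^{m_1+m_2-1}$, which need not equal $1$; the fix is to absorb the stray power of $i$ into neighbouring coordinates so that only determinant-$1$ two-level operators (built from products such as $\omega H S$ and the phase-transfer operator $\mathrm{diag}(i,-i)$) appear in the reduction, while still lowering the $(1+i)$-denominator exponent. These determinant-$1$ two-level operators can be represented ancilla-free for $n\geq 4$ by explicit constructions analogous to those used for the $\Drminustwo$ case in \cref{app:unreal}.

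With the reduction carried out through determinant-$1$ operators, an analogue of \cref{lem:gaussiancolumnevendenom} reduces each column of $V$ to $i^{m}e_j$ rather than to $e_j$ (the residual unit of $\Zi$ being some power of $i$), so that after processing all columns one obtains $G_1\cdots G_\ell V = D$ with $D$ a diagonal unitary whose entries are powers of $i$. Since each $G_k$ and $V$ have determinant $1$, we get $\det D = 1$, so the exponents $k_1,\ldots,k_{2^n}$ with $D = \mathrm{diag}(i^{k_1},\ldots,i^{k_{2^n}})$ satisfy $\sum_j k_j \equiv 0 \pmod 4$. Finally I would clear $D$ using the determinant-$1$ two-level operator $\mathrm{diag}(i,-i)$, which transfers a unit of phase between a pair of coordinates while preserving $\sum_j k_j \pmod 4$: transferring all phase onto a single coordinate leaves $i^{\sum_j k_j} = 1$ there, yielding the identity. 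This parallels the use of the four-level $I\otimes Z$ operator in \cref{prop:ancillafreeunreal}.

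I expect the main obstacle to be the ancilla-free circuit constructions for the determinant-$1$ building blocks, i.e.\ establishing that the relevant determinant-$1$ two-level operators over $\s{X, CX, CCX, \omega H, S}$ genuinely require no ancilla when $n\geq 4$; this is the Gaussian analogue of the constructions in \cref{app:unreal} and carries the real technical weight. The accompanying bookkeeping---checking that each substitution both preserves the denominator-exponent reduction of \cref{lem:gaussianvector} and keeps the determinant equal to $1$---is routine but must be verified case by case, exactly as in the substitution table for \cref{prop:ancillafreeunreal}.
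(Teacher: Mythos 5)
Your proposal is correct and follows essentially the same route as the paper's proof: replace the reduction operators of \cref{lem:gaussianvector} and \cref{lem:gaussiandenommodtwo} by determinant-$1$ two-level operators (the paper's substitution table uses exactly $\omega SH$, $\omega HS$, and $iZ=\mathrm{diag}(i,-i)$), relax the column lemma to reduce columns to $i^m e_j$ via the two-level $iX$, and clear the resulting diagonal of powers of $i$ by telescoping phase transfers with $iZ_{[j,j+1]}$, using $\det D = \det V = 1$ at the end; the ancilla-free constructions of these building blocks are likewise deferred to an appendix in the paper. Your explicit determinant count for the forward direction is slightly more detailed than the paper's one-line remark, but the content is the same.
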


\begin{proof}
  We proceed as in the proof of \cref{prop:ancillafreeunreal}.  In
  particular, the least $(1+i)$-denominator exponent can be reduced in
  \cref{lem:gaussiandenommodtwo} by substituting $\omega
  Hi_{[1]}^{m_1}i_{[2]}^{m_2}$ with ancilla-free two-level generators
  as follows:
  \begin{align*}
  	\omega H &\rightarrow (\omega S H), \\
	\omega Hi_{[1]} &\rightarrow (\omega H S)(iZ), \\
	\omega Hi_{[2]} &\rightarrow (\omega H S), \mbox{ and } \\
	\omega Hi_{[1]}i_{[2]} &\rightarrow (\omega SH)(iZ).
  \end{align*}
  Moreover, each parenthesized two-level operator on the right hand
  side has determinant $1$ and can be exactly represented over $\s{ X,
    CX, CCX, \omega H, S}$ without ancillas when $n \geq 4$ as shown
  in \cref{app:gaussian}.
	
  An analogue of \cref{lem:gaussiancolumnevendenom} where $G_1\cdots
  G_\ell v = i^me_j$ holds by using the two-level $iX$ operator, which
  similarly has determinant $1$ and is representable without ancillas
  as shown in \cref{app:gaussian}. Again, there exist generators
  $G_1,\dots,G_\ell$ from the set $\s{iZ_{[a,b]}, iX_{[a,b]}, \omega
    SH_{[a,b]}, \omega HS_{[a,b]}}$ such that \[ G_1\cdots G_\ell V =
  D \] where $D$ is a diagonal unitary with entries $i^m$.  We can
  then use the $n$-qubit two-level $i Z$ operator to remove the phases
  as follows. Suppose the $j$th diagonal entry is $i^{m_j}$ and let
  $N=2^n$. It can then be observed that
  \[
  (iZ_{[1,2]})^{-m_1}(iZ_{[2,3]})^{-m_1-m_2}
  \dots (iZ_{[N-1,N]})^{-m_1-m_2-\dots-m_{N-1}}D =  
  \begin{bmatrix}
  	I_{N-1} & 0 \\
	0 & i^{\sum_{j=1}^N m_j}
	\end{bmatrix}= \begin{bmatrix}
	I_{N-1} & 0 \\
	0 & \det D
	\end{bmatrix}
  \]
  Since $\det{D} = \det{V} = 1$, the proof is complete.
\end{proof}

\subsubsection{Super-Gaussian Clifford+\texorpdfstring{$T$}{T} operators}

As in the integral case, the characterization of Gaussian Clifford+$T$
circuits as unitaries over $\Di$ requires the unusual $\omega H$ gate
as a generator. Replacing $\omega H$ with $H$ yields a slightly larger
set of unitaries with matrices of the form
\begin{equation}
  \label{eq:gaussianmatrixalt}
  V=\frac{1}{\sqrt{2}^q}W
\end{equation}
where $q\in\N$ and $W$ is a matrix over $\Zi$.

We use \cref{cor:gaussianevencharac} together with
\cref{cor:gaussiancircs} to show that any unitary of the form of
\cref{eq:gaussianmatrixalt} can be represented by a circuit over the
gate set $\s{X, CX, CCX, H, S}$. In this case we use
$\sqrt{2}$-denominator exponents and, as in \cref{ssec:integral}, we
make use of the fact that $\sqrt{2}\notin\Zi$.  The relevant
generators are now
\begin{equation}
  \label{eq:gaussiangens}
  \s{i_{[a]},X_{[a,b]},\omega H_{[a,b]}, \omega I_n}.
\end{equation}

\begin{lemma}
  \label{lem:gaussiandenomexpprop}
  If $V\neq 0$ is as in \eqref{eq:gaussianmatrixalt}, then all the
  denominator exponents of $V$ are congruent modulo 2.
\end{lemma}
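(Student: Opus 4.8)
The plan is to follow the proof of \cref{lem:integraldenomexpprop} essentially verbatim, replacing the ring $\Z$ by $\Zi$. Suppose $q<q'$ are both $\sqrt{2}$-denominator exponents of $V$, so that $V = W/\sqrt{2}^q = W'/\sqrt{2}^{q'}$ for some matrices $W$ and $W'$ over $\Zi$. Then $W' = \sqrt{2}^{q'-q}W$, so $\sqrt{2}^{q'-q}W$ is a Gaussian integer matrix. The goal is to conclude that $q'-q$ is even, which yields $q\equiv q'\pmod 2$.

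The key step is to observe that multiplication by $\sqrt{2}$ sends every nonzero element of $\Zi$ outside of $\Zi$. Indeed, if $z = a+bi$ with $a,b\in\Z$, then $\sqrt{2}z = \sqrt{2}a + (\sqrt{2}b)i$, and this lies in $\Zi$ only if $\sqrt{2}a\in\Z$ and $\sqrt{2}b\in\Z$; since $\sqrt{2}$ is irrational, this forces $a=b=0$, i.e., $z=0$. Hence if $q'-q$ were odd, then writing $\sqrt{2}^{q'-q} = 2^{(q'-q-1)/2}\sqrt{2}$ shows that $\sqrt{2}$ times each entry of the Gaussian integer matrix $2^{(q'-q-1)/2}W$ is again a Gaussian integer, forcing $W=0$ and thus $V=0$, contrary to hypothesis.

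Therefore $q'-q$ must be even and $q\equiv q'\pmod 2$, which completes the argument. No real obstacle arises here: the only ingredient beyond the integral case is the fact that $\sqrt{2}\notin\Zi$, which follows immediately from the irrationality of $\sqrt{2}$ applied separately to the real and imaginary parts of a Gaussian integer.
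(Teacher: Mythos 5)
Your proposal is correct and is exactly the argument the paper intends: the paper's proof of this lemma simply says ``similar to the proof of \cref{lem:integraldenomexpprop},'' and you have carried out that adaptation faithfully, with the only new ingredient being the observation that $\sqrt{2}\,z\in\Zi$ forces $z=0$ for $z\in\Zi$, which you justify correctly by comparing real and imaginary parts.
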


\begin{proof}
  Similar to the proof of \cref{lem:integraldenomexpprop}.
\end{proof}

\begin{theorem}
  \label{thm:gaussian}
  If $V=W / \sqrt{2}^q$ is an $n$-dimensional unitary matrix such that
  $W$ is a matrix over $\Zi$, then there exist generators $G_1,
  \ldots, G_\ell$ from \eqref{eq:gaussiangens} such that $G_1 \cdots
  G_\ell V = I$.
\end{theorem}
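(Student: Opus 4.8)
The plan is to reduce the statement to the already-established even case, \cref{thm:gaussianeven}, by a case analysis on the parity of the denominator exponent $q$, mirroring the structure of the super-integral result \cref{thm:int}. The crucial feature that I would exploit is that the extra generator now available is the global phase $\omega I_n$, which is \emph{unitary} because $|\omega| = 1$. This is precisely what makes the odd case go through painlessly and removes any need for an analogue of the dimension-parity obstruction \cref{lem:odd}. Note first that, by \cref{lem:gaussiandenomexpprop}, every $\sqrt{2}$-denominator exponent of $V$ has the same parity, so the dichotomy ``$q$ even'' versus ``$q$ odd'' is well-posed and independent of the chosen representation $V = W/\sqrt{2}^q$.

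First I would dispose of the even case. If $q$ is even, then $\sqrt{2}^q = 2^{q/2}$, so $V = W/2^{q/2}$ with $W$ a matrix over $\Zi$; hence $V$ is an $n$-dimensional unitary matrix with entries in $\Di$. Applying \cref{thm:gaussianeven} then produces generators $G_1,\dots,G_\ell$ from \eqref{eq:evengaussiangens} with $G_1\cdots G_\ell V = I$, and since the generators in \eqref{eq:evengaussiangens} all occur in \eqref{eq:gaussiangens}, this settles the even case immediately.

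For the odd case I would use the global phase to shift the parity. Since $\omega = (1+i)/\sqrt{2}$, we have $(\omega I_n)V = \omega V = (1+i)W/\sqrt{2}^{q+1}$. When $q$ is odd, $q+1$ is even, so $(\omega I_n)V = (1+i)W/2^{(q+1)/2}$; because $(1+i)W$ is again a matrix over $\Zi$, the product $(\omega I_n)V$ is an $n$-dimensional unitary matrix with entries in $\Di$ (it is unitary, being a product of the unitaries $\omega I_n$ and $V$). Applying \cref{thm:gaussianeven} to $(\omega I_n)V$ yields generators $G_1,\dots,G_\ell$ from \eqref{eq:evengaussiangens} such that $G_1\cdots G_\ell(\omega I_n)V = I$, and since $\omega I_n$ itself belongs to \eqref{eq:gaussiangens}, the sequence $G_1,\dots,G_\ell,\omega I_n$ reduces $V$ to $I$ as required.

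I do not expect a serious obstacle here: both branches of the parity split collapse directly onto \cref{thm:gaussianeven}, and the only verification needed is the routine ring computation that $(1+i)W$ stays over $\Zi$. The one point genuinely worth emphasizing is the contrast with the integral setting. There one cannot simply multiply by $\sqrt{2}\,I_n$ to change the parity of the exponent, because $\sqrt{2}\,I_n$ fails to be unitary; this is exactly what forces the more delicate argument of \cref{lem:odd} together with the even-dimension gate $I_{n/2}\otimes H$. In the Gaussian case the unit modulus of $\omega$ makes the corresponding manoeuvre trivial, which is why the global phase $\omega I_n$, rather than a Hadamard-type gate, is the natural extra generator in \eqref{eq:gaussiangens}.
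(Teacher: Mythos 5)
Your proposal is correct and follows essentially the same route as the paper: split on the parity of $q$ (well-defined by \cref{lem:gaussiandenomexpprop}), apply \cref{thm:gaussianeven} directly when $q$ is even, and premultiply by the unitary global phase $\omega I_n$ to land in $U_n(\Di)$ when $q$ is odd. Your added remarks --- the explicit check that $\omega V = (1+i)W/\sqrt{2}^{\,q+1}$ has entries in $\Di$, and the contrast with the integral case where the analogous move fails because $\sqrt{2}\,I_n$ is not unitary --- are correct elaborations of details the paper leaves implicit.
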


\begin{proof}
  If $q$ is even, the result follows from \cref{thm:gaussianeven}. If
  $q$ is odd, then $(\omega I_n)V$ is a matrix with entries in
  $\Di$. Hence the result follows by applying \cref{thm:gaussianeven}
  to $(\omega I_n)V$.
\end{proof}

\begin{corollary}
  \label{cor:gaussiancharac}
  A matrix $V$ can be exactly represented by an $n$-qubit circuit over
  $\s{X, CX, CCX, H, S}$ if and only if $V$ is a $2^n$-dimensional
  unitary matrix such $V=W/\sqrt{2}^q$ for some matrix $W$ over $\Zi$
  and some $q\in\N$. Moreover, a single ancilla always suffices to
  construct a circuit for $V$.
\end{corollary}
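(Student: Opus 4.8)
The plan is to mirror the structure of the preceding characterization corollaries (for instance \cref{cor:gaussianevencharac} and \cref{cor:intcharac}) by combining the exact-synthesis result of \cref{thm:gaussian} with the circuit constructions of \cref{cor:gaussiancircs}. Since the statement is an ``if and only if,'' I would treat the two implications separately, and then address the ancilla count at the end.

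For the forward (soundness) direction, I would observe that each generator of the gate set $\s{X, CX, CCX, H, S}$ has entries of the form $W/\sqrt{2}^q$ with $W$ over $\Zi$: indeed $X$, $CX$, and $CCX$ have entries in $\Z\subseteq\Zi$, the gate $S$ has entries $0,1,i\in\Zi$, and $H=W/\sqrt{2}$ with $W$ over $\Z$. The matrices of the required form are precisely the matrices with entries in the ring $\Zi[1/\sqrt{2}]$ (any finite collection of entries admits a common denominator $\sqrt{2}^q$), and this ring is closed under the operations used to build circuits. Hence matrix products and tensor products of the generators again have entries in $\Zi[1/\sqrt{2}]$, so every circuit over the gate set represents a matrix of the claimed form; being a product of unitaries, it is moreover unitary.

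For the reverse (completeness) direction, I would start from a $2^n$-dimensional unitary $V=W/\sqrt{2}^q$ with $W$ over $\Zi$ and invoke \cref{thm:gaussian} to obtain generators $G_1,\ldots,G_\ell$ from \eqref{eq:gaussiangens} with $G_1\cdots G_\ell V = I$, so that $V = G_\ell^\dagger\cdots G_1^\dagger$. The one point that needs checking is that each $G_i^\dagger$ is itself a product of generators from \eqref{eq:gaussiangens}, so that $V$ (and not merely its inverse) is synthesized. This holds because each generator is a unitary of finite order whose adjoint is one of its own positive powers: $X_{[a,b]}$ is self-adjoint, $i_{[a]}^\dagger = i_{[a]}^3$, and $(\omega H_{[a,b]})^\dagger = (\omega H_{[a,b]})^7$ since $H^2=I$ and $\omega^8=1$ give $(\omega H_{[a,b]})^8 = I$; the case of $\omega I_n$ is analogous. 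Thus $V$ is a product of generators from \eqref{eq:gaussiangens}, and by \cref{cor:gaussiancircs} each such generator is exactly representable over $\s{X, CX, CCX, H, S}$ using at most one clean ancilla.

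For the ``single ancilla'' claim, I would note that each generator subcircuit uses at most one clean ancilla which is returned to the $\ket{0}$ state after use, so the same ancilla qubit can be reused across every factor in the product, and one ancilla suffices for the whole circuit. I do not expect a genuine obstacle here, as the substantive work is already carried by \cref{thm:gaussian}; the only subtlety is the bookkeeping verification that the generating set is closed under adjoints, which is what lets completeness produce $V$ directly rather than $V^{-1}$.
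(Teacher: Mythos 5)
Your overall structure---soundness by closure of the generator form under products and tensor products, completeness via \cref{thm:gaussian} and \cref{cor:gaussiancircs}, closure of the generating set \eqref{eq:gaussiangens} under adjoints because each generator has finite order, and reuse of a single clean ancilla across the factors---is exactly the route the paper intends; it states this corollary without proof because it follows the same pattern as \cref{cor:intcharac} and \cref{cor:gaussianevencharac}.

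There is, however, one genuine flaw in your forward direction. You assert that the matrices of the form $W/\sqrt{2}^q$ with $W$ over $\Zi$ are \emph{precisely} the matrices with entries in the ring $\Zi[1/\sqrt{2}]$. This is false: since $\omega=(1+i)/\sqrt{2}$ and $1/2=(1/\sqrt{2})^2$, one has $\Zi[1/\sqrt{2}]=\Domega$, the ring of \emph{all} Clifford+$T$ entries. The element $1+1/\sqrt{2}$ lies in $\Zi[1/\sqrt{2}]$ but is not of the form $w/\sqrt{2}^q$ with $w\in\Zi$ (multiplying by $\sqrt{2}^q$ always leaves a nonzero $\sqrt{2}$-component), and the unitary $T=\operatorname{diag}(1,\omega)$ has entries in $\Zi[1/\sqrt{2}]$ but is not of the form $W/\sqrt{2}^q$ with $W$ over $\Zi$. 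So as written, your soundness argument only shows that circuit entries lie in $\Domega$, which is true of every Clifford+$T$ circuit and does not establish the necessary condition of the corollary. Indeed, the set $\s{w/\sqrt{2}^q : w\in\Zi,\ q\in\N}$ is not even closed under addition ($1+\sqrt{2}$ is not of this form), so the ``common denominator'' parenthetical cannot be repaired entrywise. The fix is easy and is what the paper implicitly does: argue directly at the level of whole matrices that if $V=W/\sqrt{2}^q$ and $V'=W'/\sqrt{2}^{q'}$ with $W,W'$ over $\Zi$, then $VV'=WW'/\sqrt{2}^{q+q'}$ and $V\otimes V'=(W\otimes W')/\sqrt{2}^{q+q'}$ are again of this form, and each generator of $\s{X,CX,CCX,H,S}$ is of this form ($q=0$ except $H$, which has $q=1$). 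With that one-line replacement your proof is complete and matches the paper's.
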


\section{Conclusion}
\label{sec:conc}

In this paper, we provided number-theoretic characterizations for
several classes of restricted but universal Clifford+$T$ circuits,
focusing on integral, real, imaginary, and Gaussian circuits. We
showed that a unitary matrix can be exactly represented by an
$n$-qubit integral Clifford+$T$ circuit if and only if it is an
element of the group $U_{2^n}(\D)$. We then established that real,
imaginary, and Gaussian circuits similarly correspond to the groups
$U_{2^n}(\Drtwo)$, $U_{2^n}(\Drminustwo)$, and $U_{2^n}(\Di)$,
respectively.

An avenue for future research is to improve the performance, in
runtime or gate count, of the algorithms introduced in the present
paper. Further afield, it would be interesting to study restricted
Clifford+$T$ circuits in the context of fault-tolerance, randomized
benchmarking, or simulation. While these and many other questions
remain open, we hope that our characterizations will help deepen our
understanding of Clifford+$T$ circuits, restricted or not.

\section{Acknowledgements}
\label{sec:acknowledgements}

We would like to thank Alexandre Cl\'ement, Sarah Li, Rob Noble, and
Kira Scheibelhut for their valuable insights as well as Xiaoning Bian,
Simon Burton, and Peter Selinger for providing helpful feedback on an
earlier version of the paper. We are especially indebted to Xiaoning
Bian for the proof of \cref{lem:odd} and to Peter Selinger for crucial
observations on restrictions of the Clifford+$T$ gate set.

\bibliographystyle{abbrvnat} 
\bibliography{restricted}

\begin{thebibliography}{39}
\providecommand{\natexlab}[1]{#1}
\providecommand{\url}[1]{\texttt{#1}}
\expandafter\ifx\csname urlstyle\endcsname\relax
  \providecommand{\doi}[1]{doi: #1}\else
  \providecommand{\doi}{doi: \begingroup \urlstyle{rm}\Url}\fi

\bibitem[Aaronson et~al.(2017)Aaronson, Grier, and Schaeffer]{AGS17}
S.~Aaronson, D.~Grier, and L.~Schaeffer.
\newblock The classification of reversible bit operations.
\newblock In \emph{Proceedings of the 8th Innovations in Theoretical Computer
  Science Conference}, volume~67 of \emph{LIPIcs}, pages 23:1--23:34, 2017.
\newblock \doi{10.4230/LIPIcs.ITCS.2017.23}.
\newblock Also available from \arxiv{1504.05155}.

\bibitem[Aharonov(2003)]{Aharonov03asimple}
D.~Aharonov.
\newblock A simple proof that {T}offoli and {H}adamard are quantum universal.
\newblock Preprint available from \arxiv{quant-ph/0301040}, Jan. 2003.

\bibitem[{Amy} and {Mosca}(2019)]{AM}
M.~{Amy} and M.~{Mosca}.
\newblock {T}-count optimization and {R}eed-{M}uller codes.
\newblock \emph{IEEE Transactions on Information Theory}, 65\penalty0
  (8):\penalty0 4771--4784, 2019.
\newblock \doi{10.1109/TIT.2019.2906374}.
\newblock Also available from \arxiv{1601.07363}.

\bibitem[Amy et~al.(2013)Amy, Maslov, Mosca, and Roetteler]{AMMR}
M.~Amy, D.~Maslov, M.~Mosca, and M.~Roetteler.
\newblock A meet-in-the-middle algorithm for fast synthesis of depth-optimal
  quantum circuits.
\newblock \emph{IEEE Transactions on Computer-Aided Design of Integrated
  Circuits and Systems}, 32\penalty0 (6):\penalty0 818--830, 2013.
\newblock \doi{10.1109/TCAD.2013.2244643}.
\newblock Also available from \arxiv{1206.0758}.

\bibitem[Amy et~al.(2014)Amy, Maslov, and Mosca]{AMM}
M.~Amy, D.~Maslov, and M.~Mosca.
\newblock Polynomial-time {T}-depth optimization of {C}lifford+{T} circuits via
  matroid partitioning.
\newblock \emph{IEEE Transactions on Computer-Aided Design of Integrated
  Circuits and Systems}, 33\penalty0 (10):\penalty0 1476--1489, 2014.
\newblock \doi{10.1109/TCAD.2014.2341953}.
\newblock Also available from \arxiv{1303.2042}.

\bibitem[Amy et~al.(2017)Amy, Chen, and J.~Ross]{ACR17}
M.~Amy, J.~Chen, and N.~J.~Ross.
\newblock A finite presentation of {CNOT}-dihedral operators.
\newblock In \emph{Proceedings of the 14th International Conference on Quantum
  Physics and Logic}, QPL '17, pages 84--97, 2017.
\newblock \doi{10.4204/EPTCS.266.5}.

\bibitem[Artin(1991)]{artin}
M.~Artin.
\newblock \emph{Algebra}.
\newblock Prentice Hall, 1991.

\bibitem[{Backens} and {Kissinger}(2018)]{BK2019}
M.~{Backens} and A.~{Kissinger}.
\newblock {ZH}: A complete graphical calculus for quantum computations
  involving classical non-linearity.
\newblock In \emph{Proceedings of the 15th International Conference on Quantum
  Physics and Logic}, QPL '18, pages 23--42, 2018.
\newblock \doi{10.4204/EPTCS.287.2}.

\bibitem[Barenco et~al.(1995)Barenco, Bennett, Cleve, DiVincenzo, Margolus,
  Shor, Sleator, Smolin, and Weinfurter]{BBC95}
A.~Barenco, C.~H. Bennett, R.~Cleve, D.~P. DiVincenzo, N.~Margolus, P.~Shor,
  T.~Sleator, J.~A. Smolin, and H.~Weinfurter.
\newblock Elementary gates for quantum computation.
\newblock \emph{Physical Review A}, 52:\penalty0 3457--3467, 1995.
\newblock \doi{10.1103/PhysRevA.52.3457}.
\newblock Also available from \arxiv{quant-ph/9503016}.

\bibitem[Bian(2019)]{bian}
X.~Bian.
\newblock Private communication, July 2019.

\bibitem[Bian and Selinger(2015)]{BS2015}
X.~Bian and P.~Selinger.
\newblock Relations for the group of 2-qubit {C}lifford+{T} operators.
\newblock Talk given at the Quantum Programming and Circuits Workshop. Slides
  available from
  \url{https://www.mathstat.dal.ca/~xbian/talks/slide_cliffordt2.pdf}, June
  2015.

\bibitem[Bocharov et~al.(2013)Bocharov, Gurevich, and Svore]{bgs13}
A.~Bocharov, Y.~Gurevich, and K.~M. Svore.
\newblock Efficient decomposition of single-qubit gates into {V} basis
  circuits.
\newblock \emph{Physical Review A}, 88:\penalty0 012313, 2013.
\newblock \doi{10.1103/PhysRevA.88.012313}.
\newblock Also available from \arxiv{1303.1411}.

\bibitem[Bocharov et~al.(2015)Bocharov, Roetteler, and Svore]{fallback}
A.~Bocharov, M.~Roetteler, and K.~M. Svore.
\newblock Efficient synthesis of probabilistic quantum circuits with fallback.
\newblock \emph{Physical Review A}, 91:\penalty0 052317, 2015.
\newblock \doi{10.1103/PhysRevA.91.052317}.
\newblock Also available from \arxiv{1409.3552}.

\bibitem[Bouland and Aaronson(2014)]{ab}
A.~Bouland and S.~Aaronson.
\newblock Generation of universal linear optics by any beam splitter.
\newblock \emph{Physical Review A}, 89:\penalty0 062316, 2014.
\newblock \doi{10.1103/PhysRevA.89.062316}.
\newblock Also available from \arxiv{1310.6718}.

\bibitem[De~Vos et~al.(2012)De~Vos, Van~Laer, and Vandenbrande]{devos}
A.~De~Vos, R.~Van~Laer, and S.~Vandenbrande.
\newblock The group of dyadic unitary matrices.
\newblock \emph{Open Systems \& Information Dynamics}, 19\penalty0
  (01):\penalty0 1250003, 2012.
\newblock \doi{10.1142/S1230161212500035}.

\bibitem[Forest et~al.(2015)Forest, Gosset, Kliuchnikov, and McKinnon]{fgkm15}
S.~Forest, D.~Gosset, V.~Kliuchnikov, and D.~McKinnon.
\newblock Exact synthesis of single-qubit unitaries over {C}lifford-cyclotomic
  gate sets.
\newblock \emph{Journal of Mathematical Physics}, 56\penalty0 (8):\penalty0
  082201, 2015.
\newblock \doi{10.1063/1.4927100}.
\newblock Also available from \arxiv{1501.04944}.

\bibitem[Giles and Selinger(2013{\natexlab{a}})]{GS13}
B.~Giles and P.~Selinger.
\newblock Exact synthesis of multiqubit {C}lifford+{T} circuits.
\newblock \emph{Physical Review A}, 87:\penalty0 032332, 2013{\natexlab{a}}.
\newblock \doi{10.1103/PhysRevA.87.032332}.
\newblock Also available from \arxiv{1212.0506}.

\bibitem[Giles and Selinger(2013{\natexlab{b}})]{ma-remarks}
B.~Giles and P.~Selinger.
\newblock Remarks on {Matsumoto} and {Amano}'s normal form for single-qubit
  {C}lifford+{T} operators.
\newblock Preprint available from \arxiv{1312.6584}, Dec. 2013{\natexlab{b}}.

\bibitem[Gosset et~al.(2014)Gosset, Kliuchnikov, Mosca, and Russo]{GKMR}
D.~Gosset, V.~Kliuchnikov, M.~Mosca, and V.~Russo.
\newblock An algorithm for the {T}-count.
\newblock \emph{Quantum Information \& Computation}, 14\penalty0
  (15-16):\penalty0 1261--1276, 2014.
\newblock \doi{10.26421/QIC14.15-16}.
\newblock Also available from \arxiv{1308.4134}.

\bibitem[Greylyn(2014)]{Gr2014}
S.~Greylyn.
\newblock Generators and relations for the group
  {$U_4(\mathbb{Z}[1/\sqrt{2},i])$}.
\newblock Master's thesis. Available from \arxiv{1408.6204}, 2014.

\bibitem[Grier and Schaeffer(2016)]{GS18}
D.~Grier and L.~Schaeffer.
\newblock The classification of stabilizer operations over qubits.
\newblock Preprint available from \arxiv{1603.03999}, 2016.

\bibitem[Hashagen et~al.(2018)Hashagen, Flammia, Gross, and Wallman]{HFGW18}
A.~K. Hashagen, S.~T. Flammia, D.~Gross, and J.~J. Wallman.
\newblock Real randomized benchmarking.
\newblock \emph{{Quantum}}, 2:\penalty0 85, 2018.
\newblock \doi{10.22331/q-2018-08-22-85}.
\newblock Also available from \arxiv{1801.06121}.

\bibitem[Heyfron and Campbell(2018)]{CH2018}
L.~E. Heyfron and E.~T. Campbell.
\newblock An efficient quantum compiler that reduces {T} count.
\newblock \emph{Quantum Science and Technology}, 4\penalty0 (1):\penalty0
  015004, 2018.
\newblock \doi{10.1088/2058-9565/aad604}.
\newblock Also available from \arxiv{1712.01557}.

\bibitem[Jeandel et~al.(2017)Jeandel, Perdrix, and Vilmart]{JPV2017}
E.~Jeandel, S.~Perdrix, and R.~Vilmart.
\newblock {Y}-calculus: {A} language for real matrices derived from the
  {ZX}-calculus.
\newblock In \emph{Proceedings of the 14th International Conference on Quantum
  Physics and Logic}, QPL '17, pages 23--57, 2017.
\newblock \doi{10.4204/EPTCS.266.2}.

\bibitem[Kliuchnikov and Yard(2015)]{ky15}
V.~Kliuchnikov and J.~Yard.
\newblock A framework for exact synthesis.
\newblock Preprint available from \arxiv{1504.04350}, April 2015.

\bibitem[Kliuchnikov et~al.(2013)Kliuchnikov, Maslov, and Mosca]{KMM-exact}
V.~Kliuchnikov, D.~Maslov, and M.~Mosca.
\newblock Fast and efficient exact synthesis of single-qubit unitaries
  generated by {C}lifford and {T} gates.
\newblock \emph{Quantum Information \& Computation}, 13\penalty0
  (7-8):\penalty0 607--630, 2013.
\newblock \doi{10.26421/QIC13.7-8}.
\newblock Also available from \arxiv{1206.5236}.

\bibitem[Kliuchnikov et~al.(2014)Kliuchnikov, Bocharov, and Svore]{kbs14}
V.~Kliuchnikov, A.~Bocharov, and K.~M. Svore.
\newblock Asymptotically optimal topological quantum compiling.
\newblock \emph{Physical Review Letters}, 112:\penalty0 140504, 2014.
\newblock \doi{10.1103/PhysRevLett.112.140504}.
\newblock Also available from \arxiv{1310.4150}.

\bibitem[{Kliuchnikov} et~al.(2016){Kliuchnikov}, {Maslov}, and
  {Mosca}]{kmm-approx}
V.~{Kliuchnikov}, D.~{Maslov}, and M.~{Mosca}.
\newblock Practical approximation of single-qubit unitaries by single-qubit
  quantum {C}lifford and {T} circuits.
\newblock \emph{IEEE Transactions on Computers}, 65\penalty0 (1):\penalty0
  161--172, 2016.
\newblock \doi{10.1109/TC.2015.2409842}.
\newblock Also available from \arxiv{1212.6964}.

\bibitem[Matsumoto and Amano(2008)]{MA08}
K.~Matsumoto and K.~Amano.
\newblock Representation of quantum circuits with {Clifford} and $\pi$/8 gates.
\newblock Preprint available from \arxiv{0806.3834}, June 2008.

\bibitem[Meuli et~al.(2018)Meuli, Soeken, and Micheli]{MSdM2018}
G.~Meuli, M.~Soeken, and G.~D. Micheli.
\newblock {SAT}-based \{{CNOT}, {T}\} quantum circuit synthesis.
\newblock In \emph{Proceedings of the 10th International Conference on
  Reversible Computation}, RC '17, pages 175--188, 2018.
\newblock \doi{10.1007/978-3-319-99498-7_12}.

\bibitem[Nielsen and Chuang(2000)]{NC}
M.~A. Nielsen and I.~L. Chuang.
\newblock \emph{{Quantum Computation and Quantum Information}}.
\newblock Cambridge Series on Information and the Natural Sciences. Cambridge
  University Press, 2000.
\newblock ISBN 9780521635035.
\newblock \doi{10.1017/CBO9780511976667}.

\bibitem[Parzanchevski and Sarnak(2018)]{ps18}
O.~Parzanchevski and P.~Sarnak.
\newblock {S}uper-{G}olden-{G}ates for {PU(2)}.
\newblock \emph{Advances in Mathematics}, 327:\penalty0 869 -- 901, 2018.
\newblock \doi{https://doi.org/10.1016/j.aim.2017.06.022}.
\newblock Special volume honoring David Kazhdan. Also available from
  \arxiv{1704.02106}.

\bibitem[Ross(2015)]{r15}
N.~J. Ross.
\newblock Optimal ancilla-free {C}lifford+{V} approximation of z-rotations.
\newblock \emph{Quantum Information \& Computation}, 15\penalty0
  (11–12):\penalty0 932–950, 2015.
\newblock \doi{10.26421/QIC15.11-12}.
\newblock Also available from \arxiv{1409.4355}.

\bibitem[Ross and Selinger(2016)]{RS16}
N.~J. Ross and P.~Selinger.
\newblock Optimal ancilla-free {C}lifford+{T} approximation of z-rotations.
\newblock \emph{Quantum Information \& Computation}, 16\penalty0
  (11-12):\penalty0 901--953, 2016.
\newblock \doi{10.26421/QIC16.11-12}.
\newblock Also available from \arxiv{1403.2975}.

\bibitem[Rudolph and Grover(2002)]{RL2002}
T.~Rudolph and L.~Grover.
\newblock A 2 rebit gate universal for quantum computing.
\newblock Preprint available from \arxiv{quant-ph/0210187}, Nov. 2002.

\bibitem[Selinger(2015)]{Sel}
P.~Selinger.
\newblock Generators and relations for $n$-qubit {C}lifford operators.
\newblock \emph{Logical Methods in Computer Science}, 11\penalty0
  (10):\penalty0 1--17, 2015.
\newblock \doi{10.2168/LMCS-11(2:10)2015}.
\newblock Also available from \arxiv{1310.6813}.

\bibitem[Shi(2003)]{Shi2003}
Y.~Shi.
\newblock Both {T}offoli and controlled-{NOT} need little help to do universal
  quantum computing.
\newblock \emph{Quantum Information \& Computation}, 3\penalty0 (1):\penalty0
  84--92, 2003.
\newblock \doi{10.26421/QIC3.1}.
\newblock Also available from \arxiv{quant-ph/0205115}.

\bibitem[Vilmart(2018)]{vilmart2019}
R.~Vilmart.
\newblock A {ZX}-calculus with triangles for {T}offoli-{H}adamard,
  {C}lifford+{T}, and beyond.
\newblock In \emph{Proceedings of the 15th International Conference on Quantum
  Physics and Logic}, QPL '18, pages 313--344, 2018.
\newblock \doi{10.4204/EPTCS.287.18}.

\bibitem[Welch et~al.(2016)Welch, Bocharov, and Svore]{BSW2016}
J.~Welch, A.~Bocharov, and K.~M. Svore.
\newblock Efficient approximation of diagonal unitaries over the {C}lifford+{T}
  basis.
\newblock \emph{Quantum Information \& Computation}, 16\penalty0
  (1-2):\penalty0 87--104, 2016.
\newblock \doi{10.26421/QIC16.1-2}.
\newblock Also available from \arxiv{1412.5608}.

\end{thebibliography}

\appendix

\section{Ancilla-Free Circuit Constructions}

\subsection{The \texorpdfstring{$\Drminustwo$}{D[√-2]} case}\label{app:unreal}

In this appendix, we give ancilla-free constructions of the two-level
operators $XZ$, $ZX=(XZ)^\dagger$, $FZ$, and $ZF$ over $\s{ X, CX,
  CCX, F}$. We progressively build up to the necessary operators.

\begin{lemma}
	For any $n$, the $n$-qubit two-level $X$ and $Z$ gates can be represented 
	over the gate set $\s{X, CX, CCX, F}$ with a single dirty ancilla.
\end{lemma}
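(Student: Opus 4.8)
The plan is to handle the two gates separately: the two-level $X$ falls to a direct permutation argument, while the two-level $Z$ reduces to building a fully-controlled phase gate. First I would dispose of the two-level $X$ gate $X_{[a,b]}$ by observing that it is simply the transposition interchanging the basis vectors $e_a$ and $e_b$, and hence a permutation matrix. By \cref{lem:perm}, every permutation matrix is representable over $\s{X, CX, CCX}$, and a fortiori over $\s{X, CX, CCX, F}$, using at most one dirty ancilla, so this case is immediate.

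For the two-level $Z$ gate, the first step is to notice that, since $Z = \operatorname{diag}(1,-1)$, the two-level operator $Z_{[a,b]}$ is diagonal with a single $-1$ in position $b$ and $+1$ elsewhere; that is, $Z_{[a,b]} = (-1)_{[b]}$, a one-level phase depending only on the index $b$. I would realize this phase starting from a fully-controlled gate $C^{n-1}Z$ on all $n$ qubits, which phases exactly the all-ones basis state by $-1$. Conjugating $C^{n-1}Z$ by a layer of $X$ gates that maps the bit string of $b$ to the all-ones string (and back) then produces $(-1)_{[b]}$ for the desired $b$; these conjugating $X$ gates are free in the gate set.

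It then remains to construct $C^{n-1}Z$ over $\s{X, CX, CCX, F}$ with a single dirty ancilla. The key observation, already exploited in the $CZ$ construction of \cref{prop:unrealcircs}, is that $F^2 = iH$ and $F^6 = -iH$, so that $F^6 X F^2 = HXH = Z$ while $F^6 F^2 = F^8 = I$. Applying $F^2$ to the target qubit, then $C^{n-1}X$ (controls on the remaining qubits), then $F^6$ to the target therefore implements $C^{n-1}Z$: when all controls are set the target undergoes $F^6 X F^2 = Z$, and otherwise it undergoes $F^6 F^2 = I$. Since $C^{n-1}X$ is a permutation matrix, \cref{lem:perm} supplies a circuit for it over $\s{X, CX, CCX}$ using at most one dirty ancilla, and the outer $F^2$ and $F^6$ are just powers of the available gate $F$.

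The main obstacle is precisely this final step: manufacturing a $-1$ phase without access to a bare Hadamard, and doing so with a dirty rather than a clean ancilla. Both difficulties are resolved at once, because the $F^2/F^6$ sandwich synthesizes the phase from $F$ alone, and the entire ancilla cost is absorbed into the permutation $C^{n-1}X$, to which \cref{lem:perm} applies with a dirty ancilla. I do not expect the conjugation-by-$X$ bookkeeping or the verification that $Z_{[a,b]} = (-1)_{[b]}$ to present any real difficulty.
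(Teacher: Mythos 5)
Your proposal is correct and follows essentially the same route as the paper: the two-level $X$ is handled as a permutation via \cite{BBC95}, and the two-level $Z$ is obtained by sandwiching a fully-controlled $X$ between $F^2=iH$ and $F^6=-iH$ on the target, so that the sandwich yields $HXH=Z$ when the controls fire and $F^8=I$ otherwise. The extra bookkeeping you add (identifying $Z_{[a,b]}=(-1)_{[b]}$ and conjugating by an $X$ layer) is just the standard permutation reduction the paper leaves implicit.
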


\begin{proof}
	Recall that the two-level $X$ gate is representable over
        $\s{X, CX, CCX}$ with a single dirty ancilla \cite{BBC95}.
        The two-level $Z$ gate can then be constructed as follows.
	\[
	\Qcircuit @C=1em @R=.84em @!R {
		& \qw & \ctrl{4} & \qw & \qw \\
		& \vdots  & & \vdots & \\
		& \qw & \ctrl{2} & \qw & \qw  \\
		& \qw & \qw & \qw & \qw \\
		& \qw & \gate{Z} & \qw & \qw
	}
	\raisebox{-4.25em}{\quad=\quad}
	\Qcircuit @C=1em @R=.7em @!R {
		& \qw & \qw & \ctrl{4} & \qw & \qw & \qw \\
		& \vdots  & & & & \vdots & \\
		& \qw& \qw & \ctrl{2} & \qw & \qw & \qw \\
		& \qw & \qw & \qw & \qw & \qw & \qw \\
		& \qw & \gate{F^2} & \gate{X} & \gate{F^6} & \qw & \qw
	}
	\]
\end{proof}

\begin{lemma}
	For any $n$, the $n$-qubit two-level $ZXF$ gate can be represented over  
	$\s{X, CX, CCX, F}$ with a single dirty ancilla.
\end{lemma}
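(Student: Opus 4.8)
The plan is to reduce the construction of the two-level $ZXF$ gate to the two-level gates already shown to be ancilla-friendly, namely the two-level $X$ and $Z$ gates from the previous lemma, together with the $F$ gate, which acts on a single qubit and hence is trivially available. The key observation the statement is building toward (and which is foreshadowed in the proof of \cref{prop:unrealcircs}) is the relation $(ZXF)^2 = -1$ up to the relevant phases, together with $X(ZXF)X(ZXF)X = ZXF$, so that $ZXF$ behaves like an involution-adjacent operator whose controlled version can be synthesized from controlled-$X$ and uncontrolled single-qubit pieces.

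First I would write $ZXF$ explicitly as a product of the single-qubit gate $F$ with the Pauli gates $Z$ and $X$, all of which act on the \emph{same} pair of levels $[a,b]$. The two-level $X_{[a,b]}$ and $Z_{[a,b]}$ gates are available ancilla-free with a single dirty ancilla by the preceding lemma, and $F$ is a genuine single-qubit gate, so the naive product $Z_{[a,b]} X_{[a,b]} F_{[a,b]}$ is \emph{not} directly a two-level operator unless $F$ is itself promoted to act on the two-level subspace. The real content is therefore to realize the \emph{two-level} $F_{[a,b]}$ operator ancilla-free, and then compose. The cleanest route is to follow the Gray-code strategy from \cite{NC}: conjugate by a sequence of two-level $X$ gates so that the levels $a$ and $b$ become adjacent computational basis states differing in a single bit, at which point $F$ may be applied as an honest single-qubit gate controlled on the remaining bits, with the controls handled by the already-available Toffoli-based machinery.

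The main obstacle is the controlled structure: realizing $F$ on exactly the two levels $a,b$ requires a singly-controlled (or multiply-controlled) $F$ gate, and unlike $X$ and $Z$, the $F$ gate is not a Pauli, so it does not enjoy the cheap dirty-ancilla fully-controlled construction of \cite{BBC95} directly. I expect the resolution to exploit the algebraic identities $(ZXF)^2 = -1$ (equivalently that $ZXF$ has order dividing $4$) and $X(ZXF)X = (ZXF)^{-1}$, which let one express the controlled-$ZXF$ gate through a standard ``$V$, $V^\dagger$, controlled-$X$'' decomposition in which $V$ satisfies $V^2 = ZXF$; this is precisely the kind of construction used for the $CF$ gate in the proof of \cref{prop:unrealcircs}. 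Concretely, I would exhibit a circuit, analogous to the $CF$ construction there, using two-level $X$ and $Z$ gates (available with one dirty ancilla), interspersed with uncontrolled $F$ gates, whose net effect on the $[a,b]$ subspace is $ZXF$ and which is trivial elsewhere.

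Finally, I would verify correctness by tracking the action on the four relevant basis states through the Gray-code conjugation and the controlled block, checking that off-subspace components are returned unchanged so that the dirty ancilla is restored to its initial state. The phase bookkeeping — ensuring the global phases accumulated by the $F$ gates (recall $F^2 = iH$ and $F^6 = -iH$) cancel to leave exactly $ZXF$ rather than a scalar multiple — is the delicate routine calculation, but it is forced once the gate-level identities are in hand and presents no conceptual difficulty.
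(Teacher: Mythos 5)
You have the right ingredients in view but the proposal does not assemble them correctly, and one of them is stated wrongly in a way that matters. The paper's proof is a two-line conjugation trick: since $(ZXF)^2 = I$ \emph{exactly} and $X(ZXF)X(ZXF)X = ZXF$, the fully-controlled $ZXF$ is obtained by applying, on the target qubit, $X$, then $ZXF$, then a fully-controlled $X$, then $ZXF$, then $X$. When the controls are satisfied the target sees $X(ZXF)X(ZXF)X = ZXF$; when they are not it sees $X(ZXF)^2X = I$. Only the fully-controlled $X$ in the middle needs the single dirty ancilla (via \cite{BBC95}); the two $ZXF$'s are uncontrolled single-qubit gates. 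Your statement that $(ZXF)^2 = -1$ ``up to the relevant phases'' is not a harmless imprecision: if the square were $-I$ rather than $I$, the off-control branch would acquire a residual $-1$ and the resulting operator would not be two-level at all. A direct computation with $F = \frac{1}{2}\begin{bmatrix} 1+i\sqrt{2} & 1 \\ 1 & -1+i\sqrt{2}\end{bmatrix}$ shows $(ZXF)^2 = I$ on the nose, and the proof cannot be completed without committing to this.

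The larger structural gap is that your primary plan --- first realize the two-level $F_{[a,b]}$ via Gray codes and then compose with two-level $X$ and $Z$ --- runs into exactly the obstruction you name and then abandons without resolving: a multiply-controlled $F$ with only a dirty ancilla is not available anywhere in the paper (the $CF$ construction in \cref{prop:unrealcircs} has one control, and \cref{lem:foo} promotes it using a \emph{clean} ancilla). The pivot to a ``$V$, $V^\dagger$, controlled-$X$'' decomposition with $V^2 = ZXF$ is likewise unnecessary and unsupported --- no such square root is exhibited over $\s{X, CX, CCX, F}$, and none is needed. The point of isolating $ZXF$ as the object to control, rather than $F$, is precisely that $ZXF$ is an involution conjugated to itself by $X$ in the required way, so the Pauli-style dirty-ancilla machinery applies to it even though it does not apply to $F$. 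Your proposal gestures at the correct identities but never writes down the circuit they force, and the route you actually commit to would not close.
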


\begin{proof}
	Recall that
	\begin{align*}
		(ZXF)^2 &= I \\
		X(ZXF)X(ZXF)X &= ZXF.
	\end{align*}
	Hence it follows that the two-level $ZXF$ gate can be
        implemented over $\s{X, CX, CCX, F}$ with a single dirty
        ancilla as follows.
	\[
	\Qcircuit @C=1em @R=.84em @!R {
		& \qw & \ctrl{4} & \qw & \qw \\
		& \vdots  & & \vdots & \\
		& \qw & \ctrl{2} & \qw & \qw  \\
		& \qw & \qw & \qw & \qw \\
		& \qw & \gate{ZXF} & \qw & \qw
	}
	\raisebox{-4.25em}{\quad=\quad}
	\Qcircuit @C=1em @R=.7em @!R {
		& \qw & \qw & \qw & \ctrl{4} & \qw & \qw & \qw & \qw \\
		& \vdots & & & & & & \vdots & \\
		& \qw & \qw & \qw & \ctrl{2} & \qw & \qw & \qw & \qw \\
		& \qw & \qw & \qw & \qw & \qw & \qw & \qw & \qw \\
		& \qw & \gate{X} & \gate{ZXF} & \gate{X} & \gate{ZXF} & \gate{X} & \qw & \qw
	}
	\]
\end{proof}

\begin{proposition}
	\label{prop:qzone}
	For any $n$, the $n$-qubit two-level $XZ$ gates can be represented 
	without ancillas over $\{X, CX, CCX, F\}$.
\end{proposition}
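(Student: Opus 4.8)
The plan is to realize the two-level $XZ$ gate by fusing the two-level $ZXF$ gate of the previous lemma with the single-qubit $F$ gate into a single ancilla-free circuit. The algebraic starting point is the single-qubit identity
\[
  XZ = F\cdot ZXF,
\]
which follows from the involution property $(ZXF)^2 = I$: writing $ZX = (ZXF)F^{-1}$ and using $(ZXF)^{-1} = ZXF$ gives $XZ = (ZX)^{-1} = F\cdot ZXF$, as a direct $2\times 2$ computation confirms. The determinant governs what is possible here. One has $\det(XZ) = 1$, whereas $X$, $Z$, and $ZXF$ each have determinant $-1$. For $n \geq 4$, every gate of $\s{X, CX, CCX, F}$ acts as a $2^n\times 2^n$ matrix of determinant $1$, so only determinant-$1$ two-level operators can be represented without ancillas; this is precisely why the two-level versions of $X$, $Z$, and $ZXF$ required a dirty ancilla, since tensoring with the ancilla's identity squares the determinant and repairs the sign. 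As $XZ$ already has determinant $1$, I expect no ancilla to be necessary.

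After conjugating by basis permutations --- which I would apply only in cancelling pairs $P(\cdot)P^{\dagger}$, so that no ancilla survives --- it suffices to represent the fully controlled gate $C^{n-1}(XZ)$, which applies $XZ$ to a target qubit conditioned on the remaining $n-1$ qubits. Since controlling a product is the product of the controls, the identity above yields
\[
  C^{n-1}(XZ) = C^{n-1}(F)\cdot C^{n-1}(ZXF).
\]
Each factor on the right has determinant $-1$ and therefore cannot be built without an ancilla, so the crux of the argument is to \emph{fuse} the two factors into one circuit whose only working qubits are the control and target lines themselves. I would carry this out in the style of the dirty-ancilla $ZXF$ construction, using the involution $(ZXF)^2 = I$ together with the relation $X(ZXF)X(ZXF)X = ZXF$ --- equivalently the braid relation $WXW = XWX$ for $W = ZXF$ --- to let a control line act as borrowed scratch that is returned to its original state by the end of the circuit.

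The hard part is exactly this fusion: exhibiting an explicit circuit over $\s{X, CX, CCX, F}$ and verifying both that the borrowed control and target lines are restored and that the net action on the two-dimensional subspace is $XZ$. Two points need care. First, since the representation must be \emph{exact}, the accumulated global phase has to equal $1$ rather than some other eighth root of unity, so the contributions of the $F$-powers (with $F^2 = iH$ and $F^6 = -iH$) and of the sign gates must be balanced precisely. Second, the determinant bookkeeping must be respected throughout: the two determinant-$(-1)$ ingredients may only ever appear fused, never separated, so that the whole circuit --- like the target gate $XZ$ --- has determinant $1$ and thus lives in the ancilla-free regime for $n \geq 4$.
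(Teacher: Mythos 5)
Your setup is on target---the determinant obstruction, the observation that $XZ$ has determinant $1$ while $X$, $Z$, $F$, and $ZXF$ each have determinant $-1$, and the idea of letting a control line serve as borrowed scratch are all the right ingredients---but the proof has a genuine gap: the explicit fused circuit, which is the entire content of the proposition, is never produced, and you say so yourself. Moreover, the factorization you start from, $C^{n-1}(XZ)=C^{n-1}(F)\cdot C^{n-1}(ZXF)$, keeps \emph{all} $n-1$ controls on both factors; each factor then genuinely requires a dirty ancilla, and simply concatenating their dirty-ancilla implementations does not eliminate the ancilla. The fusion cannot happen at the level of two fully-controlled determinant-$(-1)$ factors; it requires redistributing the controls.

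What the paper actually does is a Barenco-style split-control decomposition using $F^2=iH$ and $F^6=-iH$, so that $F^6\,X\,F^2\,X$ equals the desired two-level action and $F^6F^2=X^2=I$. Writing $a$ for all controls except the last and $b$ for the last control, the fully controlled gate is realized as the alternation
\[
C_b(F^6)\;C_a(X)\;C_b(F^2)\;C_a(X).
\]
If $b=0$ the two $C_a(X)$ gates cancel; if $b=1$ but some control in $a$ is off, $F^6F^2=I$; only when all controls are on does one obtain $F^6XF^2X$. Each factor now has a spare line of the circuit itself to use as a dirty ancilla: the $C_a(X)$ gates do not use qubit $b$ and are implementable over $\s{X,CX,CCX}$ with one dirty ancilla, and the $C_b(F^2)$, $C_b(F^6)$ gates are two-qubit gates (built from $CF$, which \cref{prop:unrealcircs} provides with one dirty ancilla) that do not use the qubits in $a$. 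This is the concrete instantiation of the "borrowed control line" idea you gestured at, and it is the step your argument is missing. Your single-qubit identity $XZ=F\cdot ZXF$ is correct but does not lead to such a splitting, since neither factor squares to the identity in a way that lets partially-fired controls cancel.
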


\begin{proof}
	Recall that the controlled-$F$ gate is representable with a
        single dirty ancilla.  The two-level $XZ$ gate is then
        constructible without ancillas using the following circuit.
	\[
	\Qcircuit @C=1em @R=.84em @!R {
		& \qw & \ctrl{4} & \qw & \qw \\
		& \vdots  & & \vdots & \\
		& \qw & \ctrl{2} & \qw & \qw  \\
		& \qw & \ctrl{1} & \qw & \qw \\
		& \qw & \gate{XZ} & \qw & \qw
	}
	\raisebox{-4.25em}{\quad=\quad}
	\Qcircuit @C=1em @R=.7em @!R {
		& \qw & \ctrl{4} & \qw & \ctrl{4} & \qw & \qw & \qw \\
		& \vdots & & & & & \vdots & \\
		& \qw & \ctrl{2} & \qw & \ctrl{2} & \qw & \qw & \qw \\
		& \qw & \qw & \ctrl{1} & \qw & \ctrl{1} & \qw & \qw \\
		& \qw & \gate{X} & \gate{F^2} & \gate{X} & \gate{F^6} & \qw & \qw
	}
	\]
\end{proof}

\begin{proposition}
	\label{prop:qztwo}
	For any $n$, the $n$-qubit two-level $FZ$ and $ZF$ gates can be 
	represented without ancillas over $\{X, CX, CCX, F\}$.
\end{proposition}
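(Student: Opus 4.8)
The plan is to build the two-level $FZ$ and $ZF$ gates directly, using the same sandwiching technique as in the construction of the two-level $XZ$ gate in \cref{prop:qzone}. As in that proposition, I would realize each two-level gate as a fully-controlled single-qubit gate, and the starting point is the pair of identities
\[
  FZ = X F X F^{-1} \qquad\text{and}\qquad ZF = F^{-1} X F X,
\]
which can be checked by direct computation and which use only gates from $\s{X, CX, CCX, F}$ since $F^{-1}=F^{7}$. Each is a product of two $X$ gates interleaved with the mutually inverse single-qubit gates $F$ and $F^{-1}$, which is exactly the shape exploited in \cref{prop:qzone}, where the roles of $F$ and $F^{-1}$ are played by the mutually inverse $F^{2}$ and $F^{6}$.

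Next I would split the controls of the fully-controlled gate into the first $n-2$ qubits and the qubit $n-1$, and realize each identity as a four-gate circuit acting on the target: a controlled-$F^{\mp1}$ gate with control qubit $n-1$, a multiply-controlled $X$ gate with controls on qubits $1,\dots,n-2$, a controlled-$F^{\pm1}$ gate again on qubit $n-1$, and a final multiply-controlled $X$ on qubits $1,\dots,n-2$, ordered according to the chosen identity. To see that this circuit realizes the fully-controlled gate, I would verify the partial-control cases: when qubit $n-1$ is $0$ the two controlled-$F^{\pm1}$ gates are inactive and the two $X$ gates, being involutions, cancel; when some qubit among $1,\dots,n-2$ is $0$ the two multiply-controlled $X$ gates are inactive and the factors $F$ and $F^{-1}$ cancel. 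Hence the circuit acts as the identity unless all $n-1$ controls are set, in which case it applies $FZ$ (respectively $ZF$) to the target.

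Finally comes the ancilla accounting, which is the crux of the argument. The controlled-$F^{\pm1}$ gates are available with a single dirty ancilla by \cref{prop:unrealcircs} (with $CF^{-1}=(CF)^\dagger$), and the multiply-controlled $X$ gates are available with a single dirty ancilla by \cite{BBC95}. The key observation is that during the controlled-$F^{\pm1}$ steps the qubits $1,\dots,n-2$ are idle and can be borrowed as the dirty ancilla, while during the multiply-controlled $X$ steps qubit $n-1$ is idle and can be borrowed instead, so that no genuine ancilla is consumed. I expect the main obstacle to be twofold: first, pinning down the identities above with the correct signs, so that the all-controls-set case yields exactly $FZ$ and $ZF$ rather than a $\pm$ variant while the partial-control cases collapse cleanly to the identity; and second, handling the smallest values of $n$, where there are too few control qubits to supply a borrowed ancilla and the gate must be written out directly. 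Since $\det(FZ)=\det(ZF)=1$, there is no determinantal obstruction, which is consistent with the ancilla-free requirement.
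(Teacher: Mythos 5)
Your identities check out: a direct computation confirms $FZ = XFXF^{-1}$ and $ZF = F^{-1}XFX$ (with $F^{-1}=F^{7}$, since $F^{8}=(iH)^{4}=I$), and your control-splitting argument — $F^{\pm1}$ factors controlled on qubit $n-1$, $X$ factors controlled on qubits $1,\dots,n-2$, with the idle qubit in each phase serving as the borrowed dirty ancilla for the multiply-controlled $X$ — is exactly the mechanism the paper uses for the two-level $XZ$ gate in \cref{prop:qzone}. However, for $FZ$ and $ZF$ the paper takes a genuinely different route: it first shows that the two-level $ZXF$ gate is implementable with one dirty ancilla via the relations $(ZXF)^{2}=I$ and $X(ZXF)X(ZXF)X=ZXF$, and then assembles $ZF$ from $(ZXF)X(ZXF)X=X(ZXF)=-ZF$ (fixing the leftover sign with an extra fully-controlled $Z$) and $FZ$ from a longer word in $F^{2}$, $F^{6}$, $ZXF$, $X$, and the already-constructed $XZ$. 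Your decomposition is shorter and more transparent — it needs only the ancilla-free two-qubit $CF$ from the proof of \cref{prop:unrealcircs} (a pure two-qubit circuit, so no borrowing is even required there) plus the standard dirty-ancilla $C^{n-2}X$, and it sidesteps the sign bookkeeping entirely; what the paper's route buys is reuse of the $ZXF$ gadget and of \cref{prop:qzone}. One point worth making explicit in either treatment: both proofs actually construct the fully-controlled gate $C^{n-1}(FZ)$, i.e., the two-level operator on the last two basis states, and passing to a general index pair $[a,b]$ requires conjugating by a basis permutation that must itself be realized without ancillas (an even permutation suffices and is available over $\s{X,CX,CCX}$ for $n\geq 4$); this step is left implicit in the paper as well, so it is not a gap particular to your argument.
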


\begin{proof}
	In the $n=1$ and $n=2$ cases, this is trivially true, as the
        controlled $F$ and controlled $Z$ are both implementable
        without ancillas. For $n\geq3$, note that $XZ=-ZX$, and so
        $(ZXF)X(ZXF)X = X(ZXF) = -ZF$. Hence, we have the equality below.
	\[
	\Qcircuit @C=1em @R=.5em @!R {
		& \qw & \ctrl{6} & \qw & \qw  \\
		& \vdots & & \vdots & \\
		&  \qw & \ctrl{4} & \qw & \qw \\
		& \qw & \ctrl{3} & \qw & \qw  \\
		& \vdots & & \vdots & \\
		&  \qw & \ctrl{1} & \qw & \qw \\
		& \qw & \gate{ZF} & \qw & \qw
	}
	\raisebox{-5.5em}{\quad=\quad}
	\Qcircuit @C=1em @R=.5em @!R {
		& \qw & \ctrl{6} & \qw  & \ctrl{6} & \qw & \ctrl{5} & \qw & \qw\\
		& \vdots & & & & & & \vdots & \\
		& \qw & \ctrl{4} & \qw  & \ctrl{4} & \qw & \ctrl{3} & \qw & \qw\\
		& \qw & \qw & \ctrl{3}  & \qw & \ctrl{3} & \ctrl{2} & \qw & \qw\\
		& \vdots & & & & & & \vdots & \\
		& \qw & \qw & \ctrl{1}  & \qw & \ctrl{1} & \gate{Z} & \qw & \qw\\
		& \qw & \gate{X} & \gate{ZXF} & \gate{X} & \gate{ZXF} & \qw & \qw & \qw
	}
	\]
	Finally for $FZ$, we can note that
	\begin{align*}
		F^6(XZ) X(ZXF)X(ZXF)F^2 
			&= F^6(XZ)(ZXF)XF^2 \\
			&= FF^6 X F^2 \\
			&=FZ,
	\end{align*}
	giving the following circuit identity
	\[
	\Qcircuit @C=1em @R=.5em @!R {
		& \qw & \ctrl{6} & \qw & \qw  \\
		& \vdots & & \vdots & \\
		&  \qw & \ctrl{4} & \qw & \qw \\
		& \qw & \ctrl{3} & \qw & \qw  \\
		& \vdots & & \vdots & \\
		&  \qw & \ctrl{1} & \qw & \qw \\
		& \qw & \gate{FZ} & \qw & \qw
	}
	\raisebox{-3em}{\quad=\quad}
	\Qcircuit @C=1em @R=.5em @!R {
		& \qw & \qw & \qw & \ctrl{6} & \qw & \ctrl{6} & \ctrl{6} & \qw & \qw & \qw \\
		& \vdots & & & & & & & & \vdots & \\
		& \qw & \qw & \qw & \ctrl{4} & \qw & \ctrl{4} & \ctrl{4} & \qw & \qw & \qw\\
		& \qw & \qw & \ctrl{3} & \qw & \ctrl{3} & \qw & \ctrl{3} & \qw & \qw  & \qw\\
		& \vdots & & & & & & & & \vdots & \\
		& \qw & \qw & \ctrl{1} & \qw & \ctrl{1} & \qw & \ctrl{1} & \qw & \qw & \qw\\
		& \qw & \gate{F^2} & \gate{ZXF} & \gate{X} & \gate{ZXF} & \gate{X} & \gate{XZ} & \gate{F^6} & \qw & \qw
	}
	\]
	where no ancillas are needed.
\end{proof}

\subsection{The \texorpdfstring{$\Di$}{D[i]} case}
\label{app:gaussian}

In this appendix we give ancilla-free constructions of the two-level
operators $iX$, $iZ$, $\omega SH$, and $\omega H S$ over $\s{ X, CX,
  CCX, \omega H, S}$. We progressively build up to the necessary
operators.

\begin{lemma}
	For any $n$, the $n$-qubit two-level $X$ and $Z$ gates can be 
	represented over the gate set $\s{X, CX, CCX, \omega H, S}$ with a single
    dirty ancilla.
\end{lemma}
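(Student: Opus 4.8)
The plan is to follow the same two-step strategy used for the analogous lemma in the $\Drminustwo$ case (\cref{app:unreal}). First I would recall that the $n$-qubit two-level (i.e.\ fully-controlled) $X$ gate is representable over $\s{X, CX, CCX}$ with a single dirty ancilla by \cite{BBC95}; since $\s{X, CX, CCX} \subseteq \s{X, CX, CCX, \omega H, S}$, this immediately gives the two-level $X$ gate over the larger gate set. The remaining task is to build the two-level $Z$ gate, and the idea is to conjugate the two-level $X$ gate by a single-qubit gate that intertwines $X$ and $Z$, exactly as $F^2$ and $F^6$ were used in the imaginary case.

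The key algebraic fact I would use is that $\omega H$ squares to a global phase: since $\omega^2 = i$ and $H^2 = I$, we have $(\omega H)^2 = iI$, hence $(\omega H)^8 = i^4 I = I$, so that $(\omega H)^7 = (\omega H)^{-1} = (\omega H)^\dagger = \overline{\omega}\,H$. Combined with the standard identity $HXH = Z$, this yields
\[
(\omega H)\, X\, (\omega H)^7 = \omega\overline{\omega}\, HXH = Z,
\qquad
(\omega H)(\omega H)^7 = (\omega H)^8 = I.
\]
The first equation shows that conjugating the fully-controlled $X$ gate by $\omega H$ on the target produces a fully-controlled $Z$ when all controls are active, while the second shows that the two conjugating factors compose to the identity when the controls are inactive, leaving the target untouched. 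I would therefore present a circuit in which the target wire carries $(\omega H)^7$, then the fully-controlled $X$ (with its single dirty ancilla), then $\omega H$, mirroring the $F^2$--$X$--$F^6$ pattern of \cref{app:unreal}.

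Since the conjugating $\omega H$ gates act only on the target qubit and not on the ancilla, the ancilla used by the two-level $X$ construction is returned to its original state, so the whole circuit still uses only a single dirty ancilla and only gates from $\s{X, CX, CCX, \omega H}$. The main thing to verify carefully is the phase bookkeeping, namely that $(\omega H)^2 = iI$ is a \emph{pure} global phase, so that the off-control branch contributes no residual scalar and the on-control branch yields exactly $Z$ rather than $iZ$ or $-Z$; this is the analogue of the cancellation $F^2 F^6 = I$ in the imaginary case and follows immediately from the computation above.
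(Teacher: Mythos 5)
Your proposal is correct and follows essentially the same route as the paper: obtain the two-level $X$ gate over $\s{X, CX, CCX}$ with one dirty ancilla via \cite{BBC95}, then conjugate its target by $\omega H$ and $(\omega H)^\dagger$ to turn it into a two-level $Z$, with the phases cancelling exactly as you verify. Writing the inverse as $(\omega H)^7$ rather than $(\omega H)^\dagger$ is only a cosmetic difference.
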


\begin{proof}
	Recall that the two-level $X$ gate is representable over
        $\s{X, CX, CCX}$ with a single dirty ancilla \cite{BBC95}.
        The two-level $Z$ gate can then be constructed as follows.
	\[
	\Qcircuit @C=1em @R=1.1em @!R {
		& \qw & \ctrl{4} & \qw & \qw \\
		& \vdots  & & \vdots & \\
		& \qw & \ctrl{2} & \qw & \qw  \\
		& \qw & \qw & \qw & \qw \\
		& \qw & \gate{Z} & \qw & \qw
	}
	\raisebox{-4.8em}{\quad=\quad}
	\Qcircuit @C=1em @R=.7em @!R {
		& \qw & \qw & \ctrl{4} & \qw & \qw & \qw \\
		& \vdots  & & & & \vdots & \\
		& \qw& \qw & \ctrl{2} & \qw & \qw & \qw \\
		& \qw & \qw & \qw & \qw & \qw & \qw \\
		& \qw & \gate{\omega H} & \gate{X} & \gate{(\omega H)^\dagger} & \qw & \qw
	}
	\]
\end{proof}

\begin{lemma}
	For any $n$, the $n$-qubit two-level $S$ operators can be 
	represented over $\s{X, CX, CCX, \omega H, S}$ with two
    dirty ancillas.
\end{lemma}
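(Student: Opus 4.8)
The plan is to reduce the construction of the $n$-qubit two-level $S$ operator $S_{[a,b]}$ to that of a suitable multiply-controlled $S$ gate, and then to build this multiply-controlled gate from the singly-controlled $S$ gate together with the multiply-controlled $X$ and $Z$ gates of the previous lemma. Since $S=\operatorname{diag}(1,i)$ is diagonal, $S_{[a,b]}$ simply applies the phase $i$ to the basis vector $e_b$, so it is really a one-level operator $i_{[b]}$. Writing the index $b$ in binary and conjugating by a permutation from $\s{X,CX,CCX}$ (which costs one dirty ancilla by \cref{lem:perm}), it therefore suffices to implement the gate that applies $i$ to $\ket{1\cdots 1}$, i.e.\ the $(n-1)$-fold controlled gate $C^{n-1}S$ acting on the qubits $q_1,\ldots,q_n$.

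Recall that $CS$ is representable over $\s{X,CX,CCX,\omega H,S}$ with a single dirty ancilla by \cref{prop:gaussiancircs}. To add the remaining controls I would borrow a qubit $b$ and compute the conjunction of $q_1,\ldots,q_{n-1}$ onto it with a multiply-controlled $X$ gate, which by the previous lemma (and \cite{BBC95}) needs only one further dirty ancilla. Sandwiching the compute/uncompute of this conjunction between applications of $CS$ and $CS^\dagger$ on the pair $(b,q_n)$ leaves, on each computational basis state, the phase $i^{A(1-2b)q_n}$, where $A=q_1\cdots q_{n-1}$ and $b$ denotes the (unknown) initial value of the borrowed qubit.

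The main obstacle is precisely this dependence on $b$: a phase gate does not uncompute on a dirty ancilla the way a bit-flip does, and---in contrast with the $Z$ case, where $Z=(\omega H)X(\omega H)^\dagger$ lets one piggyback on the clean $X$-borrowing---there is no single-qubit conjugation carrying $X$ to $S$, while the natural control-adding trick would require $\sqrt{S}=T$, which is not in the gate set. The key observation is that the discrepancy between $i^{A(1-2b)q_n}$ and the desired phase $i^{Aq_n}$ is exactly $(-1)^{b\,q_1\cdots q_n}$, which is the action of the multiply-controlled $Z$ gate $C^{n}Z$ on $(b,q_1,\ldots,q_n)$. Since $Z$ has eigenvalues $\pm 1$, this corrective $C^{n}Z$ is itself representable with a single dirty ancilla by the $\omega H$-conjugation of the previous lemma, and applying it (while $b$ is in its original state) cancels the spurious phase.

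Putting the pieces together, I would use one borrowed qubit to hold the conjunction $A$ and a second dirty ancilla---freed and reused between the outer operations---to serve as the helper for the multiply-controlled $X$, for the internal $CS$ gates, and for the corrective $C^{n}Z$. This yields $C^{n-1}S$, and hence the general two-level $S$ operator, over $\s{X,CX,CCX,\omega H,S}$ with two dirty ancillas, as claimed.
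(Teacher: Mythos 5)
Your construction is correct and is essentially the paper's own proof: compute the conjunction of the controls onto a borrowed qubit, use a $CS$/$CS^\dagger$ pair around the compute--uncompute so the net phase is $i^{A(1-2b)q_n}$, and cancel the residual $(-1)^{Abq_n}$ with a multiply-controlled $Z$ on the borrowed qubit together with the data qubits, the second dirty ancilla serving as the helper for the multiply-controlled $X$ and $Z$ gates. One small caution: the order must be compute, $CS$, uncompute, $CS^\dagger$ (your ``sandwiching'' phrasing literally describes $CS$, compute, uncompute, $CS^\dagger$, which yields no phase at all), but the phase you compute and the correction you apply show you intend the right order.
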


\begin{proof}
	Observe that the two-level $S$ operator can be constructed as
        follows.
	\[
	\Qcircuit @C=1em @R=.87em @!R {
		& \qw & \ctrl{5} & \qw & \qw \\
		& \vdots  & & \vdots & \\
		& \qw & \ctrl{3} & \qw & \qw  \\
		& \qw & \qw & \qw & \qw \\
		& \qw & \qw & \qw & \qw \\
		& \qw & \gate{S} & \qw & \qw
	}
	\raisebox{-6.5em}{\quad=\quad}
	\Qcircuit @C=1em @R=.7em @!R {
		& \qw & \ctrl{3} & \qw & \ctrl{3} & \qw & \ctrl{5} & \qw & \qw \\
		& \vdots  & & & & & & & \vdots & \\
		& \qw & \ctrl{1} & \qw & \ctrl{1} & \qw & \ctrl{3} & \qw & \qw \\
		& \qw & \gate{X}  & \ctrl{2} & \gate{X} & \ctrl{2} & \ctrl{2} & \qw & \qw \\
		& \qw & \qw & \qw & \qw & \qw & \qw & \qw & \qw \\
		& \qw & \qw & \gate{S} & \qw & \gate{S^\dagger} & \gate{Z} & \qw & \qw
	}
	\]
\end{proof}

\begin{proposition}
	For any $n$, the $n$-qubit two-level $iX$ operators can be represented 
	without ancillas over $\{X, CX, CCX, \omega H, S\}$.
\end{proposition}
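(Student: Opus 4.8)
The plan is to follow the template of \cref{prop:qzone}: reduce to the canonical fully-controlled form and exhibit a four-factor product on the target qubit that telescopes to the identity whenever not all of the controls are active. As in the earlier propositions, it suffices to build the operator that applies $iX$ to the last qubit conditioned on the remaining $n-1$ qubits being in the all-ones state, since a general two-level $iX$ reduces to this form by an ancilla-free relabeling of the computational basis using only $X$ and $CX$ gates. The cases $n\leq 2$ I would dispatch directly: for $n=1$ the operator is the single-qubit gate $iX=SXS$, and for $n=2$ it is $(S\otimes I)\,CX$, since multiplying the control-one block by the scalar $i$ is exactly an $S$ gate on the control qubit. So assume $n\geq 3$; observe that $\det(iX)=1$, which is consistent with ancilla-free representability over this determinant-one gate set.

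The key algebraic identity I would use is
\[
  iX = (SH)\,Z\,(SH)^{\dagger}\,Z,
\]
the analogue of the telescoping relation underlying \cref{prop:qzone}. Writing $G=\{1,\dots,n-2\}$ for the ``top'' controls and singling out qubit $n-1$, I would realize on the target qubit $n$ the composite in which the outer layers apply $SH$ and $(SH)^{\dagger}$ controlled on qubit $n-1$ and the two inner layers apply $Z$ controlled on $G$. Checking the four control patterns is routine: when every control is satisfied the target sees $(SH)Z(SH)^{\dagger}Z=iX$; when only the top controls fire it sees $Z^2=I$; when only qubit $n-1$ fires it sees $(SH)(SH)^{\dagger}=I$; and otherwise the identity. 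The phase survives here because $SH$ conjugates $Z$ to $Y$ and $YZ=iX$, so---unlike a symmetric conjugation by a power of $\omega H$, which can only ever yield a phase-free operator such as $XZ$---this asymmetric pair genuinely installs the factor $i$.

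It then remains to realize each factor over the gate set without an external ancilla. The inner gate is a multiply-controlled $Z$ on $G$, representable with a single dirty ancilla by the two-level $Z$ construction established above; since it does not touch qubit $n-1$, that qubit is free to serve as the borrowed ancilla. The outer gates are controlled $SH$ and its adjoint on the pair $(n-1,n)$, which I would assemble from the singly-controlled $S$ and $\omega H$ gates furnished by \cref{prop:gaussiancircs} (each needing one dirty ancilla, borrowed from a qubit of $G$, which is nonempty since $n\geq 3$). Because every required gate uses only a single dirty ancilla drawn from the other $n-1$ wires, the full circuit needs no ancilla beyond the $n$ data qubits.

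The hard part will be the phase bookkeeping in this last step: the gate set provides $\omega H$ rather than a bare $H$, so writing $SH=\omega^{-1}S(\omega H)$ shows that each controlled $SH$ contributes a spurious factor $\mathrm{diag}(1,\omega^{-1})$ on the control qubit $n-1$. The crux is to verify that these two spurious factors, being diagonal in the control qubit, commute through the intervening control-block-diagonal gates and through the inner $Z^{(G)}$, so that they coalesce into $\mathrm{diag}(1,\omega^{-2})=\mathrm{diag}(1,-i)=S^{\dagger}$ on qubit $n-1$, a gate that is itself in the set. Once this cancellation is checked the construction closes up exactly, and the companion operators $iZ$, $\omega SH$, and $\omega HS$ announced at the start of this appendix would then be handled by the same borrowing-and-telescoping scheme.
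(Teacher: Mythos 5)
Your construction is correct, and it reaches the result by a route that is genuinely different in its details from the paper's, though built on the same telescoping template. The paper also splits the $n-1$ controls into two groups, but its core identity is $(-iX)\,S\,(iX)\,S^\dagger = iZ$ conjugated by an \emph{uncontrolled} $\omega H$ on the target: the inner gates are then multiply-controlled $iX$ operators with fewer controls (a recursion bottoming out at $C(iX)=(S\otimes I)\,CX$) together with multiply-controlled $S$ gates borrowing two dirty ancillas. You instead use $(SH)Z(SH)^\dagger Z = YZ = iX$, so your outer blocks are the two-qubit $C(SH)$ and its adjoint, assembled from the $C(S)$ and $C(\omega H)$ of \cref{prop:gaussiancircs}, and your inner blocks are multiply-controlled $Z$ gates; this avoids the recursion and needs only single dirty ancillas throughout. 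Your observation that positioning the two relevant basis states requires only an \emph{affine} permutation, hence only $X$ and $CX$ and no ancilla, is also a cleaner justification of the reduction to the fully-controlled form than invoking the general permutation lemma. One correction to the phase bookkeeping you flag as the crux: the two spurious factors are not both $\mathrm{diag}(1,\omega^{-1})$. Writing $C(SH)=\mathrm{diag}(1,\omega^{-1})_{n-1}\cdot C(S)\,C(\omega H)$ but $C\bigl((SH)^\dagger\bigr)=\mathrm{diag}(1,\omega)_{n-1}\cdot C\bigl((\omega H)^\dagger\bigr)\,C(S^\dagger)$, the two residual diagonals on qubit $n-1$ are mutually inverse, so after the commutation argument you describe they cancel outright rather than coalescing to $S^\dagger$. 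The slip is harmless in principle---any residual would be a power of $S$ on a single qubit, an ancilla-free gate from the set---but as written your final accounting would insert an unwanted $S$ on qubit $n-1$; recomputing the second factor closes the construction exactly.
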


\begin{proof}
	For $n=1$, we have
	\[iX = (\omega H)S^2(\omega H).\]
	For $n=2$, we have
	\[
	\Qcircuit @C=1em @R=.7em @!R {
		& \qw & \ctrl{1} & \qw & \qw \\
		& \qw & \gate{iX} & \qw & \qw
	}
	\raisebox{-1em}{\quad=\quad}
	\Qcircuit @C=1em @R=.7em @!R {
		& \qw & \gate{S} & \ctrl{1} & \qw & \qw\\
		& \qw & \qw & \gate{X} &  \qw & \qw
	}\]
	and for $n\geq 3$, we have the circuit below.
	\[
	\Qcircuit @C=1em @R=.92em @!R {
		& \qw & \ctrl{6} & \qw & \qw \\
		& \vdots  & & \vdots & \\
		& \qw & \ctrl{4} & \qw & \qw  \\
		& \qw & \ctrl{3} & \qw & \qw  \\
		& \vdots & & \vdots & \\
		&  \qw & \ctrl{1} & \qw & \qw \\
		& \qw & \gate{iX} & \qw & \qw
	}
	\raisebox{-6.5em}{\quad=\quad}
	\Qcircuit @C=1em @R=.5em @!R {
		& \qw& \qw & \ctrl{6} & \qw & \ctrl{6} & \qw & \qw & \qw  & \qw\\
		& \vdots & & & & & & & \vdots & \\
		& \qw & \qw & \ctrl{4} & \qw & \ctrl{4} & \qw & \qw & \qw & \qw\\
		& \qw& \qw & \qw  & \ctrl{3} & \qw & \ctrl{3} & \qw & \qw  & \qw\\
		& \vdots & & & & & & & \vdots & \\
		& \qw& \qw & \qw  & \ctrl{1} & \qw & \ctrl{1} & \qw & \qw & \qw\\
		& \qw & \gate{\omega H} & \gate{S^\dagger} & \gate{iX} & \gate{S} & \gate{-iX} & \gate{(\omega H)^\dagger} & \qw & \qw
	}
	\]

\end{proof}

\begin{proposition}
	For any $n$, the $n$-qubit two-level $iZ$ operator can be 
	represented without ancillas over $\{X, CX, CCX, \omega H, S\}$.
\end{proposition}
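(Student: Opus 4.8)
The plan is to reduce the construction of the two-level $iZ$ operator to that of the two-level $iX$ operator established in the preceding proposition, exploiting the fact that $H$ conjugates $X$ into $Z$. At the level of a single qubit one checks the identity $iZ = (\omega H)(iX)(\omega H)^\dagger$: since $H$ is Hermitian and self-inverse, $(\omega H)(iX)(\omega H)^\dagger = \omega\omega^{-1}H(iX)H = H(iX)H = i\,HXH = iZ$, the two factors of $\omega$ cancelling exactly. It is precisely this cancellation that must be handled with care, since using two copies of $\omega H$ rather than one $\omega H$ and one $(\omega H)^\dagger$ would instead yield $\omega^2 H(iX)H = -Z$. The bookkeeping of the order-$8$ phase is thus the one genuinely delicate point.

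First I would lift this identity from a single qubit to the $n$-qubit two-level operator. Writing the two-level $iX$ operator as a single-qubit $iX$ applied to a target qubit, controlled on the remaining qubits carrying the appropriate pattern, I would conjugate the target qubit by $\omega H$: applying $\omega H$ to the target, then the two-level $iX$ operator, then $(\omega H)^\dagger$ to the target. On the branch where the controls do not match, the target sees $(\omega H)^\dagger(\omega H) = I$; on the branch where they do match, it sees $(\omega H)^\dagger(iX)(\omega H) = iZ$. Hence the composite is exactly the two-level $iZ$ operator. This argument is uniform in $n$ (the $n=1$ case being the single-qubit identity above), so no separate treatment of small cases is required.

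Finally I would observe that the resulting circuit uses no ancillas beyond those used in the two-level $iX$ construction, which the preceding proposition supplies ancilla-free, while $\omega H$ and its inverse $(\omega H)^\dagger$ are available from $\s{X, CX, CCX, \omega H, S}$ without any ancilla. Since conjugation preserves determinants and $\det(iZ) = i\cdot(-i) = 1$, the construction stays within the determinant-$1$ operators, consistent with its intended use in the ancilla-free characterization. The passage from the fully-controlled form to a general two-level operator $iZ_{[a,b]}$ proceeds by the same permutation arguments used throughout this appendix, and the only real obstacle remains the phase accounting noted above.
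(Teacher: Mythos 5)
Your proposal is correct and takes essentially the same route as the paper: the paper's proof likewise obtains the two-level $iZ$ by conjugating the ancilla-free two-level $iX$ operator of the preceding proposition by $\omega H$ and $(\omega H)^\dagger$ on the target qubit, with the order-$8$ phases cancelling exactly as you describe.
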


\begin{proof}
	Using the $n$-qubit $iX$ operator as follows.
	\[
	\Qcircuit @C=1em @R=1.1em @!R {
		& \qw & \ctrl{3} & \qw & \qw \\
		& \vdots  & & \vdots & \\
		& \qw & \ctrl{1} & \qw & \qw  \\
		& \qw & \gate{iZ} & \qw & \qw
	}
	\raisebox{-4em}{\quad=\quad}
	\Qcircuit @C=1em @R=.7em @!R {
		& \qw & \qw & \ctrl{3} & \qw & \qw & \qw \\
		& \vdots  & & & & \vdots & \\
		& \qw& \qw & \ctrl{1} & \qw & \qw & \qw \\
		& \qw & \gate{\omega H} & \gate{iX} & \gate{(\omega H)^\dagger} & \qw & \qw
	}
	\]
\end{proof}

\begin{proposition}
	For any $n\neq 2$, the $n$-qubit two-level $\omega SH$ and
    $\omega HS$ operators are can be represented without ancillas 
    over $\{X, CX, CCX,\omega H, S\}$. For $n=2$, the $2$-qubit 
    two-level $\omega SH$ and $\omega HS$ operators can be 
    represented if $CS$ is appended to the gate list.
\end{proposition}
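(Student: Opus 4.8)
The plan is to reduce both operators to a single fully-controlled gate and then build that gate from the determinant-$1$ gadgets already at hand. First I would observe that it suffices to construct the fully-controlled operator $C^{n-1}(\omega SH)$, i.e.\ the two-level operator $\omega SH_{[2^n-1,2^n]}$. The companion operator $\omega HS$ is then recovered by conjugating the target wire with the primitive gate $S$, since $\omega SH = S(\omega H)$ gives $\omega HS = (\omega H)S = S^\dagger(\omega SH)S$; the conjugate of a determinant-$1$ operator has determinant $1$, and applying $I^{\otimes(n-1)}\otimes S$ is ancilla-free. General two-level operators $\omega SH_{[a,b]}$ on an arbitrary pair of coordinates then follow by conjugation with the determinant-$1$ permutation-type operators, exactly as for the two-level $iX$ and $iZ$ operators treated above. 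Thus the entire proposition collapses to building one fully-controlled gate for each $n$.

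Next I would dispatch the small cases. For $n=1$ the statement is immediate, since $\omega SH = S(\omega H)$ is literally a product of gates in the set. For $n=2$, the idea is to build $C(\omega H)$ without ancillas: recall from \cref{prop:gaussiancircs} that $C(\omega H)$ factors as three controlled-$S$ gates interleaved with $\omega H$ and $(\omega H)^\dagger$. On two qubits there is no spare wire to serve as the dirty ancilla that the two-level $S$ construction requires, so the controlled-$S$ cannot be synthesized from the base set and must be supplied as the primitive $CS$. With $CS$ available, $C(\omega SH)=CS\cdot C(\omega H)$ and $C(\omega HS)=C(\omega H)\cdot CS$, using $\omega SH = S(\omega H)$ and the fact that multiply-controlled gates sharing a target and controls compose as $C^{k}(A)C^{k}(B)=C^{k}(AB)$. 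This is precisely what distinguishes $n=2$, and one checks that the obstruction is genuine.

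For the main case $n\geq 3$ there is at least one wire to spare beyond the controls and target, and I would use it as a dirty ancilla to supply the controlled-$S$ and controlled-$\omega H$ fragments of \cref{prop:gaussiancircs} while assembling $C^{n-1}(\omega SH)$ from the ancilla-free determinant-$1$ gadgets $C^{n-1}(iX)$ and $C^{n-1}(iZ)$ constructed above, together with uncontrolled $\omega H$, $S$, and $X$ acting on the target. The main obstacle is the determinant bookkeeping: the naive factorization $C^{n-1}(\omega SH)=C^{n-1}(S)\,C^{n-1}(\omega H)$ has factors of determinant $i$ and $-i$, and neither can be realized by a determinant-$1$ circuit even with a dirty ancilla, so the two fragments must be built together in a single phased circuit whose net determinant is $1$. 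Compounding this, $\omega SH$ is an order-$6$ element of $SU(2)$ that is \emph{not} a scalar multiple of a Pauli, so the simple $AXBXC$ conjugation trick that suffices for $iX$ and $iZ$ cannot reproduce it; one is forced into a more careful combination of the controlled gadgets with phased single-qubit gates, to be verified by direct computation so that the resulting circuit acts as $\omega SH$ on the controlled subspace, as the identity elsewhere, and returns the borrowed wire to its initial state.
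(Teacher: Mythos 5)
Your reductions are sound: $\omega HS = S^\dagger(\omega SH)S$ with $S$ applied to the target wire, the trivial $n=1$ case, and the $n=2$ case via $C(\omega SH)=CS\cdot C(\omega H)$ all check out and match the paper's treatment in spirit. The gap is in the main case $n\geq 3$, which is where all the content of the proposition lives. First, your starting premise there is false: for the fully-controlled operator $C^{n-1}(\omega SH)$ on $n$ qubits, the $n-1$ controls and the target exhaust every wire, so there is no spare wire to borrow as a dirty ancilla for the controlled-$S$ and controlled-$\omega H$ fragments of \cref{prop:gaussiancircs}. Second, after correctly observing that the factorization $C^{n-1}(S)\,C^{n-1}(\omega H)$ is blocked by determinant considerations and that a Pauli-style $AXBXC$ conjugation cannot produce the non-Pauli element $\omega SH$, you conclude only that ``a more careful combination'' must exist, ``to be verified by direct computation.'' That combination is precisely what has to be exhibited; as written, the proof of the only nontrivial case is missing.

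The paper escapes both problems by splitting the control set rather than looking for a spare wire. Writing $U=\omega SH$, it uses a commutator-type identity of the shape
\[
C^{n-1}(U)\;=\;C_A(U^\dagger)\;C_b(S^\dagger)\;C_A(U)\;C_b(S)\;(iZ)_{[2^n-1,\,2^n]},
\]
where $A$ consists of the first $n-2$ controls and $b$ is the remaining control. Each factor $C_A(\cdot)$ leaves wire $b$ untouched and each factor $C_b(\cdot)$ leaves the wires in $A$ untouched, so every partially-controlled factor has a dirty ancilla available and can be realized using the earlier lemmas (the $2$-qubit $C(\omega SH)$ with one dirty ancilla and the two-level $S$ construction), while the trailing ancilla-free two-level $iZ$, already constructed, absorbs the residual phase so that the net circuit has determinant $1$ and acts correctly on the uncontrolled subspace. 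This control-splitting step is the missing idea your outline needs, and its correctness still has to be verified by the direct computation you defer.
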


\begin{proof}
	The $n=1$ case is trivially true. For $n=2$, if we do not use
        ancillas, we are left with the Clifford group generators and
        are unable to implement non-Clifford operators. However,
        appending $CS$ to our list of operators gives us the following
        circuits.
	\[
	\Qcircuit @C=1em @R=1em @!R {
		& \qw & \ctrl{1} & \qw & \qw \\
		& \qw & \gate{\omega S H} & \qw & \qw
	}
	\raisebox{-1.3em}{\quad=\quad}
	\Qcircuit @C=1em @R=.7em @!R {
		& \qw & \ctrl{1} & \qw & \ctrl{1} & \qw & \ctrl{1} & \qw & \qw\\
		& \qw & \gate{S} & \gate{\omega H} & \gate{S} & \gate{(\omega H)^\dagger} & \gate{Z} & \qw & \qw
	}\]
	\[
	\Qcircuit @C=1em @R=1em @!R {
		& \qw & \ctrl{1} & \qw & \qw \\
		& \qw & \gate{\omega H S} & \qw & \qw
	}
	\raisebox{-1.3em}{\quad=\quad}
	\Qcircuit @C=1em @R=.7em @!R {
		& \qw & \ctrl{1} & \qw & \ctrl{1} & \qw & \ctrl{1} & \qw & \qw\\
		& \qw & \gate{Z} & \gate{\omega H} & \gate{S} & \gate{(\omega H)^\dagger} & \gate{S} & \qw & \qw
	}\]
	Note moreover that the $2$-qubit circuits can also be represented
	over $\s{X, CX, CCX, \omega H, S}$ with a single dirty ancilla,
	which we use in the identities for $n\geq3$ below.
	\[
	\Qcircuit @C=1em @R=1.1em @!R {
		& \qw & \ctrl{3} & \qw & \qw \\
		& \vdots & & \vdots& \\
		& \qw & \ctrl{2} & \qw & \qw  \\
		& \qw & \ctrl{1} & \qw & \qw \\
		& \qw & \gate{\omega S H} & \qw & \qw
	}
	\raisebox{-4.8em}{\quad=\quad}
	\Qcircuit @C=1em @R=.7em @!R {
		& \qw & \ctrl{3} & \qw & \ctrl{3} & \qw & \ctrl{3} & \qw & \qw \\
		& \vdots & & & & & & \vdots& \\
		& \qw & \ctrl{2} & \qw & \ctrl{2} & \qw & \ctrl{2} & \qw & \qw \\
		& \qw & \qw  & \ctrl{1} & \qw & \ctrl{1} & \ctrl{1} & \qw & \qw \\
		& \qw & \gate{(\omega S H )^\dagger} & \gate{S^\dagger} & \gate{\omega S H} & \gate{S} & \gate{iZ} & \qw & \qw
	}
	\]
	
	\[
	\Qcircuit @C=1em @R=1.1em @!R {
		& \qw & \ctrl{3} & \qw & \qw \\
		& \vdots & & \vdots& \\
		& \qw & \ctrl{2} & \qw & \qw  \\
		& \qw & \ctrl{1} & \qw & \qw \\
		& \qw & \gate{\omega H S} & \qw & \qw
	}
	\raisebox{-4.8em}{\quad=\quad}
	\Qcircuit @C=1em @R=.7em @!R {
		& \qw & \qw & \ctrl{3} & \qw & \ctrl{3} & \qw & \ctrl{3} & \qw & \qw & \qw \\
		& \vdots & & & & & & & & \vdots& \\
		& \qw & \qw & \ctrl{2} & \qw & \ctrl{2} & \qw & \ctrl{2} & \qw & \qw & \qw \\
		& \qw & \qw & \qw  & \ctrl{1} & \qw & \ctrl{1} & \ctrl{1} & \qw & \qw & \qw \\
		& \qw & \gate{\omega H} & \gate{(\omega S H )^\dagger} & \gate{S^\dagger} & \gate{\omega S H} & \gate{S} & \gate{iZ} & \gate{(\omega H)^\dagger} & \qw & \qw
	}
	\]
\end{proof}

\end{document}